\tikzstyle{block} = [rectangle, draw, 
\tikzstyle{line} = [draw, -latex]
\tikzset{meter/.append style={draw, inner sep=10, rectangle, font=\vphantom{A}, minimum width=30, scale=.7, path picture={\draw[black] ([shift={(.1,.3)}]path picture bounding box.south west) to[bend left=50] ([shift={(-.1,.3)}]path picture bounding box.south east);\draw[black,-{Latex[scale=.5]}] ([shift={(0,.1)}]path picture bounding box.south) -- ([shift={(.3,-.1)}]path picture bounding box.north);}}}
\tikzset{snake it/.style={decorate, decoration=snake}}
\def\ppsmatrix#1{\begin{psmallmatrix}#1\end{psmallmatrix}}
\newtheorem{lemma}{Lemma}
\newtheorem{theorem}{Theorem}
\newtheorem{remark}{Remark}
\newtheorem{corollary}{Corollary}
\newtheorem{definition}{Definition}
\newcommand\rk{\normalfont{\mbox{rk}}}
\newcommand{\tr}{\normalfont{\mbox{tr}}}
\newcommand{\dec}{\mbox{\footnotesize dec}}
\newcommand{\fullent}{\normalfont{\mbox{\tiny fullent}}}
\newcommand{\unent}{\normalfont{\mbox{\tiny unent}}}
\newcommand{\biject}{\leftrightarrow}
\title{The Capacity of Classical Summation over a Quantum MAC with Arbitrarily Distributed Inputs and  Entanglements}
\author{Yuhang Yao, Syed A. Jafar\\
{\small Center for Pervasive Communications and Computing (CPCC)}\\
{\small University of California Irvine, Irvine, CA 92697}\\
{\small \it Email: \{yuhangy5, syed\}@uci.edu}
\thanks{Presented in part at IEEE GLOBECOM 2023 \cite{Yao_Jafar_Sum_MAC_GC}.}
}
\date{}      
\begin{document}
\maketitle

\begin{abstract}
The $\Sigma$-QMAC problem is introduced, involving $S$ servers, $K$  classical ($\mathbb{F}_d$) data streams, and $T$ independent quantum systems. Data stream ${\sf W}_k, k\in[K]$ is replicated at a subset of servers $\mathcal{W}(k)\subset[S]$, and quantum system $\mathcal{Q}_t, t\in[T]$ is distributed among a subset of servers $\mathcal{E}(t)\subset[S]$ such that  Server $s\in\mathcal{E}(t)$ receives  subsystem $\mathcal{Q}_{t,s}$ of $\mathcal{Q}_t=(\mathcal{Q}_{t,s})_{s\in\mathcal{E}(t)}$. Servers   manipulate their quantum subsystems according to their  data and send the subsystems to a receiver. The  total download cost is $\sum_{t\in[T]}\sum_{s\in\mathcal{E}(t)}\log_d|\mathcal{Q}_{t,s}|$ qudits, where $|\mathcal{Q}|$ is the dimension of $\mathcal{Q}$. The states and measurements of $(\mathcal{Q}_t)_{t\in[T]}$ are required to be separable across $t\in[T]$ throughout, but for each $t\in[T]$, the  \emph{subsystems} of $\mathcal{Q}_{t}$ can be prepared initially in an arbitrary (independent of data) entangled state, manipulated arbitrarily by the respective servers, and measured jointly by the receiver. From the  measurements, the receiver must  recover the sum of all data streams. Rate  is defined as the number of dits ($\mathbb{F}_d$ symbols) of the desired sum computed per qudit of download. The capacity of $\Sigma$-QMAC, i.e., the supremum of achievable rates is characterized for arbitrary data and entanglement distributions $\mathcal{W}, \mathcal{E}$. 
For example, in the symmetric setting with $K=\scalebox{0.8}{$\binom{S}{\alpha}$}$ data-streams, each replicated among a distinct $\alpha$-subset of $[S]$, and $T=\scalebox{0.8}{$\binom{S}{\beta}$}$  quantum systems, each distributed among a distinct $\beta$-subset of $[S]$, the capacity of the $\Sigma$-QMAC is $\frac{1}{\beta T}\sum_{\gamma = (\alpha+\beta-S)^+}^{\min(\alpha,\beta)} \min(\beta,2\gamma) \cdot \binom{\alpha}{\gamma}\cdot \binom{S-\alpha}{\beta-\gamma}$.
Coding based on the $N$-sum box abstraction is optimal in every case. Notably, for every $S\neq 3$ there exists an instance of the $\Sigma$-QMAC where  $S$-party entanglement is necessary to achieve the fully entangled capacity.
\end{abstract}

\section{Introduction}
Entanglement is arguably the most counter-intuitive aspect of quantum systems. Quantum entanglement enables  correlations that are classically impossible. These correlations can be exploited for improvements in the efficiency of  quantum communication and computation networks. Understanding the fundamental limits of quantum entanglement phenomena is therefore essential to gauge the potential of the much-anticipated quantum internet of the future \cite{Caleffi_tutorial1, caleffi_tutorial2, caleffi_survey}. However, even quantifying the amount of entanglement is highly non-trivial, especially when the entanglement is distributed among \emph{many} parties. Unlike bipartite entanglement which is relatively well understood --- a fundamental  understanding of multi-party\footnote{We refer to entanglement among $N$-parties as bipartite if $N=2$, and multiparty if $N>2$.} entanglement remains elusive. Numerous fundamentally distinct measures have been explored thus far\cite{walter2016multipartite}, including the Schmidt measure \cite{eisert2001schmidt, Schmidtmeasure}, the trace-squared or the entropy of the reduced density matrix \cite{Bennett_Ent}, the tangle \cite{Tangle}, the entanglement of formation \cite{EOF}, majorization-based entanglement monotones \cite{Vidal_monotone,Nielsen_majorization}, geometric measures \cite{DegEnt, Barnum2001, GME}, and specialized notions such as absolute maximal or genuine multiparty entanglement \cite{helwig2012absolute,huber2013structure}. 

Given the lack of a universal measure, an alternative is to quantify multi-party entanglement indirectly in terms of its utility as a resource \cite{DSC4}, e.g., by the gains in communication efficiency\footnote{Indeed, it is conjectured that communication efficiency may provide a concise information-theoretic axiomatic basis for characterizing quantum mechanics\cite{Shutty_Wootters_Hayden}.}  that are made possible by quantum  entanglement for accomplishing various classical multiparty computation tasks. 
Much effort has traditionally been aimed at finding tasks that  gain a lot from quantum entanglement in terms of communication complexity measures \cite{Yao_QC, QSM}.
A limitation of this approach is that the tasks thus identified may  turn out to be   artificial. For example, the celebrated Deutsch-Jozsa algorithm \cite{deutsch1992rapid} translates into a  computation task in a $2$ user quantum SMP setting \cite{brassard1999cost, PSQM} where entanglement shows an exponential advantage. However, such a computation task is seldom encountered in practice.

A complementary approach that we explore in this work, is to focus instead on some elementary computation tasks that are quite natural, such as linear computations, and  explore the gains in efficiency due to quantum entanglements for such tasks. Specifically, we study an elemental setting, called the $\Sigma$-QMAC, to be described shortly, where the computation task is simply a finite-field summation task, over an ideal (noise-free) quantum multiple access network  with arbitrarily distributed inputs and entanglements. Since the gains may be modest, a finer accounting of efficiency, such as the exact information theoretic capacity, is needed.

The pursuit of exact capacity faces numerous challenges -- 
1) sharp capacity characterizations are quite rare even for classical communication networks when many parties are involved,  2) computation networks tend to be even less tractable than communication networks for information theoretic analysis, and 3) the quantum setting  further compounds the difficulty of any such endeavor. Indeed the foremost challenge in pursuing this direction is to identify formulations that are both insightful from a multiparty quantum entanglement perspective and also information theoretically tractable. Our choice of the $\Sigma$-QMAC setting draws inspiration from the following observations --- 1) the many-to-one (multiple access (MAC)) setting is among the most tractable in network information theory, 2)  the capacity of finite field linear computations in noiseless settings, while still open in general, has seen much progress in the network coding literature,  in particular elegant solutions have been found  for the case of scalar linear computations (i.e., sum computations)  \cite{Rai_Dey, Ramamoorthy_Langberg, Nazer_Gastpar_Compute, Appuswamy1, Appuswamy3}, and 3) the  stabilizer formalism \cite{Gottesman97,Calderbank_Shor_CSS_code,Steane_CSS_code} from quantum error correction lends itself nicely to linear black-box abstractions for the quantum multiple access channel (QMAC)\cite{Allaix_N_sum_box}, and  has been instrumental to recent advances in quantum private information retrieval (QPIR) \cite{song_multiple_server_PIR, song_colluding_PIR, QMDSTPIR, song_all_but_one_collusion, Allaix_N_sum_box,aytekin2023quantum} that implicitly involve optimal specialized  linear computations. 
The $\Sigma$-QMAC  setting is  described next.

\begin{figure}[h]
\begin{center}
\begin{tikzpicture}
\coordinate (O) at (0,0){};

\node [draw, cylinder, aspect=0.2,shape border rotate=90,fill=black!10, text=black, inner sep =0cm, minimum height=1.1cm, minimum width=1.5cm, above right = 0cm and -3cm of O, align=center] (AB)  {\footnotesize ~\\[-0.25cm]\scalebox{0.7}{(Server $\mathcal{S}_1$)}\\[0cm]\footnotesize ${\sf A},{\sf B}$};
\node [draw, cylinder, aspect=0.2,shape border rotate=90,fill=black!10, text=black, inner sep =0cm, minimum height=1.1cm, minimum width=1.5cm, above right = 0cm and -1cm of O, align=center] (AC)  {\footnotesize ~\\[-0.25cm]\scalebox{0.7}{(Server $\mathcal{S}_2$)}\\[0cm]\footnotesize ${\sf A},{\sf C}$};
\node [draw, cylinder, aspect=0.2,shape border rotate=90,fill=black!10, text=black, inner sep =0cm, minimum height=1.1cm, minimum width=1.5cm, above right = 0cm and 1cm of O, align=center] (BC)  {\footnotesize ~\\[-0.25cm]\scalebox{0.7}{(Server $\mathcal{S}_3$)}\\[0cm]\footnotesize ${\sf B},{\sf C}$};
\node [draw, cylinder, aspect=0.2,shape border rotate=90,fill=black!10, text=black, inner sep =0cm, minimum height=1.1cm, minimum width=1.5cm, above right = 0cm and 3cm of O, align=center] (D)  {\footnotesize ~\\[-0.25cm]\scalebox{0.7}{(Server $\mathcal{S}_4$)}\\[0cm]\footnotesize $~{\sf D}~$};

\node[alice, draw=black, text=black, minimum size=0.8cm, inner sep=0, above right = -3cm and 0cm of O] (U) at (0,0) {\footnotesize Alice};

\node[rectangle, fill=pink!10, rounded corners=15, minimum width = 2.7cm, minimum height=1.7cm, above right = 1.7cm and -3.3cm] (Q1) {};
\node[align=center] at ($(Q1.north)+(0,-0.4)$) {\tiny \color{magenta}Quantum system $\mathcal{Q}_1$ \\[-0.2cm] \tiny \color{magenta} Initial state: $\rho_1$};

\node[rectangle, fill=blue!5, rounded corners=15, minimum width = 3.6cm, minimum height=1.7cm, above right = 1.7cm and 0.3cm] (Q2) {};
\node[align=center] at ($(Q2.north)+(0,-0.4)$) {\tiny \color{violet} Quantum system $\mathcal{Q}_2$ \\[-0.2cm] \tiny \color{violet} Initial state: $\rho_2$};

\node [draw, circle, dotted, aspect=0.2, fill=pink!30, text=blue, inner sep =0cm, minimum width=0.8cm, align=center, above right = 2cm and -2.8cm of O] (Q11)  {\color{magenta} $\mathcal{Q}_{1,1}$};
\node [draw, circle, dotted, aspect=0.2, fill=pink!30, text=blue, inner sep =0cm, minimum width=0.8cm, align=center, above right = 2cm and -1.6cm of O] (Q12)  {\color{magenta} $\mathcal{Q}_{1,2}$};

\node [draw, circle, dotted, aspect=0.2, fill=blue!15, text=blue, inner sep =0cm, minimum width=0.8cm, align=center, above right = 2cm and 0.6cm of O] (Q22)  {\color{violet}$\mathcal{Q}_{2,2}$};
\node [draw, circle, dotted, aspect=0.2, fill=blue!15, text=blue, inner sep =0cm, minimum width=0.8cm, align=center, above right = 2cm and 1.8cm of O] (Q23)  {\color{violet}$\mathcal{Q}_{2,3}$};
\node [draw, circle, dotted, aspect=0.2, fill=blue!15, text=blue, inner sep =0cm, minimum width=0.8cm, align=center, above right = 2cm and 3cm of O] (Q24)  {\color{violet}$\mathcal{Q}_{2,4}$};

\draw[-, magenta, snake it] (Q11.south) to  (AB.north) ;
\draw[-, magenta, snake it] (Q12.south) to  ($(AC.north)+(-0.4,0)$) ;
\draw[-, blue, snake it] (Q22.south) to  ($(AC.north)+(+0.4,0)$) ;
\draw[-, blue, snake it] (Q23.south) to  (BC.north) ;
\draw[-, blue, snake it] (Q24.south) to  (D.north) ;

\node (mx1) [meter, above right = -1.8cm and -1.8cm] {};
\node (mx2) [meter, above right = -1.8cm and 1.5cm] {};

\draw[-, magenta, snake it] (AB.south) to  ($(mx1.north)+(-0.2,0)$) ;
\draw[-, magenta, snake it] ($(AC.south)+(-0.4,0)$) to  ($(mx1.north)+(+0.2,0)$) ;
\draw[-, blue, snake it] ($(AC.south)+(+0.4,0)$) to  ($(mx2.north)+(-0.2,0)$) ;
\draw[-, blue, snake it] (BC.south) to  ($(mx2.north)+(0,0)$) ;
\draw[-, blue, snake it] (D.south) to  ($(mx2.north)+(0.2,0)$) ;

\draw[-latex] (mx1)--(U) node[pos=0.2,right=0.1] {\footnotesize $Y_1$};
\draw[-latex] (mx2)--(U) node[pos=0.2,left=0.1] {\footnotesize $Y_2$};

\node (Ans) [right= 0.75cm of U]{\footnotesize $({\sf A}_\ell+{\sf  B}_\ell+{\sf C}_\ell+{\sf D}_\ell)_{\ell\in[L]}$};
\draw[-latex] (U)--(Ans);

\end{tikzpicture}
\end{center}
\vspace*{-3mm}\caption{A $\Sigma$-QMAC  setting, with $K=4$ data streams $(\sf{A}, \sf{B}, \sf{C}, \sf{D})$, $S=4$ servers $(\mathcal{S}_1,\mathcal{S}_2,\mathcal{S}_3,\mathcal{S}_4)$, and $T=2$ quantum systems $(\mathcal{Q}_1,\mathcal{Q}_2)$. The data replication map $\mathcal{W}=(\{1,2\},\{1,3\},\{2,3\},\{4\})$ specifies that data stream $\sf{A}$ is replicated at Servers $\mathcal{S}_1,\mathcal{S}_2$;  $\sf{B}$ at $\mathcal{S}_1,\mathcal{S}_3$; $\sf{C}$ at $\mathcal{S}_2,\mathcal{S}_3$; and data stream $\sf{D}$ is available only at $\mathcal{S}_4$. The entanglement distribution map $\mathcal{E}=(\{1,2\},\{2,3,4\})$ is such that entangled subsystems of $\mathcal{Q}_1$ are distributed to Servers $\mathcal{S}_1,\mathcal{S}_2$, and entangled subsystems of $\mathcal{Q}_2$ are distributed to Servers $\mathcal{S}_2$, $\mathcal{S}_3$ and $\mathcal{S}_4$.}\label{fig:abbccad}
\end{figure}

In a $\Sigma$-QMAC setting (see Section \ref{sec:probform} for a formal definition), a user, say Alice, wants to compute the \emph{sum} of $K$ classical data streams (comprised of $d$-ary symbols (\emph{dits}) from a finite field $\mathbb{F}_d$) that are replicated across various subsets of $S$ servers (cf. graph-based replication \cite{Raviv_Tamo_Yaakobi,Jia_Jafar_GXSTPIR,Fei_Chen_Wang_Jafar}) according to an arbitrary data replication map $(\mathcal{W})$. Independent of the data streams, $T$ quantum systems are prepared, and  subsystems of each quantum system are  distributed to various subsets of servers (called cliques) according to an arbitrary entanglement distribution map ($\mathcal{E}$).  The states and the eventual measurements of different quantum systems must remain separable throughout, but the \emph{subsystems} of each quantum system are in general entangled even as they are distributed to different servers within that clique, thus allowing such a clique of servers to exploit their quantum entanglement. The servers locally encode their classical data into their quantum subsystems, maintaining separation among different quantum systems associated with different cliques, and send them to Alice, who does separate measurements on each quantum system. From the measurement outcome, Alice must be able to recover the desired sum. The computation rate is the number of dits of the desired sum computed by Alice per qudit of download.\footnote{In contrast to a \emph{dit}, which is a classical $d$-ary symbol, a \emph{qudit}, short for a quantum-dit, represents a $d$-dimensional quantum system. For $d=2$ these are the common `bit' and `qubit,' respectively.} An example is illustrated in Fig. \ref{fig:abbccad}. If Alice is able to compute $L$ \emph{dits} of the desired sum (${\sf A+B+C+D}$ in Fig. \ref{fig:abbccad}) with total communication cost  $N$ \emph{qudits}  then the rate achieved is $L/N$ (dits/qudit).  The capacity $C$ is the supremum of achievable rates. 

In order to quantify the utility of multiparty quantum entanglements, the key figure of merit in the $\Sigma$-QMAC is the multiplicative gain in capacity that is enabled by entanglement, relative to the corresponding unentangled setting.  In the literature such a gain is known as \emph{distributed superdense coding gain} \cite{Superdense, DSC1,DSC2,DSC3,DSC4, QuantumAdvantage, song_multiple_server_PIR, song_colluding_PIR, QMDSTPIR, song_all_but_one_collusion,Allaix_N_sum_box}. 
Quantifying the utility of multiparty entanglements by characterizing the distributed superdense coding (DSC) gain in the $\Sigma$-QMAC is the immediate focus of this work. The broader motivation is that success in the $\Sigma$-QMAC setting may pave the way for future studies of \emph{general} linear computation tasks that shed further light on the fundamental limits of the utility of multiparty quantum entanglement.

\section{Significance of the $\Sigma$-QMAC  and  relationship to prior works}
\subsection{Relationship to prior works}
{\bf Connection to Simultaneous Message Passing (SMP):} The underlying quantum multiple access  (QMAC) communication model in the $\Sigma$-QMAC is  similar to  what is known in the literature as \emph{simultaneous message passing} (SMP) model with quantum messages \cite{buhrman1998quantum, QSM, PSQM} . A noteworthy distinction is that the SMP model is typically studied from a communication complexity perspective which does not allow batch processing, whereas since our perspective is information theoretic, batch processing is not only allowed, it is  essential to our problem formulation. For brevity, and to underscore the information theoretic perspective, we say QMAC when we mean an SMP model with quantum messages and batch processing.

{\noindent{\bf Connection to Quantum Metrology:} The  $\Sigma$-QMAC is conceptually related to various physically motivated and commonly studied models in the active area of distributed quantum sensing and quantum metrology. A general theme in this area is how the entanglement across quantum sensors allows higher precision (approaching the Heisenberg limit) in the computation of a function of distributed classical parameters, than what is  possible without entanglement (the standard Quantum limit) \cite{Lloyd}. For example, note the similarity of Fig. \ref{fig:abbccad} and the quantum metrology protocol illustrated in \cite{Uman}. In the quantum metrology protocol, entangled quantum systems are distributed to sensors (which take the role of servers in Fig. \ref{fig:abbccad}), classical parameters are encoded into them through quantum transducers, and the quantum systems are sent to a central receiver where the desired function is estimated by a joint measurement.  As noted in \cite{Childs1999}, the gains in capacity due to quantum entanglement in an elemental QMAC can be harnessed to yield gains in measurement precision. Entanglement is particularly useful if the goal is to estimate a global function of distributed parameters rather than a separate estimation of each parameter \cite{GlobalMetrology, DistributedQS}. The function of parameters to be computed could be the average value, essentially a sum of parameters as in \cite{QRF}, similar to the $\Sigma$-QMAC. Remarkably,  establishing the utility of multipartite entanglement by explicitly accounting for the partitioning of entangled systems as illustrated in Fig. \ref{fig:abbccad}, is also of interest in quantum metrology \cite{Young}. While our current focus is limited to finite fields, a successful capacity characterization of the $\Sigma$-QMAC could perhaps be a stepping stone to the much more challenging problem of determining the fundamental limits of stochastic computations over continuous random variables as is often needed in quantum metrology.}

\noindent{\bf Connection to Quantum Private Information Retrieval (QPIR):} The $\Sigma$-QMAC model over finite fields is  essential in the  context of QPIR \cite{song_multiple_server_PIR, song_colluding_PIR, QMDSTPIR, song_all_but_one_collusion}. In QPIR, classical messages (possibly in coded forms) are stored at servers  that  have entangled quantum systems distributed among them. Based on a query submitted by a user, each server encodes its classical answer into its quantum system, sends it to the user through an elemental QMAC, and the user retrieves the desired message by a joint measurement. As in the $\Sigma$-QMAC, the desired computation is typically a linear function of the servers' classically coded inputs. 

\noindent{\bf Connection to $N$-sum Box:} The $\Sigma$-QMAC is also immediately related to the $N$-sum box  \cite{Allaix_N_sum_box}, which is a black box abstraction of stabilizer based linear computations. The DSC gains in many previously studied settings, including recent applications in quantum private information retrieval (QPIR), can be realized through the $N$-sum box abstraction \cite{DSC1,DSC2,DSC3,DSC4, QuantumAdvantage, song_multiple_server_PIR, song_colluding_PIR, QMDSTPIR, song_all_but_one_collusion}. Nonetheless, it is important to note that the $\Sigma$-QMAC capacity formulation does not limit the coding schemes to the $N$-sum box, or to stabilizer based constructions in general. Indeed, the $\Sigma$-QMAC allows arbitrary entangled states to be shared among the servers, and stabilizer states are a very small fraction of those states. Similarly, the $\Sigma$-QMAC allows servers to perform arbitrary unitary transformations, whereas the $N$-sum box is limited to  $X$ and $Z$ gates, which represent a very small fraction of all possible unitaries. Whether the stabilizer based $N$-sum box suffices to achieve the capacity of the $\Sigma$-QMAC in all cases is another fundamental question to be answered in this work.

\noindent{\bf Connection to Network Function Computation:} While different from the traditional multiple access channel, the capacity formulation for the $\Sigma$-QMAC is quite standard in the broad area of  \emph{classical network function computation} \cite{Kowshik_Kumar, NetworkFC, Huang_Tan_Yang_Guang}, from where it is inherited. In network function computation,  capacity is defined in terms of the communication cost incurred per instance of a desired function computation in a distributed setting. The network function computation paradigm finds numerous applications in private information retrieval, coded computing, distributed storage repair and learning \cite{PIR_tutorial}. In particular the capacity of sum-networks, where the goal is to compute the sum of distributed inputs as in the $\Sigma$-QMAC, has been a topic of much interest \cite{Rai_Dey, Ramamoorthy_Langberg, Nazer_Gastpar_Compute, Appuswamy1, Appuswamy3}. As a quantum extension of an elemental classical sum-network setting, the $\Sigma$-QMAC brings together classical and quantum information theoretic perspectives. The interplay of quantum theoretic ideas like quantum entanglement, superdense coding and the stabilizer formalism with classical information theoretic ideas such as network coding and interference alignment offers a promising arena from which new insights may emerge.

\subsection{Significance of Key Assumptions}
In the $\Sigma$-QMAC the inputs are prior-free, the desired output is exact, the channel is idealized, and the number of servers can be arbitrarily large. The significance of these assumptions is explained as follows. The prior-free model is desirable for computation problems, because unlike conventional communication problems where independent messages can be separately compressed to their entropy limit to yield uniform data, for computation problems the compression of inputs cannot be taken for granted as it changes the nature of the computation. The prior-free model is also quite robust as the results are not limited to one data distribution or another. Exact computation goes hand-in-hand with the assumption of prior-free inputs, because probabilistic bounds on errors are less meaningful when no particular distribution is assumed on the data. The $\Sigma$-QMAC setting is chosen to be elemental, i.e., the channels through which the quantum systems are conveyed from the servers to Alice are elementary — they are simply assumed to be noise-free.  The assumption of an idealized channel is important to ensure that the capacity of the $\Sigma$-QMAC reflects only the fundamental limitations of the multipartite quantum-entanglements for the chosen task, and not other artifacts that arise out of channel imperfections and not directly from entanglement per se. Furthermore, since we wish to explore multi-partite entanglements among a large number of parties it is important that we explore  $\Sigma$-QMAC settings with arbitrarily large number of entangled servers. These considerations highlight the essential distinctions that separate our work from other interesting research directions pursued, for example  in \cite{sohail2022unified}, \cite{sohail2022computing} and \cite{hayashi2021computation}, that explore $\epsilon$-error sum computation over a QMAC with correlated data streams and noisy quantum channels, albeit with only $2$ servers (transmitters).

In the $\Sigma$-QMAC we allow  arbitrary entanglements across the subsystems of each quantum system, but quite importantly, we do \emph{not} allow entanglements across systems. We require that strict separation between quantum systems be preserved throughout. The significance of this assumption is that it allows us to determine whether multiparty entanglement is in fact necessary to achieve the  DSC gain for a given $\Sigma$-QMAC setting. For example, suppose we wish to determine the DSC gain achievable in a $\Sigma$-QMAC with only bipartite entanglements. To this end we allow every pair of servers to share unlimited amount of bipartite entanglements, but no multiparty entanglements are initially provided to the servers. Without the separate processing constraint it may still be possible for the $\Sigma$-QMAC capacity to benefit from multiparty entanglement, e.g., due to implicit or explicit fusion \cite{BellBellGHZ} among bipartite quantum systems through joint processing and measurements. The DSC gain thus achieved could not be categorically attributed to only bipartite entanglement. With the separate processing constraint on the other hand, the resulting DSC gain can \emph{only} be attributed to bipartite entanglements. In fact, it is crucial for our motivation of understanding fundamental limits of multiparty entanglements that we are able to convincingly determine whether the DSC gains necessarily require  multiparty entanglement, and furthermore to be able to  distinguish between DSC gains possible with $\beta$-party entanglement from those possible with $\beta'$-party entanglements, for $\beta\neq \beta'$.

In the $\Sigma$-QMAC no prior entanglement is allowed between the servers and the receiver Alice. Recall that entanglement between transmitter and receiver is required in the original setting where superdense coding is introduced \cite{Superdense}. This essential distinction also separates our work from prior efforts to classify entanglements according to their utility for DSC gains in \cite{DSC4}, where the tasks chosen were simple communication tasks and prior entanglements between transmitters and receivers were allowed. In the $\Sigma$-QMAC setting since there is no  prior entanglement between the servers and Alice, it follows from the Holevo bound that no DSC gain is possible for the direct communication task where each server wants to send an independent message to Alice. Thus, computation is essential to our setting which provides a richer space to explore multipartite entanglements.

In the $\Sigma$-QMAC the data streams may be replicated across multiple servers. The significance of this assumption is explained as follows. Without data-replication, the summation $(\Sigma)$  is a \emph{total} function of the computing parties' inputs, but with replication it is only a \emph{partial} function. 
Arbitrarily large (e.g., exponential in the size of inputs)  gains have been established for the exact computation of certain \emph{partial} functions \cite{QFingerprinting, QSM} in the SMP model, i.e., without batch processing. 
The QMAC model, with batch processing, also allows arbitrarily large DSC gains for certain partial functions (see Appendix \ref{app:largeDSC}). On the other hand, the largest observed DSC gain  for exact computation of \emph{total} functions thus far is only $2$ in the SMP setting \cite{PSQM} with or without batch processing \cite{PSQM,DSC4, song_multiple_server_PIR, Allaix_N_sum_box}. To the best of our knowledge, DSC gains larger than $2$ for exact computations of total functions, while  unlikely, have not been formally  ruled out. Specifically for our purpose, it is interesting that both directions are open for exact linear computations, i.e., there are no known instances of linear computations over the QMAC that achieve DSC gain larger than $2$, nor is it known that gains larger than $2$ are impossible in such settings.

\section{Overview of Contribution}\label{sec:contribution}
The ability to  precisely quantify \emph{useful} multiparty entanglements via  DSC gains over a QMAC boils down to following three requirements that must be simultaneously satisfied. 1) the desired computation must represent a natural task, 2) the exact capacity must be tractable, and 3) the capacity must in general require genuine multiparty entanglements, e.g., bipartite entanglements must not be sufficient to achieve the capacity in general. We show that the $\Sigma$-QMAC problem formulation satisfies all $3$ criteria. Since sum-computation with distributed data is obviously a natural task, what remains is to show that the capacity is tractable and sensitive to genuine multiparty entanglement.

The tractability of capacity is established by an explicit capacity characterization in Theorem \ref{thm:main}. The key ideas that make the capacity tractable include the capacity of sum-networks from network coding \cite{Rai_Dey, Ramamoorthy_Langberg, Nazer_Gastpar_Compute, Appuswamy1, Appuswamy3}, the $N$-sum box abstraction  \cite{Allaix_N_sum_box}, dual GRS codes \cite{macwilliams1977theory}, stabilizer based CSS code constructions \cite{Calderbank_Shor_CSS_code, Steane_CSS_code}, and quantum-information theoretic converse arguments \cite{holevo1973bounds, massar2015hyperdense,pawlowski2009information}. The capacity depends on both the data replication and entanglement distribution maps. For example, Corollary \ref{cor:symmetric} shows that in the symmetric setting with $K=\scalebox{0.8}{$\binom{S}{\alpha}$}$ data-streams, each replicated among a distinct $\alpha$-subset\footnote{An $\alpha$-subset of $[S]$ is a subset of $\{1,2,\cdots,S\}$ that has cardinality $\alpha$.} of $[S]$, and $T=\scalebox{0.8}{$\binom{S}{\beta}$}$  quantum systems, each distributed among a distinct $\beta$-subset of $[S]$, the capacity of the $\Sigma$-QMAC is $\frac{1}{\beta T}\sum_{\gamma = (\alpha+\beta-S)^+}^{\min(\alpha,\beta)} \min(\beta,2\gamma) \cdot \binom{\alpha}{\gamma}\cdot \binom{S-\alpha}{\beta-\gamma}$. 

For the motivating example in Fig. \ref{fig:abbccad}, let us explicitly state the capacity results for various entanglement distribution maps. It will be useful to adopt a more intuitive notation by denoting the servers $\mathcal{S}_1,\mathcal{S}_2,\mathcal{S}_3,\mathcal{S}_4$ as $\mathcal{S}_{\sf ab}, \mathcal{S}_{\sf ac}, \mathcal{S}_{\sf bc},\mathcal{S}_{\sf d}$, respectively. The entanglement as shown in Fig. \ref{fig:abbccad} can then be described as $(\{\mathcal{S}_{\sf ab},\mathcal{S}_{\sf ac}\},\{\mathcal{S}_{\sf ac},\mathcal{S}_{\sf bc}, \mathcal{S}_{\sf d}\})$, reflecting the fact that there are $2$ quantum systems, $\mathcal{Q}_1,\mathcal{Q}_2$, such that entangled subsystems of $\mathcal{Q}_1$ are distributed to the servers $\mathcal{S}_{\sf ab},\mathcal{S}_{\sf ac}$, and entangled subsystems of $\mathcal{Q}_2$ are distributed to the servers $\mathcal{S}_{\sf ac},\mathcal{S}_{\sf bc},\mathcal{S}_{\sf d}$. The capacity for this, and various other entanglement distribution maps, is listed in Table \ref{tab:abacbcd}. 

\begin{table}[H]
\center
\begin{tabular}{c|c}
    Entanglement distribution map & Capacity \\\hline
    $(\{\mathcal{S}_{\sf ab},\mathcal{S}_{\sf ac},\mathcal{S}_{\sf bc},\mathcal{S}_{\sf d}\})$ & $4/5$\\\hline
    $(\{  \mathcal{S}_{\sf ab},\mathcal{S}_{\sf ac},\mathcal{S}_{\sf bc} \}, \{ \mathcal{S}_{\sf ab},\mathcal{S}_{\sf ac},\mathcal{S}_{\sf d} \}, \{ \mathcal{S}_{\sf ab},\mathcal{S}_{\sf bc},\mathcal{S}_{\sf d} \}, \{ \mathcal{S}_{\sf ac},\mathcal{S}_{\sf bc},\mathcal{S}_{\sf d} \})$& $3/4$\\\hline
    $(\{\mathcal{S}_{\sf ab},\mathcal{S}_{\sf ac}\}, \{\mathcal{S}_{\sf ab},\mathcal{S}_{\sf d}\}, \{\mathcal{S}_{\sf ac},\mathcal{S}_{\sf d}\}, \{\mathcal{S}_{\sf bc},\mathcal{S}_{\sf d}\})$& $3/4$\\\hline
    $(\{\mathcal{S}_{\sf ab},\mathcal{S}_{\sf ac}\},\{\mathcal{S}_{\sf ac},\mathcal{S}_{\sf bc}, \mathcal{S}_{\sf d}\})$ & $2/3$ \\\hline
    $(\{\mathcal{S}_{\sf ab},\mathcal{S}_{\sf ac}\},\{\mathcal{S}_{\sf ac},\mathcal{S}_{\sf bc}\},\{\mathcal{S}_{\sf ac},\mathcal{S}_{\sf d}\},\{\mathcal{S}_{\sf bc},\mathcal{S}_{\sf d}\})$ & $2/3$ \\\hline
    $(\{\mathcal{S}_{\sf ab}\},\{\mathcal{S}_{\sf ac},\mathcal{S}_{\sf bc},\mathcal{S}_{\sf d}\})$ & $2/3$\\\hline
	$(\{\mathcal{S}_{\sf ab}\},\{\mathcal{S}_{\sf ac},\mathcal{S}_{\sf bc}\},\{\mathcal{S}_{\sf ac},\mathcal{S}_{\sf d}\},\{\mathcal{S}_{\sf bc},\mathcal{S}_{\sf d}\})$ & $2/3$\\\hline
    $(\{\mathcal{S}_{\sf ab},\mathcal{S}_{\sf ac},\mathcal{S}_{\sf bc}\},\{\mathcal{S}_{\sf d}\})$ & $1/2$ \\\hline
    $(\{\mathcal{S}_{\sf ab},\mathcal{S}_{\sf ac}\},\{\mathcal{S}_{\sf ab},\mathcal{S}_{\sf bc}\},\{\mathcal{S}_{\sf ac},\mathcal{S}_{\sf bc}\},\{\mathcal{S}_{\sf d}\})$ & $1/2$ \\\hline
    $(\{\mathcal{S}_{\sf ab},\mathcal{S}_{\sf ac}\},\{\mathcal{S}_{\sf bc},\mathcal{S}_{\sf d}\})$ & $1/2$ \\\hline
       $(\{\mathcal{S}_{\sf ab}\},\{\mathcal{S}_{\sf ac}\},\{\mathcal{S}_{\sf bc}\},\{\mathcal{S}_{\sf d}\})$ & $2/5$\\\hline
\end{tabular}
\vspace{0.2in}
\caption{Capacity of the $\Sigma$-QMAC in Fig. \ref{fig:abbccad} for various entanglement distribution maps} \label{tab:abacbcd}
\end{table}

Note that the last row of the table corresponds to the \emph{unentangled} case, i.e., with no entanglements allowed between servers. The unentangled capacity of the $\Sigma$-QMAC for a given data replication map $\mathcal{W}$ is the same as the classical capacity for the corresponding setting. In this case, the unentangled capacity is $2/5$ computations/qudit, meaning that without quantum entanglement the fundamental limit dictates that each instance of the desired sum $\sf{A+B+C+D}$ requires $5/2$ qudits ($5/2$ dits in the classical case) to be sent to Alice. The first row represents the opposite extreme, the fully entangled case that allows all $4$ servers to be entangled, and we note that the capacity in this case is $4/5$ computations/qudit.  For each specified entanglement distribution map the DSC gain is the ratio of the capacity for that case to the unentangled capacity. The DSC gain for the fully entangled setting is called the maximal distributed superdense coding gain. In this case, the maximal DSC gain is $2$.

In terms of the main motivation of quantifying useful genuine multiparty entanglements via DSC gains, the capacity analysis allows us to draw various insights. Some of these are highlighted below along with pointers to relevant results in the paper.
\begin{enumerate}
\item The maximal DSC gain of $\Sigma$-QMAC\footnote{The possibility of DSC gains larger than $2$ for \emph{vector} linear computations over the QMAC remains open.} is $2$. The maximal DSC gain of $2$ is  achievable  in the fully entangled $\Sigma$-QMAC if and only if the unentangled capacity  is not more than $1/2$ (computations/qudit).  When the unentangled capacity is not less than $1/2$, the fully entangled $\Sigma$-QMAC has DSC gain exactly equal to the reciprocal of the unentangled capacity. These observations are implied by Corollary \ref{cor:cqcc} of Section \ref{sec:results}.
\item Bipartite ($2$-party) entanglement is in general insufficient to achieve the maximal DSC gain in the $\Sigma$-QMAC, even if unlimited bipartite entanglement is made available to every pair of transmitters. In other words, multiparty entanglement is \emph{necessary} in general. This  can be seen from Table \ref{tab:abacbcd} --- the  capacity is $2/3$ if only bipartite  entanglement is allowed, whereas the capacity is $4/5$ with entanglement allowed across all four servers. The necessity of multiparty entanglements is  also evident from Corollary \ref{cor:symmetric} in Section \ref{sec:results}.
\item If each data-stream is only available to a unique server, then bipartite entanglement suffices  to achieve the maximal DSC gain. This observation is based on Corollary \ref{cor:unique_data_stream} in Section \ref{sec:results}.
\item The symmetric setting in Corollary \ref{cor:symmetric}  reveals that both extremes of too much data replication  and too little data replication  require relatively little entanglement  to achieve their maximal DSC gain,  rather the intermediate regimes of data replication are the ones that require the most entanglement. See discussion in Section \ref{sec:ex_min_beta}.
\item The minimal entanglement-size $\beta$ such that $\beta$-party entanglement is necessary to achieve a desired (feasible) DSC gain value in a $\Sigma$-QMAC can be determined from the capacity characterizations. For example, from  Table \ref{tab:abacbcd} we note that  $4$-party entanglement is necessary to achieve maximal DSC gain. The best DSC gain with $3$-party entanglements is only $(3/4)/(2/5)=15/8$ which can also be achieved with $2$-party entanglements
\item $3$-party entanglement is never \emph{necessary} to achieve the capacity of the $\Sigma$-QMAC. Any $3$-party entanglement can be replaced by $2$-party entanglements with the capacity unchanged. This observation is based on Corollary \ref{cor:tripartite} of Section \ref{sec:results}. An example is presented in Section \ref{sec:ex_3_party}. 
\item For every $S\neq 3$, there is a $\Sigma$-QMAC setting with $S$ servers where  $S$-party entanglement is \emph{necessary} to achieve the maximal DSC gain. In such settings, even with unlimited $S-1$ party entanglements among all $(S-1)$-subsets of servers, the DSC gain is strictly smaller than that with $S$-party entanglement. This observation is based on Corollary \ref{cor:S_partite_necessary} of Section \ref{sec:results}. Related discussion is  provided in Section \ref{sec:S_partite_necessary}.
\item Entanglements restricted to stabilizer states, along with Pauli operations ($X$ and $Z$ gates) at the servers (i.e., coding via the $N$-sum box abstraction) suffice to achieve the capacity of the $\Sigma$-QMAC. Thus, while the capacity formulation does allow non-stabilizer states and general unitary operations, neither of those can improve the DSC gains in the $\Sigma$-QMAC. This observation emerges from the proof of achievability of Theorem \ref{thm:main}.
\end{enumerate}

\noindent {\it Notation:} $\mathbb{N}$ denotes the set of positive integers. $\mathbb{Z}^+ = \{0\}\cup \mathbb{N}$. For $n\in \mathbb{N}$, $[n]$ denotes the set $\{1,2,\cdots, n\}$. For $n_1, n_2\in \mathbb{N}$, $[n_1:n_2]$ denotes the set $\{n_1,n_1+1,\cdots,n_2\}$ if $n_1\leq n_2$ and $\emptyset$ otherwise. For a set $\mathcal{X}$, define $\mathcal{X}^n \triangleq \mathcal{X} \times \mathcal{X} \times \cdots \times \mathcal{X}$ as the $n$-fold Cartesian product. Define compact notations $A^{[n]} \triangleq (A^{(1)}, A^{(2)},\cdots, A^{(n)})$ and $A_{[n]} \triangleq (A_1, A_2, \cdots, A_n)$.    $\mathbb{F}_d$ denotes the finite field with $d=p^r$ a power of a prime. $\mathbb{R}$ denotes the set of real numbers. $\mathbb{R}_+$ denotes the set of non-negative real numbers. $\mathbb{C}$ denotes the set of complex numbers. For a field $\mathbb{F}$, $\mathbb{F}^{a\times b}$ denotes the set of $a\times b$ matrices with elements in $\mathbb{F}$. $\tr(M)$ denotes the trace of a matrix $M$. For a matrix $M$ with elements in $\mathbb{C}$, $M^\dagger$ denotes its conjugate transpose. ${\bf I}_{a}$ denotes the $a\times a$ identity matrix. ${\bf 0}_{a\times b}$ denotes the zero matrix with size $a\times b$. ${\sf Pr}(E)$ denotes the probability of an event $E$. ${\sf Pr}(E_1|E_2)$ denotes the conditional probability of $E_1$ given $E_2$. $(x)^{+} \triangleq \max(x, 0)$. For $m,n \in\mathbb{Z}^+, m\leq n$, $\binom{n}{m} \triangleq \frac{n!}{m!(n-m)!}$ denotes the binomial coefficient. For a set $\mathcal{N}$, the set of its cardinality-$m$ sub-sets is denoted as $\binom{\mathcal{N}}{m} \triangleq \{\mathcal{A} \subset \mathcal{N} \mid |\mathcal{A}| = m \}$ if $|\mathcal{N}| \geq m$. The notation $2^{\mathcal{N}}$ denotes the power set of $\mathcal{N}$. The notation $f:\mathcal{A} \mapsto \mathcal{B}$ denotes a map $f$ from  $\mathcal{A}$ to  $\mathcal{B}$. If $f$ is a bijection from $\mathcal{A}$ to $\mathcal{B}$, we write $f: \mathcal{A} \biject \mathcal{B}$ and denote the inverse of $f$ as $f^{-1}$. The dimension of a quantum system $\mathcal{Q}$ is denoted as $|\mathcal{Q}|$.

\section{Problem Formulation}\label{sec:probform}

\subsection{$\Sigma$-QMAC}\label{sec:qmacdef}
The $\Sigma$-QMAC problem is specified by a $6$-tuple $\big(\mathbb{F}_d,S,K,T, \mathcal{W},\mathcal{E}\big)$. $\mathbb{F}_d$ is a finite field of order $d$ with $d=p^r$ being a power of a prime. $S$ is the number of servers.  $K$ is the number of independent classical data-streams, denoted as ${\sf W}_1,{\sf W}_2,\cdots, {\sf W}_K$. The $k^{th}$ data stream, ${\sf{W}}_k$, is comprised of symbols ${\sf W}_k^{(\ell)} \in\mathbb{F}_d, \ell\in\mathbb{N}$. $T$ is the number of independent quantum resources, denoted as $\mathcal{Q}_1,\mathcal{Q}_2,\cdots,\mathcal{Q}_T$. The \emph{data replication map} is a mapping $\mathcal{W}:[K]\to 2^{[S]}$ that identifies $\mathcal{W}(k) \subset [S]$ as the subset of servers where ${\sf W}_k$ is available. The \emph{entanglement distribution map} is a mapping $\mathcal{E}:[T]\to 2^{[S]}$ that identifies $\mathcal{E}(t) \subset [S]$ as the subset of servers among which the quantum system $\mathcal{Q}_t$ is distributed. Such a subset of servers is referred to as a \emph{clique} in this paper. The quantum system $\mathcal{Q}_t$ is partitioned into entangled subsystems $\mathcal{Q}_{t,s}, s\in\mathcal{E}(t)$ such that Server $s$ receives the quantum subsystem $\mathcal{Q}_{t,s}$ from the quantum system $\mathcal{Q}_t=(\mathcal{Q}_{t,s},s\in\mathcal{E}(t))$. 

\subsection{Feasible Quantum Coding Schemes} \label{sec:def_schemes}
A quantum coding scheme is specified by a $6$-tuple 
\begin{align}
  \big(L, ~((\delta_{t,s})_{s\in\mathcal{E}(t)})_{t\in[T]}, ~\rho_{[T]},~ ((\Phi_{t,s})_{s\in\mathcal{E}(t)})_{t\in[T]},~ (\{M_{t,y}\}_{y\in \mathcal{Y}_t})_{t\in [T]},~ \Psi \big).
\end{align}
$L\in \mathbb{N}$ is the batch size, which is the number of sums to be computed by the coding scheme, i.e., the coding scheme allows Alice to compute ${\sf W}_\Sigma(\ell)$ for all $\ell\in[L]$. 
For $k\in [K]$, denote the first $L$ instances of the $k^{th}$ data stream as 
${\sf W}_k^{[L]} = \big( {\sf W}_k^{(1)},{\sf W}_k^{(2)},\cdots,{\sf W}_k^{(L)} \big) \in \mathbb{F}_d^{L}$, and the desired computation at Alice as 
${\sf W}_{\Sigma}^{[L]} = \big({\sf W}_{\Sigma}^{(1)}, $ ${\sf W}_{\Sigma}^{(2)},\cdots,{\sf W}_{\Sigma}^{(L)} \big) \in \mathbb{F}_d^{L}$, where ${\sf W}_{\Sigma}(\ell) \triangleq \sum_{k=1}^K{\sf W}_k(\ell), \forall \ell \in [L]$.
For $t\in[T]$ and $s\in\mathcal{E}(t)$, $\delta_{t,s}\in \mathbb{Z}^+$ specifies the dimension of the quantum subsystem $\mathcal{Q}_{t,s}$, i.e., $|\mathcal{Q}_{t,s}| = \delta_{t,s}$. 
For $t\in[T]$, say the $t^{th}$ clique is $\mathcal{E}(t) = \{s_1,s_2,\cdots,s_{|\mathcal{E}(t)|}\}$. The quantum system $\mathcal{Q}_t = \mathcal{Q}_{t,s_1}\mathcal{Q}_{t,s_2}\cdots\mathcal{Q}_{t,s_{|\mathcal{E}(t)|}}$ is prepared in the initial state $\rho_t \in \mathbb{C}^{|\mathcal{Q}_t| \times |\mathcal{Q}_t|}$.
$\mathcal{Q}_1, \mathcal{Q}_2,\cdots,\mathcal{Q}_T$ are unentangled with each other. Without loss of generality, we assume that the initial state of the composite system is a pure state, and thus it can be written as $\rho = \rho_1\otimes \rho_2 \otimes \cdots \otimes \rho_T$. 
For $t\in[T]$, Server $s\in\mathcal{E}(t)$ applies a unitary operator $U_{t,s} = \Phi_{t,s}({\sf W}_k^{[L]}, k: s\in \mathcal{W}(k))$ to $\mathcal{Q}_{t,s}$. Thus, the resulting state of $\mathcal{Q}_t$ is $\rho'_t = U_t \rho_t ~U_t^\dagger$, where $U_t \triangleq U_{t,s_1}\otimes U_{t,s_2}\otimes \cdots \otimes U_{t,s_{|\mathcal{E}(t)|}}$. 
Note that since $\mathcal{Q}_1,\mathcal{Q}_2,\cdots,\mathcal{Q}_T$ are unentangled initially, and separate operations are done for $\mathcal{Q}_1,\mathcal{Q}_2,\cdots,\mathcal{Q}_T$, they remain unentangled after the servers apply the operations. All subsystems are then sent to Alice, who performs separate quantum measurements (POVM) on each of the $T$ quantum systems. Specifically, for $t\in[T]$, the set of operators for the measurement of $\mathcal{Q}_t$ is specified as $\{M_{t,y}\}_{y\in \mathcal{Y}_t}$ by the coding scheme. The output of the measurement is denoted as $Y_t$, which is a random variable with realizations in $\mathcal{Y}_t$. 
 Finally, the function $\Psi: \mathcal{Y}_1\times \mathcal{Y}_2 \cdots\times \mathcal{Y}_T \to \mathbb{F}_d^{L \times 1}$ maps the outputs of the measurements $ Y_{[T]} = (Y_1,Y_2,\cdots,Y_T)$ to the desired computation (sum), i.e., ${\sf{W}}_\Sigma^{[L]}=\Psi(Y_1,\cdots,Y_T)$.
Any feasible coding scheme must work for all $d^{KL}$ realizations of $\big( {\sf W}_1^{[L]},{\sf W}_2^{[L]},\cdots, {\sf W}_K^{[L]} \big)$. Let $\mathfrak{C}$ denote the set of such coding schemes.

\subsection{Feasible Region and Capacity}
For the $\Sigma$-QMAC$\big(\mathbb{F}_d,S,K,T,\mathcal{W},\mathcal{E}\big)$, the  download-cost per computation (qudits/dit) tuple,
\begin{align}
	{\bm \Delta} = (\Delta_{t,s})_{t\in[T],s\in\mathcal{E}(t)} \in\mathbb{R}_+^{\Gamma}, ~~ \Gamma \triangleq \sum_{t\in[T]} |\mathcal{E}(t)|,
\end{align}
is said to be feasible, if there exists a coding scheme 
\begin{align}
	\big(L,~ ((\delta_{t,s})_{s\in\mathcal{E}(t)})_{t\in[T]},~ \rho_{[T]}, ~((\Phi_{t,s})_{s\in\mathcal{E}(t)})_{t\in[T]},~ (\{M_{t,y}\}_{y\in \mathcal{Y}_t})_{t\in [T]},~ \Psi \big) \in \mathfrak{C} \notag
\end{align}
such that 
\begin{align}
  \Delta_{t,s} \geq \frac{\log_d |\mathcal{Q}_{t,s}|}{L} = \frac{\log_d \delta_{t,s}}{L}, && \forall t\in[T], s\in \mathcal{E}(t).
\end{align}
Define $\mathcal{D}$ as the closure of the set of all feasible download-cost tuples ${\bm \Delta}$ so that any ${\bm \Delta}$ inside $\mathcal{D}$ is feasible, and any ${\bm \Delta}$ outside $\mathcal{D}$ is not feasible. 
In terms of computation rates (dits of computation/qudit of download), a rate $R$ is feasible if there exists a coding scheme in $\mathfrak{C}$
such that
\begin{align}
	R \leq \frac{L}{\sum_{t\in[T],s\in\mathcal{E}(t)}\log_d |\mathcal{Q}_{t,s}|} = \frac{L}{\sum_{t\in[T],s\in\mathcal{E}(t)}\log_d \delta_{t,s}}.
\end{align}
Define 
\begin{align}
  C \triangleq \sup_{\mathfrak{C}} R
\end{align}
as the computation capacity. Note that a capacity $C$ characterization is implied by a characterization of $\mathcal{D}$ because $C = (\min_{{\Delta}\in \mathcal{D}}\sum_{t\in[T],s\in\mathcal{E}(t)} \Delta_{t,s})^{-1}$.
Since $S,K,T$ can be inferred from $\mathcal{W},\mathcal{E}$, a $\Sigma$-QMAC problem is fully specified by the parameters $(\mathbb{F}_d,\mathcal{W},\mathcal{E})$. As our first result (Theorem \ref{thm:main}) will show, the capacity is independent of $\mathbb{F}_d$ (which reflects the merit of using qudit to measure the cost). Therefore the capacity $C(\mathcal{W},\mathcal{E})$ is only a function of $(\mathcal{W},\mathcal{E})$. While comparing capacities of problems that have the same data replication map but different entanglement distribution maps, the data replication map $\mathcal{W}$ may be occasionally omitted for brevity when it is clear from the context.

\subsection{Fully-entangled, Fully-unentangled and Fully-$\beta$-party-entangled Capacities}
Symmetric entanglement distribution maps can be especially insightful because the symmetry facilitates compact capacity descriptions that are easier to compare. To prepare for discussions of symmetric entanglements,  let us define the \emph{fully-$\beta$-party-entangled} setting as the entanglement distribution map comprised of all  cliques of size $\beta$. Note that this allows genuine multiparty entanglement to be established among every subset of servers containing no more than $\beta$ servers. Formally, \emph{fully-$\beta$-party-entangled} setting\footnote{We may shorten `fully-$\beta$-party-entangled'  to simply `$\beta$-party-entangled'  in subsequent discussions when the context is obvious.} refers to the \emph{bijective} entanglement distribution map,
\begin{align}
\mathcal{E}^{(\beta)} \colon [\binom{S}{\beta}] \biject \binom{[S]}{\beta}.
\end{align}
For example, with $S=4$ servers, fully-$2$-party-entangled setting means that $\binom{4}{2}=6$ separate quantum systems are available, each comprised of $2$ entangled subsystems that are distributed among a distinct pair ($2$-clique) of servers. $\mathcal{E} ^{(2)}$ is a bijection from $\{1,2,\cdots,6\}$ to the set of all $2$-subsets of $[4]$, i.e., $\{\{1,2\},\{2,3\},\{3,4\},\{4,1\},\{1,3\},\{2,4\}\}$. For example, $\mathcal{E} ^{(2)}(1)=\{1,2\}$, $\mathcal{E} ^{(2)}(2)=\{2,3\}$, $\mathcal{E} ^{(2)}(3)=\{3,4\}$, $\mathcal{E} ^{(2)}(4)=\{4,1\}$, $\mathcal{E} ^{(2)}(5)=\{1,3\}$, $\mathcal{E} ^{(2)}(6)=\{2,4\})$. Given a data replication map $\mathcal{W}$, the fully-$\beta$-party-entangled capacity is correspondingly defined as,
\begin{align}
	C^{(\beta)}(\mathcal{W}) \triangleq C(\mathcal{W}, \mathcal{E}^{(\beta)}).
\end{align} Extreme cases of fully-$\beta$-party-entangled capacity include the $\beta=S$ setting, known simply as the fully-entangled capacity $C^{\fullent}(\mathcal{W}) = C^{(S)}(\mathcal{W})$, and the $\beta=1$ setting, known simply as the \emph{fully-unentangled} capacity $C^{\unent}(\mathcal{W}) = C^{(1)}(\mathcal{W})$.

\subsection{Distributed Superdense Coding (DSC) Gain}
Given data replication map $\mathcal{W}$ and entanglement distribution map $\mathcal{E}$, the distributed super dense coding (DSC) gain is defined as the ratio 
\begin{align}
	C(\mathcal{W}, \mathcal{E})/C^{\unent}(\mathcal{W}),
\end{align}
which indicates the multiplicative gain, compared to the fully-unentangled capacity $C^{\unent}(\mathcal{W})$, that is enabled by quantum entanglement subject to the entanglement distribution map $\mathcal{E}$. The \emph{maximal} DSC gain for data replication map $\mathcal{W}$ is defined as $C^{\fullent}(\mathcal{W})/C^{\unent}(\mathcal{W})$, which is the ratio of the fully-entangled capacity to the fully-unentangled capacity.

\section{Results} \label{sec:results}
The exact capacity of the $\Sigma$-QMAC is fully characterized in the following theorem.
\begin{theorem}[$\Sigma$-QMAC] \label{thm:main}
The capacity of the $\Sigma$-QMAC $\big(\mathbb{F}_d, S,K,T, \mathcal{W}, \mathcal{E} \big)$,  is 
  \begin{align}
C(\mathcal{W},\mathcal{E}) = \left(\min_{{\bm \Delta}\in \mathcal{D}(\mathcal{W},\mathcal{E})}\sum_{t\in[T],s\in\mathcal{E}(t)} \Delta_{t,s} \right)^{-1}
  \end{align}
with the feasible download-cost region characterized as $(\Gamma \triangleq \sum_{t\in[T]}|\mathcal{E}(t)|)$,
  \begin{align} \label{eq:region_clique}
    \mathcal{D}(\mathcal{W},\mathcal{E})  = \left\{ 
    {\bm \Delta} \in \mathbb{R}_+^{\Gamma} \left|
    \begin{array}{l}
      \sum_{t\in[T]} \min\Big\{\sum_{s\in \mathcal{E}(t)}\Delta_{t,s}, \sum_{s\in \mathcal{E}(t)\cap \mathcal{W}(k)} 2\Delta_{t,s} \Big\} \geq 1, \forall k\in [K]
    \end{array}
    \right.\right\}.
   \end{align}
\end{theorem}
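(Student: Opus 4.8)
The plan is to prove the two matching halves separately: a converse showing every feasible $\bm\Delta$ lies in $\mathcal{D}(\mathcal{W},\mathcal{E})$, and an achievability showing every (rational) point of $\mathcal{D}(\mathcal{W},\mathcal{E})$ is realized by an explicit $N$-sum-box code. Since the formula $C=(\min_{\bm\Delta\in\mathcal{D}}\sum_{t,s}\Delta_{t,s})^{-1}$ already follows from the problem formulation, it suffices to characterize the polytope $\mathcal{D}$.

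For the converse, I would fix a single stream index $k\in[K]$, set $\mathsf{W}_{k'}^{[L]}=0$ for all $k'\neq k$ so that $\mathsf{W}_\Sigma^{[L]}=\mathsf{W}_k^{[L]}$, and take $\mathsf{W}_k^{[L]}$ uniform over $\mathbb{F}_d^{L}$. Exact recovery forces $I(\mathsf{W}_k^{[L]};Y_{[T]})=H(\mathsf{W}_k^{[L]})=L$ dits. Because the $T$ systems are separable and measured separately, $Y_1,\dots,Y_T$ are conditionally independent given $\mathsf{W}_k^{[L]}$, so subadditivity of entropy gives $\sum_{t}I(\mathsf{W}_k^{[L]};Y_t)\geq I(\mathsf{W}_k^{[L]};Y_{[T]})=L$. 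It then remains to bound each per-system term by the two quantities inside the $\min$. The first, $I(\mathsf{W}_k^{[L]};Y_t)\leq\log_d|\mathcal{Q}_t|=\sum_{s\in\mathcal{E}(t)}\log_d\delta_{t,s}$, is the Holevo bound, valid since no entanglement is shared between servers and Alice. For the second, only servers in $\mathcal{E}(t)\cap\mathcal{W}(k)$ apply $\mathsf{W}_k$-dependent unitaries; writing $A$ for the tensor product of their subsystems and $B$ for the remainder of $\mathcal{Q}_t$, the encoding is a local unitary on $A$ of a fixed bipartite state $\rho_{AB}$. Local unitaries preserve entropy, so the Holevo quantity collapses to $S(\bar\rho_{AB})-S(\rho_{AB})$, which I would bound by subadditivity together with the Araki--Lieb inequality to obtain the distributed superdense-coding bound $I(\mathsf{W}_k^{[L]};Y_t)\leq 2\log_d|A|=2\sum_{s\in\mathcal{E}(t)\cap\mathcal{W}(k)}\log_d\delta_{t,s}$. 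Dividing by $L$ and using $\Delta_{t,s}\geq\log_d\delta_{t,s}/L$ reproduces exactly the $k$-th constraint of $\mathcal{D}$.

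For achievability I would realize each clique $t$ as an $N$-sum box with $N=|\mathcal{E}(t)|$ built from a stabilizer (CSS) state, so that each server applies only $X$ and $Z$ gates encoding its data and Alice's joint measurement returns a prescribed $\mathbb{F}_d$-linear transform of the servers' $2N$ Pauli inputs. The factor of $2$ enters precisely because each downloaded qudit carries two dits of input into this transform, matching the $2\Delta_{t,s}$ term. I would then abstract away the quantum layer and treat the family of $N$-sum boxes as rate-limited linear links of a classical \emph{sum-network} in which the single function delivered to Alice is $\sum_k\mathsf{W}_k$; invoking the known achievability for scalar-linear (sum) computation from network coding, together with dual-GRS/MDS designs to fix the transfer matrices subject to the symplectic self-orthogonality constraints of the stabilizer formalism, should yield, for any rational $\bm\Delta\in\mathcal{D}$, a batch size $L$ and dimensions meeting the download budget while computing the sum exactly. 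The $\min$ in $\mathcal{D}$ would surface in the construction as a per-(clique, stream) choice between exploiting the clique's full download (when $|B|\leq|A|$) and being superdense-limited by the encoding servers (when $|B|>|A|$).

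The main obstacle I anticipate is the achievability, not the converse: the converse is a clean composition of standard inequalities (Holevo, subadditivity, Araki--Lieb), whereas achievability must exhibit, for an \emph{arbitrary} pair $(\mathcal{W},\mathcal{E})$, explicit stabilizer codes and linear precoders whose combined transfer matrix both respects the symplectic constraints of the $N$-sum box and reproduces every extreme point of $\mathcal{D}$. Reconciling the combinatorial structure of the $K$ $\min$-constraints with a single globally consistent linear code that computes the sum at the LP-optimal download --- and confirming that neither non-stabilizer states nor general unitaries can improve on it --- is where I expect the bulk of the technical work to lie.
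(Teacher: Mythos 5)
Your overall architecture coincides with the paper's proof. On the converse side, the paper proves a conditional-independence lemma for $Y_1,\dots,Y_T$ (from separable states and separate POVMs), applies a per-clique cut-set argument in which the servers in $\mathcal{E}(t)\cap\mathcal{W}(k)$ act as the transmitter and the rest join Alice, and invokes the dense-coding capacity bound $I \leq \min(\log_d|\mathcal{Q}_B\mathcal{Q}_A|,\,2\log_d|\mathcal{Q}_B|)$, which is exactly your Holevo-plus-(subadditivity/Araki--Lieb) computation. The only cosmetic difference is that you fix ${\sf W}_{k'}^{[L]}=0$ for $k'\neq k$, while the paper conditions on ${\sf W}_{[K]\setminus\{k\}}^{[L]}$ and works with $I\big({\sf W}_k^{[L]};Y_t\,\big|\,{\sf W}_{[K]\setminus\{k\}}^{[L]}\big)$; since the scheme must succeed on every realization, the two devices are interchangeable and yield the same per-stream constraint. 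On the achievability side, your pipeline --- stabilizer/CSS $N$-sum boxes per clique, abstraction to a classical MIMO MAC, the sum-network-to-multicast duality of Rai--Dey with linear multicast achievability, and dual-GRS designs to satisfy the self-orthogonality constraint --- is precisely the paper's: it builds ``half-MDS'' transfer matrices from a GRS code and its dual placed block-diagonally (the CSS construction), chooses the extension field $q=d^z\geq \max_t N_t$ so such matrices exist, and shows $\rk(\overline{\bf M}_k)=\sum_t \min\big(N_t,\sum_{s\in\mathcal{E}(t)\cap\mathcal{W}(k)}2N_{t,s}\big)$, which is your per-(clique, stream) $\min$.

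One concrete misstep: you fix each clique's box size to $N=|\mathcal{E}(t)|$, i.e., one qudit per server per box. Repetitions of such boxes can only produce download tuples with $\Delta_{t,s}$ \emph{equal} across all $s\in\mathcal{E}(t)$, whereas the LP optimum over $\mathcal{D}$ generally requires asymmetric per-server costs within a clique. The paper's own motivating example shows this is not a technicality: for the fully-entangled setting of Fig. \ref{fig:abbccad} ($T=1$, $\mathcal{E}(1)=[4]$), the optimal point is $(\Delta_{\sf ab},\Delta_{\sf ac},\Delta_{\sf bc},\Delta_{\sf d})=(1/4,1/4,1/4,1/2)$ achieved by a $5$-sum box in which Server $\mathcal{S}_{\sf d}$ controls \emph{two} input pairs; with symmetric per-server downloads the constraint $\min(4\delta,2\delta)\geq 1$ from stream ${\sf D}$ forces $\delta\geq 1/2$ and caps the rate at $1/2<4/5$. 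The fix is exactly what the paper does: let Server $s\in\mathcal{E}(t)$ hold a subsystem of dimension $q^{N_{t,s}}$ (controlling $2N_{t,s}$ box inputs) with $N_t=\sum_{s\in\mathcal{E}(t)}N_{t,s}$ free integers, then take the closure over rationals $\Delta_{t,s}\geq N_{t,s}/R$. With that generalization, your sketch matches the paper's proof in both directions.
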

\noindent The proof of Theorem \ref{thm:main} appears in Sections \ref{proof:main_conv} and \ref{proof:main_achievability}. Note that Theorem \ref{thm:main} characterizes the capacity of the $\Sigma$-QMAC in terms of the solution of a linear program that finds the minimum download cost over the feasible region $\mathcal{D}$ that is explicitly characterized. The capacity does not depend on the field $\mathbb{F}_d$. In fact the capacity depends \emph{only} on the data replication and entanglement distribution maps ($\mathcal{W}, \mathcal{E}$) since the remaining parameters $S,K,T$ can be inferred from $\mathcal{W}, \mathcal{E}$.

The remainder of this section  identifies a few interesting corollaries that follow from Theorem \ref{thm:main}. We start with the  specializations for the opposite extremes, fully-entangled and fully-unentangled capacities,  stated as corollaries next.
\begin{corollary}[Fully-entangled] \label{cor:unconstrained}
The fully-entangled capacity for a data replication map $\mathcal{W}$ is,
\begin{align}
	C^{\fullent}(\mathcal{W}) = \left(\min_{(\Delta_1,\cdots, \Delta_S) \in \mathcal{D}^{\fullent}(\mathcal{W})}\sum_{s\in [S]} \Delta_{s} \right)^{-1}, 
\end{align}
where
	\begin{align} \label{eq:region}
    \mathcal{D}^{\fullent}(\mathcal{W}) = \left\{ (\Delta_1,\cdots, \Delta_S) \in \mathbb{R}_+^S \left|
      \sum_{s\in[S]} \Delta_s \geq 1, \sum_{s\in \mathcal{W}(k)} \Delta_s \geq 1/2, \forall k\in [K]
    \right.\right\}.
    \end{align}
\end{corollary}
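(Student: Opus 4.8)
The plan is to obtain this corollary as a direct specialization of Theorem~\ref{thm:main} to the fully-entangled entanglement distribution map, so no separate achievability or converse argument is needed. By definition $C^{\fullent}(\mathcal{W}) = C^{(S)}(\mathcal{W}) = C(\mathcal{W},\mathcal{E}^{(S)})$, and since $\binom{S}{S}=1$, the bijective map $\mathcal{E}^{(S)}\colon [1]\biject\{[S]\}$ describes a \emph{single} quantum system ($T=1$) distributed among \emph{all} servers, i.e., $\mathcal{E}(1)=[S]$. First I would substitute these parameters into the feasible region \eqref{eq:region_clique}. With $T=1$ the outer sum over $t$ collapses, $\Gamma=|\mathcal{E}(1)|=S$, and writing $\Delta_s\triangleq\Delta_{1,s}$ the family of constraints becomes
\[
\min\Big\{\sum_{s\in[S]}\Delta_s,\ \sum_{s\in\mathcal{W}(k)}2\Delta_s\Big\}\geq 1,\quad \forall k\in[K].
\]

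Next I would invoke the elementary fact that $\min\{a,b\}\geq 1$ holds if and only if $a\geq 1$ and $b\geq 1$. Applying this for each $k$ decouples the constraint into $\sum_{s\in[S]}\Delta_s\geq 1$ (which, being independent of $k$, need only be imposed once) together with $\sum_{s\in\mathcal{W}(k)}2\Delta_s\geq 1$ for every $k$; the latter is equivalent to $\sum_{s\in\mathcal{W}(k)}\Delta_s\geq 1/2$. This is exactly the region $\mathcal{D}^{\fullent}(\mathcal{W})$ of \eqref{eq:region}. Feeding this back into the capacity expression $C=(\min_{{\bm\Delta}\in\mathcal{D}}\sum_{t,s}\Delta_{t,s})^{-1}$ of Theorem~\ref{thm:main}, and noting that with $T=1$ the objective $\sum_{t\in[T],s\in\mathcal{E}(t)}\Delta_{t,s}$ reduces to $\sum_{s\in[S]}\Delta_s$, yields the claimed formula.

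There is essentially no hard step here: the corollary is a transcription of Theorem~\ref{thm:main} under $T=1$, $\mathcal{E}(1)=[S]$. The only point that warrants care is the passage from the single min-constraint to the two separate linear inequalities, which relies on the ``$\min\geq 1$'' equivalence; since both terms inside the min are nonnegative linear forms, no degenerate cases arise. If desired, I would add a brief remark that the resulting linear program is feasible (e.g.\ $\Delta_s=1/2$ for all $s$ whenever each $\mathcal{W}(k)$ is nonempty), so the minimum is attained and the reciprocal is well defined.
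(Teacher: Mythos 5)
Your proposal is correct and matches the paper's treatment exactly: the paper offers no separate proof for this corollary, presenting it as an immediate specialization of Theorem~\ref{thm:main} to $T=1$, $\mathcal{E}(1)=[S]$, which is precisely your substitution followed by decoupling the $\min$-constraint via the equivalence $\min\{a,b\}\geq 1 \Leftrightarrow (a\geq 1 \text{ and } b\geq 1)$. Nothing further is needed.
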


\begin{corollary}[Fully-unentangled]\label{cor:unentangled}
The fully-unentangled capacity for a data replication map $\mathcal{W}$ is,
\begin{align}
	C^{\unent}(\mathcal{W}) = \left(\min_{(\Delta_1,\cdots, \Delta_S)\in \mathcal{D}^{\unent}(\mathcal{W})}\sum_{s\in[S]} \Delta_{s} \right)^{-1}
\end{align}
where
  \begin{align} \label{eq:region_singleton}
    \mathcal{D}^{\unent}(\mathcal{W}) = \left\{ (\Delta_1,\cdots, \Delta_S) \in \mathbb{R}_+^S ~\left|~  \mathbf
      \sum_{s \in \mathcal{W}(k)} \Delta_s \geq 1, \forall k\in [K]
    \right.\right\}.
  \end{align}
\end{corollary}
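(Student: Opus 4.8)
The plan is to obtain Corollary \ref{cor:unentangled} as a direct specialization of Theorem \ref{thm:main} to the entanglement distribution map $\mathcal{E}^{(1)}$, since by definition $C^{\unent}(\mathcal{W}) = C^{(1)}(\mathcal{W}) = C(\mathcal{W}, \mathcal{E}^{(1)})$. First I would unpack the fully-unentangled map: $\mathcal{E}^{(1)}\colon [\binom{S}{1}] \biject \binom{[S]}{1}$ means $T = S$ and, after the natural relabeling, $\mathcal{E}^{(1)}(s) = \{s\}$ for each $s \in [S]$, so that each quantum system consists of a single subsystem held by a single server. Consequently $\Gamma = \sum_{t\in[T]} |\mathcal{E}(t)| = S$, the cost vector ${\bm \Delta}$ has exactly one entry per server, which I denote $\Delta_s \triangleq \Delta_{s,s}$, and the outer capacity expression of Theorem \ref{thm:main} already reduces to $(\min_{{\bm \Delta}\in \mathcal{D}} \sum_{s\in[S]} \Delta_s)^{-1}$, matching the claimed form.

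The only real work is to show that the feasible region \eqref{eq:region_clique} collapses to \eqref{eq:region_singleton}. I would substitute $\mathcal{E}(s) = \{s\}$ into the per-stream constraint and simplify the inner $\min$ system-by-system. For a fixed $k \in [K]$ and a fixed singleton clique $\{s\}$, the first argument $\sum_{s'\in\mathcal{E}(s)}\Delta_{s,s'}$ equals $\Delta_s$, while the second argument $\sum_{s'\in\mathcal{E}(s)\cap\mathcal{W}(k)} 2\Delta_{s,s'}$ equals $2\Delta_s$ when $s\in\mathcal{W}(k)$ and is an empty sum, hence $0$, when $s\notin\mathcal{W}(k)$. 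Using $\Delta_s \geq 0$, the two cases give $\min\{\Delta_s, 2\Delta_s\} = \Delta_s$ and $\min\{\Delta_s, 0\} = 0$, respectively. Summing these contributions over all $s\in[S]$ therefore leaves precisely $\sum_{s\in\mathcal{W}(k)} \Delta_s$, so the constraint $\sum_{t\in[T]}\min\{\cdots\} \geq 1$ becomes $\sum_{s\in\mathcal{W}(k)} \Delta_s \geq 1$, which is exactly the defining inequality of $\mathcal{D}^{\unent}(\mathcal{W})$.

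Since this matching holds for every $k \in [K]$, I conclude $\mathcal{D}(\mathcal{W}, \mathcal{E}^{(1)}) = \mathcal{D}^{\unent}(\mathcal{W})$, and substituting into the capacity formula of Theorem \ref{thm:main} yields the claimed expression for $C^{\unent}(\mathcal{W})$. I do not anticipate a substantive obstacle: the statement is a corollary and the entire argument is a specialization of an assumed theorem. The only points demanding a little care are the bookkeeping of the single clique index (so that each $\Delta_{s,s}$ is correctly identified with the per-server cost $\Delta_s$) and the use of nonnegativity of ${\bm \Delta}$ together with the empty-sum convention to resolve the two branches of the $\min$ for servers that do and do not hold stream $k$.
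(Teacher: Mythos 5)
Your specialization is correct and coincides with the paper's treatment: the paper states Corollary \ref{cor:unentangled} as an immediate consequence of Theorem \ref{thm:main} with no separate proof, and your reduction (singleton cliques making the second branch of the $\min$ equal to $2\Delta_s$ or an empty sum, hence collapsing the constraint to $\sum_{s\in\mathcal{W}(k)}\Delta_s\geq 1$) is exactly the intended calculation. The two bookkeeping points you flag --- nonnegativity to resolve $\min\{\Delta_s,2\Delta_s\}=\Delta_s$ and the empty-sum convention for $s\notin\mathcal{W}(k)$ --- are handled correctly.
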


From  Corollary \ref{cor:unconstrained} and Corollary \ref{cor:unentangled} we have the following  characterization of the maximal DSC gain for any data replication map $\mathcal{W}$.

\begin{corollary}[Maximal DSC gain]\label{cor:cqcc}
The maximal distributed superdense coding gain for the data replication map $\mathcal{W}$ is,
\begin{align}
C^{\fullent}(\mathcal{W})/C^{\unent}(\mathcal{W}) &= \min\big(2,1/C^{\unent}(\mathcal{W})\big).
\end{align}
\end{corollary}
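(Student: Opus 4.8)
The plan is to work entirely at the level of the two linear programs furnished by Corollaries \ref{cor:unconstrained} and \ref{cor:unentangled}, since the maximal DSC gain is just the ratio of their optimal costs. Write $D^{\unent}(\mathcal{W}) \triangleq \min_{{\bm\Delta}\in\mathcal{D}^{\unent}(\mathcal{W})}\sum_{s\in[S]}\Delta_s = 1/C^{\unent}(\mathcal{W})$ and $D^{\fullent}(\mathcal{W}) \triangleq \min_{{\bm\Delta}\in\mathcal{D}^{\fullent}(\mathcal{W})}\sum_{s\in[S]}\Delta_s = 1/C^{\fullent}(\mathcal{W})$. Then $C^{\fullent}/C^{\unent} = D^{\unent}/D^{\fullent}$, and the entire claim reduces to the single identity $D^{\fullent}(\mathcal{W}) = \max\big(1,\, D^{\unent}(\mathcal{W})/2\big)$; substituting this back immediately gives $D^{\unent}/D^{\fullent} = \min(2,\, D^{\unent}) = \min(2,\, 1/C^{\unent})$, which is exactly the asserted formula.

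To prove this identity I would compare the two feasible regions directly. Introduce the auxiliary program $\min_{{\bm\Delta}}\sum_{s}\Delta_s$ over the region $\mathcal{D}'$ defined by only the ``half'' constraints $\sum_{s\in\mathcal{W}(k)}\Delta_s \ge 1/2$ together with $\Delta_s \ge 0$. This region is precisely $\mathcal{D}^{\unent}(\mathcal{W})$ with every right-hand side scaled by $1/2$, so by homogeneity of a linear program in its right-hand side its optimal value is $D^{\unent}(\mathcal{W})/2$. The key structural observation is that $\mathcal{D}^{\fullent}(\mathcal{W})$ is exactly $\mathcal{D}'$ intersected with the one extra global constraint $\sum_{s\in[S]}\Delta_s \ge 1$, so all that remains is to understand how appending this single constraint changes the optimum.

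I would then split into two cases. If $D^{\unent}(\mathcal{W})/2 \ge 1$, every feasible point of $\mathcal{D}'$ already satisfies $\sum_s\Delta_s \ge D^{\unent}(\mathcal{W})/2 \ge 1$, so the global constraint is redundant and $D^{\fullent}(\mathcal{W}) = D^{\unent}(\mathcal{W})/2 = \max(1, D^{\unent}(\mathcal{W})/2)$. If instead $D^{\unent}(\mathcal{W})/2 < 1$, the global constraint forces $D^{\fullent}(\mathcal{W}) \ge 1$, and for the matching upper bound I take an optimizer ${\bm\Delta}^\star$ of $\mathcal{D}'$, which has $\sum_s\Delta_s^\star = D^{\unent}(\mathcal{W})/2 < 1$, and add the slack $1 - D^{\unent}(\mathcal{W})/2$ to any single coordinate. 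Because every constraint defining $\mathcal{D}^{\fullent}(\mathcal{W})$ is a ``$\ge$'' inequality that is monotone nondecreasing in each $\Delta_s$, the perturbed point still obeys the half constraints and now meets $\sum_s\Delta_s = 1$; hence $D^{\fullent}(\mathcal{W}) \le 1$ and so $D^{\fullent}(\mathcal{W}) = 1 = \max(1, D^{\unent}(\mathcal{W})/2)$, completing the identity.

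The argument is essentially a routine linear-programming manipulation, and I do not expect a genuine obstacle; the one step meriting care is the monotone ``add mass to one coordinate'' construction in the second case, since it is precisely what delivers achievability of $D^{\fullent}(\mathcal{W})=1$ and relies on the constraints being one-sided inequalities with nonnegative coefficients. I would also note for rigor that both programs are feasible and bounded whenever each data stream is replicated at some server (nonnegativity bounds the cost below, and ${\bm\Delta}\equiv \mathbf{1}$ is always feasible), so the minima are attained and the ratio $D^{\unent}/D^{\fullent}$ is well defined.
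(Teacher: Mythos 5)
Your proposal is correct and is essentially the paper's own argument (Appendix~\ref{proof:cqcc}): the paper likewise writes $\mathcal{D}^{\fullent}(\mathcal{W})$ as the half-scaled unentangled region $\mathcal{D}^{\unent}_{1/2}$ intersected with the single global constraint $\sum_{s\in[S]}\Delta_s\geq 1$, splits into the same two cases according to whether the halved optimizer already has total cost at least $1$, and in the second case perturbs the halved optimizer upward to total cost exactly $1$ using the monotonicity of the constraints. Your reformulation in terms of download costs $D^{\fullent}=\max\big(1,D^{\unent}/2\big)$ and the appeal to LP homogeneity are only cosmetic differences.
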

\noindent The proof of Corollary \ref{cor:cqcc} is relegated to Appendix \ref{proof:cqcc}.

Next we explore a class of $\Sigma$-QMAC settings with symmetric data replication and entanglement distribution maps. A symmetric setting is specified by three parameters, $S, \alpha$ and $\beta$.
The data replication map is fully symmetric, so that for each $\alpha$-subset of $[S]$ there is a unique data stream replicated among this subset of servers. The goal is to characterize explicitly the fully-$\beta$-party-entangled capacity for such data replication maps. The explicit characterization is provided next.
\begin{corollary}[Symmetric] \label{cor:symmetric}
	If $\alpha \in [S]$, $K = \binom{S}{\alpha}$, $T = \binom{S}{\beta}$, and  $\mathcal{W}: [K] \biject \binom{[S]}{\alpha}$ and $\mathcal{E}: [T] \biject \binom{[S]}{\beta}$ are bijective mappings,  then the $\Sigma$-QMAC capacity (denoted as $C_{\alpha}^{(\beta)}$) is 
	\begin{align}
		C_{\alpha}^{(\beta)} &= \frac{1}{\beta T}\sum_{\gamma = (\alpha+\beta-S)^+}^{\min(\alpha,\beta)} \min(\beta,2\gamma) \cdot \binom{\alpha}{\gamma}\cdot \binom{S-\alpha}{\beta-\gamma} \label{eq:symmetric} \\
		& = \frac{2\alpha}{S} - \frac{1}{\beta T} \sum_{\gamma = \max(\alpha + \beta -S, \lceil \beta/2 \rceil)}^{\min (\alpha, \beta)} (2\gamma - \beta) \cdot \binom{\alpha}{\gamma} \cdot \binom{S-\alpha}{\beta - \gamma} \label{eq:symmetric_equiv_1}  \\
	    & = 1- \frac{1}{\beta T}\sum_{\gamma = (\alpha+\beta-S)^+}^{\min (\alpha, \lfloor \beta/2 \rfloor)} (\beta -2\gamma) \cdot \binom{\alpha}{\gamma} \cdot \binom{S-\alpha}{\beta -\gamma} \label{eq:symmetric_equiv_2}
			\end{align}
\end{corollary}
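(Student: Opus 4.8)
The plan is to specialize the capacity characterization of Theorem~\ref{thm:main} to the fully symmetric instance and to solve the resulting optimization in closed form. By Theorem~\ref{thm:main}, $C_\alpha^{(\beta)} = \big(\min_{\bm\Delta\in\mathcal{D}}\sum_{t\in[T],s\in\mathcal{E}(t)}\Delta_{t,s}\big)^{-1}$, so it suffices to evaluate the minimum download cost over the region \eqref{eq:region_clique}. The first observation I would record is that $\mathcal{D}$ is convex: each constraint has the form $\sum_t\min\{\cdots\}\geq 1$, and since a minimum of linear functions is concave, the left-hand side is concave in $\bm\Delta$, whence its superlevel set is convex.

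The conceptual crux is a symmetrization argument. The group $\mathrm{Sym}([S])$ acts on the instance: a permutation $\pi$ of the servers permutes the $\alpha$-subsets (hence the data streams, through the bijection $\mathcal{W}$) and the $\beta$-subsets (hence the cliques, through $\mathcal{E}$). This action sends $\mathcal{D}$ to itself, because overlaps are preserved, $|\pi(\mathcal{E}(t))\cap\pi(\mathcal{W}(k))| = |\mathcal{E}(t)\cap\mathcal{W}(k)|$, and it leaves the objective $\sum_{t,s}\Delta_{t,s}$ invariant. Crucially, $\mathrm{Sym}([S])$ acts transitively on the incidence pairs $\{(t,s):s\in\mathcal{E}(t)\}$, since any $\beta$-set with a marked element can be mapped to any other. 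Averaging an arbitrary feasible $\bm\Delta$ over its group orbit therefore produces a feasible point (by convexity of $\mathcal{D}$) that attains the same objective value (by invariance) and in which every coordinate equals a single constant $\Delta$. Consequently the minimum over $\mathcal{D}$ is attained on the constant tuples $\Delta_{t,s}\equiv\Delta$.

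Next I would perform the counting. With $\Delta_{t,s}\equiv\Delta$ the objective is $\Gamma\Delta=\beta T\,\Delta$, and for each data stream $k$ the constraint reduces to $\Delta\sum_t\min\{\beta,\,2|\mathcal{E}(t)\cap\mathcal{W}(k)|\}\geq 1$. Grouping cliques by their overlap $\gamma=|\mathcal{E}(t)\cap\mathcal{W}(k)|$ with the fixed $\alpha$-set $\mathcal{W}(k)$, the number of $\beta$-sets meeting it in exactly $\gamma$ elements is $\binom{\alpha}{\gamma}\binom{S-\alpha}{\beta-\gamma}$, where $(\alpha+\beta-S)^+\leq\gamma\leq\min(\alpha,\beta)$. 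By the full symmetry of the data map this constraint is identical for every $k$, so the smallest feasible $\Delta$ is the reciprocal of $\sum_\gamma\min(\beta,2\gamma)\binom{\alpha}{\gamma}\binom{S-\alpha}{\beta-\gamma}$; substituting into $C=(\beta T\Delta)^{-1}$ gives exactly \eqref{eq:symmetric}.

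Finally, the equivalent forms \eqref{eq:symmetric_equiv_1} and \eqref{eq:symmetric_equiv_2} follow by elementary algebra: writing $\min(\beta,2\gamma)=2\gamma-(2\gamma-\beta)^+=\beta-(\beta-2\gamma)^+$ and applying the Vandermonde identities $\sum_\gamma\binom{\alpha}{\gamma}\binom{S-\alpha}{\beta-\gamma}=\binom{S}{\beta}=T$ and $\sum_\gamma\gamma\binom{\alpha}{\gamma}\binom{S-\alpha}{\beta-\gamma}=\alpha\binom{S-1}{\beta-1}$, the latter producing the leading term $\tfrac{2\alpha}{S}$ after using $\binom{S-1}{\beta-1}=\tfrac{\beta}{S}\binom{S}{\beta}$. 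I expect the symmetrization step to be the main obstacle: one must argue the convexity of $\mathcal{D}$ and the transitivity of the action carefully enough to guarantee that the averaged constant tuple is simultaneously feasible and optimal. Once the program collapses to the single scalar $\Delta$, the overlap counting and the binomial identities are routine.
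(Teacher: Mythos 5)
Your proposal is correct and follows essentially the same route as the paper's proof: specialize Theorem~\ref{thm:main}, argue that the optimum is attained at a constant tuple $\Delta_{t,s}\equiv\Delta$, count cliques by their overlap $\gamma$ with a fixed $\alpha$-set to get $f(\alpha,\beta)=\sum_\gamma \min(\beta,2\gamma)\binom{\alpha}{\gamma}\binom{S-\alpha}{\beta-\gamma}$, and derive \eqref{eq:symmetric_equiv_1}--\eqref{eq:symmetric_equiv_2} via $\min(\beta,2\gamma)=2\gamma-(2\gamma-\beta)^+=\beta-(\beta-2\gamma)^+$ and Vandermonde. The only difference is that where the paper simply asserts the constant-tuple reduction ``due to symmetry,'' you make it rigorous via convexity of $\mathcal{D}$ plus orbit-averaging under the transitive $\mathrm{Sym}([S])$ action on incidence pairs --- a welcome sharpening, not a different approach.
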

\noindent The proof of Corollary \ref{cor:symmetric} is relegated to Appendix \ref{proof:symmetric}.

The next corollary sheds light on $C^{(2)}$, i.e., the capacity with only bipartite entanglements. Let us first provide the necessary context before presenting the next corollary. Given a $\Sigma$-QMAC problem $\mathcal{P}$ with data replication $\mathcal{W}$, $S$ servers and $K$ data streams, we want to explicitly characterize the $2$-party-entangled capacity $C^{(2)}(\mathcal{W})$. To do so, let us construct another hypothetical $\Sigma$-QMAC problem referred to as $\widetilde{\mathcal{P}}$ with data replication map $\widetilde{\mathcal{W}}$, $\binom{S}{2}$ servers and the same $K$ data streams  as in $\mathcal{P}$.  Now let us specify $\widetilde{\mathcal{W}}$. Each server in $\widetilde{\mathcal{P}}$ is indexed by a $2$-element set $\{i,j\} \subset [S]$.  Server $\mathcal{S}_{\{i,j\}}$ in $\widetilde{\mathcal{P}}$ has the access to the data streams that are available to Servers $\mathcal{S}_i$ or $\mathcal{S}_j$ in $\mathcal{P}$. In other words, a data stream is available to Server $\mathcal{S}_{\{i,j\}}$ in $\widetilde{\mathcal{P}}$ if and only if that data stream is available to either Server $\mathcal{S}_i$ or Server $\mathcal{S}_j$ (or to both) in $\mathcal{P}$. Mathematically, $\widetilde{\mathcal{W}}(k) = \big\{ \{i,j\}\in \binom{[S]}{2} ~\big|~  \big(\{i,j\} \cap \mathcal{W}(k) \big) \not=\emptyset \big\}$ for $k\in [K]$. Now we are ready to present the next corollary.

\begin{corollary}[Fully-$2$-party-entangled capacity] \label{cor:bipartite}
$C^{(2)}(\mathcal{W}) = C^{\unent}(\widetilde{\mathcal{W}})$. In addition, it can be shown that $C^{(2)}(\mathcal{W})$ can always be achieved by a scheme that involves only the $2$-sum protocol.\footnote{Schemes that only apply the $2$-sum protocol are special cases of the quantum coding schemes formulated in Section \ref{sec:def_schemes} when $T=\binom{S}{2}$ and $\mathcal{E}: [T]\biject \binom{[S]}{2}$. In other words, each of the $T$ quantum systems $\mathcal{Q}_1,\mathcal{Q}_2, \cdots,\mathcal{Q}_T$ is available to a unique pair of the $S$ servers.}
\end{corollary}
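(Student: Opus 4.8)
The plan is to reduce the equality $C^{(2)}(\mathcal{W})=C^{\unent}(\widetilde{\mathcal{W}})$ to an equivalence between two linear programs: the one produced by Theorem \ref{thm:main} for the entanglement map $\mathcal{E}^{(2)}$, and the one produced by Corollary \ref{cor:unentangled} for the data replication map $\widetilde{\mathcal{W}}$. Since $\mathcal{E}^{(2)}$ is a bijection onto $\binom{[S]}{2}$, I would identify each quantum system with the pair $\{i,j\}$ it is distributed among, so that the download variables are $\Delta_{\{i,j\},i},\Delta_{\{i,j\},j}\geq 0$ for each $\{i,j\}\in\binom{[S]}{2}$ and Theorem \ref{thm:main} cuts out $\mathcal{D}(\mathcal{W},\mathcal{E}^{(2)})$ by the constraints
\[
\sum_{\{i,j\}}\min\Big\{\Delta_{\{i,j\},i}+\Delta_{\{i,j\},j},\ 2\!\!\sum_{s\in\{i,j\}\cap\mathcal{W}(k)}\!\!\Delta_{\{i,j\},s}\Big\}\ \geq\ 1,\qquad \forall k\in[K].
\]
Writing $\Delta'_{\{i,j\}}$ for the single download variable of server $\{i,j\}$ in $\widetilde{\mathcal{P}}$, Corollary \ref{cor:unentangled} cuts out $\mathcal{D}^{\unent}(\widetilde{\mathcal{W}})$ by $\sum_{\{i,j\}:\,\{i,j\}\cap\mathcal{W}(k)\neq\emptyset}\Delta'_{\{i,j\}}\geq 1$ for all $k$, using that $\{i,j\}\in\widetilde{\mathcal{W}}(k)$ exactly when $\{i,j\}\cap\mathcal{W}(k)\neq\emptyset$. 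The bridge between the two programs would be the substitution $\Delta'_{\{i,j\}}=\Delta_{\{i,j\},i}+\Delta_{\{i,j\},j}$, under which the two objectives (total download cost) coincide; it then remains to match feasibility and conclude that the two minimum download costs are equal.

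For one inequality, I would take any feasible ${\bm \Delta}$ for $\mathcal{D}(\mathcal{W},\mathcal{E}^{(2)})$ and set $\Delta'_{\{i,j\}}=\Delta_{\{i,j\},i}+\Delta_{\{i,j\},j}$. For each clique the inner $\min$-term is $0$ when $\{i,j\}\cap\mathcal{W}(k)=\emptyset$ and is at most $\Delta_{\{i,j\},i}+\Delta_{\{i,j\},j}=\Delta'_{\{i,j\}}$ otherwise, so the left-hand side of the $\mathcal{E}^{(2)}$ constraint for stream $k$ is bounded above by $\sum_{\{i,j\}:\,\{i,j\}\cap\mathcal{W}(k)\neq\emptyset}\Delta'_{\{i,j\}}$; hence $\Delta'$ is feasible for $\mathcal{D}^{\unent}(\widetilde{\mathcal{W}})$ with the same objective, which gives $\min_{\widetilde{\mathcal{W}}}\leq\min_{\mathcal{E}^{(2)}}$. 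For the reverse inequality, I would take any feasible $\Delta'$ and split it symmetrically as $\Delta_{\{i,j\},i}=\Delta_{\{i,j\},j}=\Delta'_{\{i,j\}}/2$. The only case needing care is when exactly one of $i,j$ lies in $\mathcal{W}(k)$: there both arguments of the $\min$ equal $\Delta'_{\{i,j\}}$ (since $2\cdot(\Delta'_{\{i,j\}}/2)=\Delta'_{\{i,j\}}=\Delta_{\{i,j\},i}+\Delta_{\{i,j\},j}$), so the $\min$-term is exactly $\Delta'_{\{i,j\}}$; the empty- and full-intersection cases similarly reproduce $0$ and $\Delta'_{\{i,j\}}$. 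Thus every $\mathcal{E}^{(2)}$ constraint matches the corresponding $\widetilde{\mathcal{W}}$ constraint exactly, the symmetric split is feasible with the same objective, and $\min_{\mathcal{E}^{(2)}}\leq\min_{\widetilde{\mathcal{W}}}$. Combining the two bounds would yield equal minimum download costs, and since capacity is the reciprocal of the minimum download cost, $C^{(2)}(\mathcal{W})=C^{\unent}(\widetilde{\mathcal{W}})$.

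The single delicate point, which I expect to be the main obstacle, is the partial-intersection case: for an \emph{asymmetric} allocation the $\min$ would force a clique to contribute strictly less than its total download, breaking the exact correspondence between the two programs. The resolution is that an optimal solution may always be taken at a \emph{symmetric} (equal-split) allocation, for which the $\min$ is never binding and each $2$-clique behaves precisely like a single server of $\widetilde{\mathcal{P}}$. This same symmetric optimizer would also establish the achievability addendum: it assigns equal download to the two entangled subsystems of every pair, i.e., the symmetric $2$-sum protocol, and since $\mathcal{E}^{(2)}$ consists solely of $2$-cliques, the $N$-sum-box achievability underlying Theorem \ref{thm:main} realizes this optimal download point using an $|\mathcal{E}(t)|$-sum box per clique, that is, only $2$-sum boxes, with no larger entanglement or higher-order box required.
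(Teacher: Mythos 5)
Your LP argument for the equality $C^{(2)}(\mathcal{W})=C^{\unent}(\widetilde{\mathcal{W}})$ is correct, and it takes a genuinely different route from the paper. Both of your directions check out: aggregating via $\Delta'_{\{i,j\}}=\Delta_{\{i,j\},i}+\Delta_{\{i,j\},j}$ maps $\mathcal{D}(\mathcal{W},\mathcal{E}^{(2)})$ into $\mathcal{D}^{\unent}(\widetilde{\mathcal{W}})$ because each $\min$-term is at most $\Delta'_{\{i,j\}}$ (and is $0$ on empty intersection), while the equal split makes every $\min$ non-binding (in the partial-intersection case both arguments equal $\Delta'_{\{i,j\}}$), so the two linear programs have identical optimal values; applying Theorem \ref{thm:main} once with $\mathcal{E}^{(2)}$ and once (for the $\binom{S}{2}$-server problem $\widetilde{\mathcal{P}}$) with singleton cliques then yields the equality. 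The paper instead argues operationally: $C^{(2)}(\mathcal{W})\leq C^{\unent}(\widetilde{\mathcal{W}})$ by a simulation argument (the merged server $\mathcal{S}_{\{i,j\}}$ can locally prepare and transmit whatever joint state the bipartite clique $\{i,j\}$ would have sent), and the reverse inequality by explicitly converting an optimal scheme for $\widetilde{\mathcal{P}}$ into a $2$-sum-protocol scheme for $\mathcal{P}$. Your route is shorter and purely polyhedral, at the cost of invoking the full strength of Theorem \ref{thm:main} on both sides; the paper's route is heavier but is precisely what delivers the protocol-level addendum.

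That addendum is where your proposal has a genuine gap. You assert that the symmetric optimizer, together with ``the $N$-sum-box achievability underlying Theorem \ref{thm:main},'' realizes the optimal point ``using an $|\mathcal{E}(t)|$-sum box per clique, that is, only $2$-sum boxes.'' That is not what the achievability construction does: for clique $t$ it instantiates a single $N_t$-sum box with $N_t=\sum_{s\in\mathcal{E}(t)}N_{t,s}$, over an extension field $\mathbb{F}_q$ with $q=d^z>N_t$ and a half-MDS transfer matrix. For a $2$-clique this is one large entangled stabilizer state of $N_t$ $q$-ary subsystems shared by the two servers --- bipartite entanglement, yes, but not repeated uses of the elementary $2$-sum protocol (two qudits, one per server, two sums per two qudits), which is what the corollary's footnote means. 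An equal download split is a property of the cost allocation, not of the protocol, so your last sentence does not follow from the symmetric optimizer alone. The paper closes this gap with a decomposition step you are missing: the optimal fully-unentangled scheme for $\widetilde{\mathcal{P}}$ may be taken \emph{linear} over $\mathbb{F}_d$ (this uses the achievability of Theorem \ref{thm:main} together with Lemma \ref{lem:multicast}), hence each transmitted vector $Y_{\{i,j\}}\in\mathbb{F}_d^{N_{\{i,j\}}\times 1}$ splits as $Y_{\{i,j\},i}+Y_{\{i,j\},j}$ with each summand locally computable at $\mathcal{S}_i$, resp.\ $\mathcal{S}_j$; running two parallel batches of size $L$, each coordinate of this sum is then delivered by one use of the elementary $2$-sum protocol, giving $2L$ computations from $N=\sum_{\{i,j\}}N_{\{i,j\}}$ protocol uses ($2N$ qudits) and rate $L/N$. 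Some such linearity-based splitting argument is needed before you can claim the $2$-sum protocol suffices.
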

\noindent The proof of Corollary \ref{cor:bipartite} is relegated to Appendix \ref{proof:2sum}. In plain words, Corollary \ref{cor:bipartite} states that the $2$-party-entangled capacity for the data replication map $\mathcal{W}$ is equal to the fully-unentangled capacity for the data replication map $\widetilde{\mathcal{W}}$, comprised of a new set of servers that are obtained by merging pairs of original servers.

\begin{corollary}[Disjoint data] \label{cor:unique_data_stream}
Given data replication map $\mathcal{W}$ with $S$ servers and $K$ data streams, if $S\geq 2$ and each data-stream is available to only one server, i.e., $|\mathcal{W}(k)| = 1$ for all $k\in[K]$, then $C^{\fullent}(\mathcal{W})=C^{(2)}(\mathcal{W}) = 2/S$. 
\end{corollary}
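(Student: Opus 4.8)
The plan is to evaluate the two quantities $C^{\fullent}(\mathcal{W})$ and $C^{(2)}(\mathcal{W})$ separately, using the linear-programming characterizations already established, and to show that each equals $2/S$; this proves both equalities at once, and in particular shows that bipartite entanglement alone suffices to reach the fully-entangled capacity. Throughout I take the data to be distributed so that every server holds at least one stream (otherwise a server carrying no data plays no role and the effective number of servers drops below $S$); under this assumption each $s\in[S]$ arises as the unique $\mathcal{W}(k)$ for some $k$, so the per-stream constraints touch every server.

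For the fully-entangled capacity I would invoke Corollary~\ref{cor:unconstrained}. Because $|\mathcal{W}(k)|=1$, each constraint $\sum_{s\in\mathcal{W}(k)}\Delta_s\ge 1/2$ collapses to $\Delta_{s}\ge 1/2$ for the unique server $s$ holding stream $k$; ranging over $k$ and using that every server holds some stream yields $\Delta_s\ge 1/2$ for all $s\in[S]$. Then $\sum_{s\in[S]}\Delta_s\ge S/2\ge 1$ since $S\ge 2$, so the global constraint $\sum_s\Delta_s\ge 1$ is redundant, and the minimum is attained at $\Delta_s=1/2$ for every $s$. Hence $\min_{\mathcal{D}^{\fullent}}\sum_s\Delta_s=S/2$ and $C^{\fullent}(\mathcal{W})=2/S$.

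For the two-party capacity I would reduce to an unentangled problem via Corollary~\ref{cor:bipartite}, giving $C^{(2)}(\mathcal{W})=C^{\unent}(\widetilde{\mathcal{W}})$. Writing $\mathcal{W}(k)=\{s_k\}$, the merged map satisfies $\widetilde{\mathcal{W}}(k)=\{\{i,j\}\in\binom{[S]}{2}\mid s_k\in\{i,j\}\}$, i.e.\ the set of all pairs (edges of $K_S$) incident to the vertex $s_k$. By Corollary~\ref{cor:unentangled}, $C^{\unent}(\widetilde{\mathcal{W}})$ is the reciprocal of the minimum of $\sum_{\{i,j\}}\Delta_{\{i,j\}}$ subject to $\sum_{\{i,j\}\ni s}\Delta_{\{i,j\}}\ge 1$ for every server $s\in[S]$, which is exactly a \emph{minimum fractional edge cover} on the complete graph $K_S$ (every weighted vertex-degree at least one). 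This LP is the crux of the argument.

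I would settle this LP by matching a converse and an achievability. For the converse, summing the $S$ degree constraints and noting that each edge $\{i,j\}$ is counted at exactly its two endpoints gives $2\sum_{\{i,j\}}\Delta_{\{i,j\}}=\sum_{s\in[S]}\sum_{\{i,j\}\ni s}\Delta_{\{i,j\}}\ge S$, so $\sum\Delta\ge S/2$. For achievability, the uniform assignment $\Delta_{\{i,j\}}=1/(S-1)$ is feasible (each vertex has degree $S-1$, so its weighted degree is exactly $1$) and has total weight $\binom{S}{2}/(S-1)=S/2$; thus the minimum is $S/2$ and $C^{(2)}(\mathcal{W})=C^{\unent}(\widetilde{\mathcal{W}})=2/S$. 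Combining with the previous paragraph gives $C^{\fullent}(\mathcal{W})=C^{(2)}(\mathcal{W})=2/S$. The main obstacle is recognizing and solving the edge-cover LP — in particular exhibiting a single fractional optimizer valid for all $S$, namely the uniform weight $1/(S-1)$, which sidesteps the even/odd case split that integral matchings would force, and then pinning the converse with the handshake identity; the remaining steps are routine.
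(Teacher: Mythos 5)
Your proof is correct, but it takes a genuinely different route from the paper's. The paper argues in two lines: each server locally sums its streams so that without loss of generality $K=S$ with one stream per server, which places the problem in the symmetric family of Corollary~\ref{cor:symmetric} with $\alpha=1$, and the values $C_1^{(2)}=C_1^{(S)}=2/S$ are then read off the closed-form expression \eqref{eq:symmetric_equiv_1} (whose derivation lives in Appendix~\ref{proof:symmetric}). You instead solve the two linear programs directly: for $C^{\fullent}$ you observe that the singleton constraints collapse to $\Delta_s\geq 1/2$ and the Holevo constraint $\sum_s\Delta_s\geq 1$ is slack for $S\geq 2$; for $C^{(2)}$ you route through Corollary~\ref{cor:bipartite} and identify $C^{\unent}(\widetilde{\mathcal{W}})$ as a minimum fractional edge cover on $K_S$, pinned from below by the handshake identity ($2\sum\Delta\geq S$) and from above by the uniform weights $\Delta_{\{i,j\}}=1/(S-1)$, neatly avoiding the parity issues an integral matching would raise. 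Your route is self-contained and more illuminating: it bypasses the symmetric-case appendix computation entirely, makes the value $S/2$ qudits per computation transparent as a degree/edge-cover phenomenon, and the uniform optimizer mirrors the achievable structure (2-sum protocols spread evenly over all pairs); the paper's route buys brevity once the symmetric machinery is in place, and its local-summation reduction is a reusable trick. One point in your favor worth keeping: you explicitly flag the hypothesis that every server holds at least one stream, which is genuinely needed for the value $2/S$ (a dataless server can be assigned $\Delta_s=0$, making the capacity $2/S'$ with $S'$ the number of data-holding servers) and is only implicit in the paper's reduction to $K=S$.
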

In other words, if no data stream is replicated across more than one server, then genuine multiparty entanglement (between more than $2$ parties) is not needed, i.e., the fully-entangled capacity is equal to the $2$-party-entangled capacity. Together with Corollary \ref{cor:bipartite}, this implies that $2$-sum protocol based schemes suffice to achieve the fully-entangled capacity in this case.

\begin{proof}

Since each server can locally add the data streams and regard the sum as one data stream, it suffices to consider $K=S$ data streams such that each server has a unique data stream. The reduced setting belongs to the symmetric settings specified in Corollary \ref{cor:symmetric} with $S$ servers and $\alpha=1$. It can then be verified by  \eqref{eq:symmetric_equiv_1} that $C_1^{(2)} = C_1^{(S)} = 2/S$.
\end{proof}

\begin{corollary}[$3$-party entanglement is unnecessary] \label{cor:tripartite}
Given any $\Sigma$-QMAC problem with data replication map $\mathcal{W}$ and entanglement distribution map $\mathcal{E}$ that identifies a $3$-party clique, say $\mathcal{E}(t) = \{s_1,s_2,s_3\}$ for some $t\in[T]$, consider another entanglement distribution map $\mathcal{E}'$, which is created by first making a copy of $\mathcal{E}$, and then replacing the $3$-party clique $\{s_1,s_2,s_3\}$ with three $2$-party cliques $\{s_1,s_2\}, \{s_1,s_3\}$ and $\{s_2,s_3\}$. Then we always have $C(\mathcal{W},\mathcal{E}) = C(\mathcal{W},\mathcal{E}')$.
\end{corollary}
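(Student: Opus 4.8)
My plan is to work entirely at the level of the linear program in Theorem~\ref{thm:main}. Since $C=\big(\min_{{\bm\Delta}\in\mathcal D}\sum_{t,s}\Delta_{t,s}\big)^{-1}$, it suffices to prove that the two problems have the same minimum total download cost, $D(\mathcal E)=D(\mathcal E')$, where $D(\cdot)$ is the optimal value of the program with feasible region \eqref{eq:region_clique}. The structural observation that makes this local is that $\mathcal D(\mathcal W,\mathcal E)$ and $\mathcal D(\mathcal W,\mathcal E')$ share identical variables and identical summands for every clique \emph{except} the one being modified, and that in the $k$th constraint a clique $\mathcal E(t)$ enters only through its \emph{profile} $A:=\mathcal E(t)\cap\mathcal W(k)\subseteq\{s_1,s_2,s_3\}$. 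Writing $a=\Delta_{t,s_1}$, $b=\Delta_{t,s_2}$, $c=\Delta_{t,s_3}$ and $\sigma=a+b+c$, the contribution of the $3$-clique to a constraint with profile $A$ is $g(A)=\min\{\sigma,\,2\sum_{s\in A}\Delta_{t,s}\}$. The whole corollary thus reduces to a ``gadget equivalence'': (i) any cost allocation to the three $2$-cliques $\{s_1,s_2\},\{s_1,s_3\},\{s_2,s_3\}$ is dominated, profile by profile, by a single $3$-clique of no greater total cost; and (ii) conversely $g(\cdot)$ can be reproduced by three $2$-cliques of equal total cost.

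For direction (i), which yields $D(\mathcal E)\le D(\mathcal E')$, I would take an optimal ${\bm\Delta}'$ for $\mathcal E'$, keep all other cliques unchanged, and collapse the three $2$-clique allocations into $3$-clique costs by summing each server's shares, e.g.\ $a=\Delta'_{\{s_1,s_2\},s_1}+\Delta'_{\{s_1,s_3\},s_1}$, and likewise for $b,c$; this preserves the total cost. Feasibility is then immediate without any case analysis: each $2$-clique contribution $h_e(A)$ admits the two bounds $h_e(A)\le\mathrm{cost}(e)$ and $h_e(A)\le 2\sum_{s\in A\cap e}\Delta'_{e,s}$, and summing the first over the three cliques gives $\sigma$ while summing the second gives $2\sum_{s\in A}\Delta_{t,s}$, so $\sum_e h_e(A)\le g(A)$ for every profile $A$, i.e.\ every constraint is at least as well satisfied after collapsing.

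For direction (ii), which yields $D(\mathcal E')\le D(\mathcal E)$, I would start from an optimal ${\bm\Delta}$ for $\mathcal E$ and build a $2$-clique allocation of equal total cost $\sigma$ that reproduces $g(\cdot)$ exactly. Here I expect a dichotomy to be unavoidable. In the \emph{balanced} case ($a,b,c\le\sigma/2$, i.e.\ each server's cost is at most the sum of the other two), I would use balanced $2$-cliques with common per-server costs $P=(a+b-c)/2$, $Q=(a-b+c)/2$, $R=(-a+b+c)/2$, which are nonnegative precisely in this case and satisfy $P+Q+R=\sigma/2$; each such clique contributes its full cost $2P,2Q,2R$ exactly when it sees a data-holding server, and a direct check shows the three contributions sum to $2a,2b,2c$ on the singletons and to $\sigma$ on every larger profile, matching $g$. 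In the \emph{dominated} case (say $a>b+c$, where $R$ would go negative), I would instead drop the clique $\{s_2,s_3\}$ and route the surplus of $s_1$ through the other two, e.g.\ costs $(a-c,\,b)$ on $\{s_1,s_2\}$ and $(c,\,c)$ on $\{s_1,s_3\}$; since $a-c\ge b$ and $a-c\ge 0$, a profile-by-profile check again reproduces $g$ at total cost $\sigma$.

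Assembling the two directions, each produces a feasible point of the other program with the same objective value, so $D(\mathcal E)=D(\mathcal E')$ and hence $C(\mathcal W,\mathcal E)=C(\mathcal W,\mathcal E')$. I expect the only delicate step to be direction (ii): one must exhibit an explicit allocation and verify agreement with $g$ across all eight profiles, and the balanced-versus-dominated split appears forced because the symmetric allocation $P,Q,R$ is valid exactly when no single server's cost exceeds $\sigma/2$. Direction (i), by contrast, is clean and requires only the two generic upper bounds on each $2$-clique's contribution.
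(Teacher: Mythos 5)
Your proposal is correct and follows essentially the same route as the paper: the nontrivial direction rests on exactly the paper's construction, assigning each $2$-clique $\{s_i,s_j\}$ the balanced per-server cost $(\Delta_i+\Delta_j-\Delta_k)/2$ (your $P,Q,R$) and verifying the LP constraints profile by profile via Theorem \ref{thm:main}. Your two deviations are harmless and check out: the paper proves the easy direction operationally (any scheme for $\mathcal{E}'$ is a valid scheme under $\mathcal{E}$) where you collapse allocations inside the LP, and it handles the dominated case $a>b+c$ by a cost- and feasibility-preserving re-balancing of the $3$-clique costs to restore the triangle inequality (so the balanced gadget always applies) rather than by your explicit asymmetric gadget with costs $(a-c,\,b)$ on $\{s_1,s_2\}$ and $(c,\,c)$ on $\{s_1,s_3\}$, which indeed reproduces $g$ on all eight profiles.
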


\noindent The proof of Corollary \ref{cor:tripartite} is relegated to Appendix \ref{proof:tripartite}. It shows that any $3$-party entanglement can be substituted by $2$-party entanglements established by the same three servers, for the purpose of $\Sigma$-QMAC capacity (not necessarily for other function computations). Note that it immediately follows that $C^{(3)}(\mathcal{W}) = C^{(2)}(\mathcal{W})$ for any data replication map $\mathcal{W}$.

\begin{corollary}[Necessity of multiparty entanglements]  \label{cor:S_partite_necessary}
For any $S\geq 2$ and $S\not=3$, there exists a data replication map $\mathcal{W}$ with $S$ servers for which $C^{(S)}(\mathcal{W}) > C^{(S-1)}(\mathcal{W})$.	
\end{corollary}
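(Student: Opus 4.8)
The plan is to treat the two parities of $S$ separately, in each case exhibiting a data replication map $\mathcal{W}$ for which the fully-entangled capacity attains its maximal value $C^{(S)}(\mathcal{W}) = 1$ while $C^{(S-1)}(\mathcal{W}) < 1$ strictly; the strict inequality $C^{(S)} > C^{(S-1)}$ then follows without invoking any monotonicity in $\beta$. The excluded value $S = 3$ is precisely the case where this is impossible, since Corollary \ref{cor:tripartite} forces $C^{(3)}(\mathcal{W}) = C^{(2)}(\mathcal{W})$ for every $\mathcal{W}$.

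For even $S$ the construction is symmetric and the claim follows directly from Corollary \ref{cor:symmetric}. I would take $\alpha = S/2$, so that $\mathcal{W}$ ranges over all $\binom{S}{S/2}$ half-subsets of $[S]$. Evaluating \eqref{eq:symmetric} at $\beta = S$ (so $T=1$) collapses the sum to its single term $\gamma = \alpha$ and yields $C^{(S)} = \min(S, 2\alpha)/S = 1$. Evaluating \eqref{eq:symmetric_equiv_2} at $\beta = S-1$ leaves only the term $\gamma = S/2 - 1$ and gives $C^{(S-1)} = 1 - \tfrac{1}{2(S-1)}$, which is strictly below $1$ for every even $S \geq 2$; this also disposes of $S=2$, where $\alpha = 1$.

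For odd $S = 2m+1$ the symmetric maps are of no use --- one verifies via Corollary \ref{cor:symmetric} that $C_\alpha^{(S)} = C_\alpha^{(S-1)}$ for every $\alpha$ --- so I would break the symmetry by singling out server $S$. Let the data streams be all $(m+1)$-subsets of $[S-1]$ together with all sets $\{S\}\cup P$, where $P$ ranges over the $(m-1)$-subsets of $[S-1]$. Consider the download tuple $\Delta^\ast$ with $\Delta^\ast_s = \tfrac{1}{2(m+1)}$ for $s\in[S-1]$ and $\Delta^\ast_S = \tfrac{1}{m+1}$: its total weight is $1$, and one checks that every data-stream weight $\sum_{s\in\mathcal{W}(k)}\Delta^\ast_s$ equals $\tfrac12$ exactly (both $(m+1)\cdot\tfrac{1}{2(m+1)}$ and $(m-1)\cdot\tfrac{1}{2(m+1)}+\tfrac{1}{m+1}$ equal $\tfrac12$). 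Since the fully-entangled region \eqref{eq:region} always imposes $\sum_s \Delta_s \geq 1$, the point $\Delta^\ast$ is optimal and $C^{(S)} = 1$.

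The crux is the matching strict converse $C^{(S-1)} < 1$ for this odd construction, obtained by applying Theorem \ref{thm:main} to $\mathcal{E}^{(S-1)}$, whose cliques are the $S$ sets $[S]\setminus\{i\}$. The cleanest route is a dual certificate: using $\min\{A,B\}\leq \theta A + (1-\theta)B$ on each clique term of the constraint for $k$, and then averaging the constraints with weights $\mu_k \geq 0$, $\sum_k \mu_k = 1$, gives $1 \leq \sum_{t,s}\Delta_{t,s}\big(\sum_k \mu_k[\theta_{t,k} + 2(1-\theta_{t,k})\mathbf{1}(s\in\mathcal{W}(k))]\big)$; if the bracketed coefficient can be kept $\leq c$ for all $(t,s)$ with some $c < 1$, this upgrades the trivial bound into a total download of at least $1/c > 1$, hence $C^{(S-1)} \leq c < 1$. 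Producing such weights uniformly in $m$ is the main obstacle. Equivalently, exploiting the $\mathrm{Sym}([S-1])$ symmetry fixing server $S$, I would symmetrize an optimal download tuple to three scalars --- the value $a$ on the clique $[S-1]$, and the values $b$ and $e$ on the ordinary servers and on server $S$ within each remaining clique --- and solve the resulting three-variable linear program, whose optimum I expect to exceed $1$ for all odd $S\geq 5$ (for $S=5$ it equals $\tfrac{12}{11}$, giving $C^{(4)} = \tfrac{11}{12}$).
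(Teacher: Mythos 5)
Your even case is complete and coincides with the paper's: the paper also takes the symmetric map with $\alpha=S/2$ and concludes via \eqref{eq:beta_star} that $\beta^*=S$, so $C_{S/2}^{(S)}>C_{S/2}^{(S-1)}$; your explicit evaluations $C^{(S)}=1$ and $C^{(S-1)}=1-\tfrac{1}{2(S-1)}$ from \eqref{eq:symmetric} and \eqref{eq:symmetric_equiv_2} check out. For odd $S$ your construction differs from the paper's (the paper uses $\mathcal{W}=\binom{[S-1]}{S-2}\cup\{\{S\}\}$, for which $C^{(S)}\geq\tfrac{2S-4}{2S-3}$ while $C^{(S-1)}\leq\tfrac{2S-5}{2S-4}$), and your achievability half is fine: the tuple $\Delta^*$ meets every constraint of \eqref{eq:region} with equality and total weight $1$, so $C^{(S)}=1$. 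The symmetrization step is also legitimate in principle, since the region of Theorem \ref{thm:main} is convex (superlevel sets of concave min-of-linear functions) and invariant under $\mathrm{Sym}([S-1])$ for your $\mathcal{W}$ and $\mathcal{E}^{(S-1)}$, so averaging over the group preserves feasibility and total cost.

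The genuine gap is the converse $C^{(S-1)}(\mathcal{W})<1$ for general odd $S$, which you explicitly leave as an expectation: you neither exhibit the dual weights $(\mu_k,\theta_{t,k})$ uniformly in $m$, nor solve the three-variable LP beyond $S=5$. Moreover, even the $S=5$ computation as described is one-sided: evaluating a candidate vertex (with the min terms resolved in one sign region) certifies only an \emph{upper} bound of $\tfrac{12}{11}$ on the LP value, whereas the corollary needs a \emph{lower} bound exceeding $1$, i.e., a dual certificate valid across all resolutions of the min terms. This missing step is exactly where the paper's proof does its real work: in Appendix \ref{proof:asymmetry} it aggregates the $S-1$ constraints indexed by $\mathcal{B}\in\binom{[S-1]}{S-2}$ together with $(S-3)$ copies of the constraint for the singleton stream $\{S\}$, and a counting identity (each $2\Delta_{\mathcal{A}\cup\{S\},s}$ appears $S-2$ times) collapses the sum into $(2S-5)\sum_{t,s}\Delta_{t,s}\geq 2S-4$, giving the unconditional bound $C^{(S-1)}\leq\tfrac{2S-5}{2S-4}$. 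So an aggregation certificate of the kind you hoped for does exist, but the paper engineered its data replication map so that the certificate comes out cleanly; to finish your variant you would need to produce the analogous family of weights (or solve your symmetrized LP with a matching dual) for all odd $S\geq 5$, and until then the odd case of the corollary is not proved.
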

\noindent The proof of Corollary \ref{cor:S_partite_necessary} is relegated to Appendix \ref{proof:asymmetry}. The corollary states that given the number of servers $S$, if $S\geq 2$ and $S\not=3$, there is a data replication map $\mathcal{W}$ with $S$ servers for which the fully-entangled capacity $C^{\fullent}(\mathcal{W})$ is strictly greater than the $(S-1)$-party-entangled capacity $C^{(S-1)}(\mathcal{W})$. In other words, in order to attain the maximal DSC gain for the $\Sigma$-QMAC with this data replication map, one must allow the entanglement to be established among all $S$ servers. A concrete example for this corollary can be found in Section \ref{sec:S_partite_necessary}.

\section{Examples}

In this section we present examples to illustrate the results.

\subsection{Example: Achieving Fully-Entangled Capacity $C^{\fullent}$ for the Data Replication Map of Fig. \ref{fig:abbccad}}
Consider the data replication map $\mathcal{W}$ as in Fig. \ref{fig:abbccad}. Let us  sketch the solution for the fully-entangled setting, i.e., $\mathcal{E} = \{\{1,2,3,4\}\}$. There are $4$ data-streams, denoted by ${\sf A}, {\sf B}, {\sf C}$ and ${\sf D}$. For intuitive notation, let us use subscripts ${\sf ab}$, ${\sf ac}$, ${\sf bc}$, ${\sf d}$  to represent $1$, $2$, $3$, $4$, respectively, reflecting the data-streams available at the corresponding servers. For example,  we indicate Server $\mathcal{S}_1$ as $\mathcal{S}_{\sf ab}$, making it explicit that this server has data-streams ${\sf A},{\sf B}$. Similarly, Server $\mathcal{S}_2$ has data-streams ${\sf A}, {\sf C}$, and is thus denoted as $\mathcal{S}_{\sf ac}$. The other two servers are $\mathcal{S}_{\sf bc}$ and $\mathcal{S}_{\sf d}$ accordingly. The corresponding normalized download costs are denoted as $\Delta_{\sf ab}, \Delta_{\sf ac}, \Delta_{\sf bc}$ and $\Delta_{\sf d}$.  With these notations,   the feasible region in Corollary \ref{cor:unconstrained} is,
  \begin{align} \label{eq:region_ex}
    \mathcal{D}^{\fullent} = \left\{ (\Delta_{\sf ab},\Delta_{\sf ac},\Delta_{\sf bc},\Delta_{\sf d}) \in \mathbb{R}_+^4 \left|
    \begin{array}{l}
      \Delta_{\sf ab}+\Delta_{\sf ac}+\Delta_{\sf bc}+\Delta_{\sf d} \geq 1, \\
      \Delta_{\sf ab}+\Delta_{\sf ac}\geq 1/2,\\
      \Delta_{\sf ab}+\Delta_{\sf bc}\geq 1/2,\\        
      \Delta_{\sf ac}+\Delta_{\sf bc}\geq 1/2,\\
      \Delta_{\sf d}\geq 1/2.
    \end{array}
    \right.\right\}.
  \end{align}
From the converse standpoint, let us note informally that of the $5$ bounds that appear in \eqref{eq:region_ex}, the first bound says that the normalized \emph{total} download cost is at least $1$ qudit/dit. This is because there is no entanglement between the servers and Alice, so it follows from the Holevo bound \cite{holevo1973bounds} that one qudit cannot carry more than one dit of information. The remaining four bounds are typical cut-set arguments, by separating the parties (servers and Alice) into two groups such that the servers that know one of the data streams are collectively regarded as the transmitter, while the other servers join Alice as the receiver. Since now entanglement can be established between the transmitter and the receiver, it follows from (e.g., \cite{holevo1998capacity, DSC4}) that the one qudit can carry at most $2$ dits of information, yielding the factor $1/2$ on the RHS of the other four bounds.

Minimizing $\Delta_{\sf ab}+\Delta_{\sf ac}+\Delta_{\sf bc}+\Delta_{\sf d}$ over $\mathcal{D}$ leads to a linear program with optimal value $5/4$, thus establishing the fully-entangled capacity for this example as $C^{\fullent}=4/5$. To show the achievability of $4/5$, we specify a coding scheme that allows Alice to recover $L=4$ instances of  the desired sums, denoted as ${\sf A}_{[4]}+{\sf B}_{[4]}+{\sf C}_{[4]}+{\sf D}_{[4]}$, based on an $(N=5)$-sum box in $\mathbb{F}_d$ so that in the $5$-sum box Server $\mathcal{S}_{\sf ab}$ controls $1$ pair of inputs $x_1,x_6$; Server $\mathcal{S}_{\sf ac}$ controls $1$ pair of inputs $x_2,x_7$; Server $\mathcal{S}_{\sf bc}$ controls $1$ pair of inputs $x_3,x_8$; and Server $\mathcal{S}_{\sf d}$ controls $2$ pairs of inputs $x_4,x_5,x_9,x_{10}$.
Let us denote the transfer matrix corresponding to the $5$-sum box as ${\bf M}$, which can be chosen freely in $\mathbb{F}_d^{5\times 10}$, as long as ${\bf M}{\bf J}_{10} {\bf M}^T = {\bf 0}_{5\times 5}$. Denote the input of the box as ${\bf x}=[x_1,x_2,\cdots, x_{10}]^T$ and the output of the box as ${\bf y} \in \mathbb{F}_d^{5\times 1}$, and thus ${\bf y} = {\bf M}{\bf x}$ can be measured at Alice. Since Server $\mathcal{S}_{\sf ab}$ knows only ${\sf A}$ and ${\sf B}$, $(x_1,x_6)$ must be determined by ${\sf A}_{[4]}$ and ${\sf B}_{[4]}$. Similarly, $(x_2, x_7)$ must be determined by ${\sf A}_{[4]}$ and ${\sf C}_{[4]}$; $(x_3, x_8)$ must be determined by ${\sf B}_{[4]}$ and ${\sf C}_{[4]}$; $(x_4, x_5, x_9, x_{10})$ must be determined by ${\sf D}_{[4]}$. Aside from the use of the $5$-sum box, our scheme uses linear precoding and decoding at the servers and the receiver. Specifically, the mapping from $({\sf A}_{[4]}, {\sf B}_{[4]}, {\sf C}_{[4]}, {\sf D}_{[4]})$ to ${\bf x}$ is linear, and the mapping from ${\bf y}$ to the desired sum ${\sf A}_{[4]}+{\sf B}_{[4]}+{\sf C}_{[4]}+{\sf D}_{[4]}$ is also linear. As an example, the precoding at Server $\mathcal{S}_{\sf ab}$ is represented as $[x_1,x_6]^T = V_{\sf ab}^{\sf a}{\sf A}_{[4]} + V_{\sf ab}^{\sf b}{\sf B}_{[4]}$, where $V_{\sf ab}^{\sf a}, V_{\sf ab}^{\sf b}$ are precoding matrices with appropriate size and elements chosen in $\mathbb{F}_d$. The precoding matrices at the other servers are denoted in a similar fashion. The decoding at Alice is represented as $V_{\dec} {\bf y}$, which must be equal to ${\sf A}_{[4]} + {\sf B}_{[4]}+ {\sf C}_{[4]} + {\sf D}_{[4]}$. 
Fig. \ref{fig:coding_example} illustrates the precoding and decoding operations.
\begin{figure}[h]
\center
\begin{tikzpicture}
  \node[] (SA) at (0,4.5) {\footnotesize ${\sf A}_{[4]}$};
  \node[] (SB) at (2,4.5) {\footnotesize ${\sf B}_{[4]}$};
  \node[] (SC) at (4,4.5) {\footnotesize ${\sf C}_{[4]}$};
  \node[] (SD) at (6,4.5) {\footnotesize ${\sf D}_{[4]}$};
  
  \node[rectangle, draw, fill=white, minimum height = 0.6cm, minimum width =0.7cm] (VABA) at (-0.5,3.2) {\footnotesize $V_{\sf ab}^{\sf a}$};
  \node[rectangle, draw, fill=white, minimum height = 0.6cm, minimum width =0.7cm] (VABB) at (0.3,3.2) {\footnotesize $V_{\sf ab}^{\sf b}$};
  \node[rectangle, draw, fill=white, minimum height = 0.6cm, minimum width =0.7cm] (VACA) at (1.5,3.2) {\footnotesize $V_{\sf ac}^{\sf a}$};
  \node[rectangle, draw, fill=white, minimum height = 0.6cm, minimum width =0.7cm] (VACC) at (2.3,3.2) {\footnotesize $V_{\sf ac}^{\sf c}$};
  \node[rectangle, draw, fill=white, minimum height = 0.6cm, minimum width =0.7cm] (VBCB) at (3.5,3.2) {\footnotesize $V_{\sf bc}^{\sf b}$};
  \node[rectangle, draw, fill=white, minimum height = 0.6cm, minimum width =0.7cm] (VBCC) at (4.3,3.2) {\footnotesize $V_{\sf bc}^{\sf c}$};
  \node[rectangle, draw, fill=white, minimum height = 0.6cm, minimum width =0.7cm] (VDD) at (6,3.2) {\footnotesize $V_{{\sf {\sf d}}}^{\sf d}$};

  \draw [thick] ($(SA.south)+(-0,0)$)--($(VABA.north)+(-0,0)$) node[pos=0.4,left] {};
  \draw [thick] ($(SA.south)+(-0,0)$)--($(VACA.north)+(-0,0)$) node[pos=0.4,left=0.1] {};
  \draw [thick] ($(SB.south)+(-0,0)$)--($(VABB.north)+(-0,0)$) node[pos=0.4,right=0.2] {};
  \draw [thick] ($(SB.south)+(-0,0)$)--($(VBCB.north)+(-0,0)$) node[pos=0.4,left=0.1] {};
  \draw [thick] ($(SC.south)+(-0,0)$)--($(VACC.north)+(-0,0)$) node[pos=0.4,right=0.2]{};
  \draw [thick] ($(SC.south)+(-0,0)$)--($(VBCC.north)+(-0,0)$) node[pos=0.4,right=0] {};
  \draw [thick] ($(SD.south)+(-0,0)$)--($(VDD.north)+(-0,0)$) node[pos=0.4,left] {};

  \node[rectangle, draw, fill=white, minimum width=6cm, minimum height = 1.2cm, align=center] (M) at (3,0.5) {$~$\\[-0.25cm]\scalebox{0.81}{${\bf y}={\bf M}{\bf x}$}};

  \node (M1) at ($(M.north)+(-2.7,0.3)$) {$\oplus$};
  \node (M2) at ($(M.north)+(-2.1,0.3)$) {$\oplus$};
  \node (M3) at ($(M.north)+(-1.5,0.3)$) {$\oplus$};
  \node (M4) at ($(M.north)+(-0.9,0.3)$) {$\oplus$};
  \node (M5) at ($(M.north)+(-0.3,0.3)$) {$\oplus$};
  \node (M6) at ($(M.north)+(+0.3,0.3)$) {$\oplus$};
  \node (M7) at ($(M.north)+(+0.9,0.3)$) {$\oplus$};
  \node (M8) at ($(M.north)+(+1.5,0.3)$) {$\oplus$};
  \node (M9) at ($(M.north)+(+2.1,0.3)$) {$\oplus$};
  \node (M10) at ($(M.north)+(+2.7,0.3)$) {$\oplus$};
  
  \draw [] ($(M1.south)+(0,0.1)$)--($(M1.south)+(0,-0.08)$);
  \draw [] ($(M2.south)+(0,0.1)$)--($(M2.south)+(0,-0.08)$);
  \draw [] ($(M3.south)+(0,0.1)$)--($(M3.south)+(0,-0.08)$);
  \draw [] ($(M4.south)+(0,0.1)$)--($(M4.south)+(0,-0.08)$);
  \draw [] ($(M5.south)+(0,0.1)$)--($(M5.south)+(0,-0.08)$);
  \draw [] ($(M6.south)+(0,0.1)$)--($(M6.south)+(0,-0.08)$);
  \draw [] ($(M7.south)+(0,0.1)$)--($(M7.south)+(0,-0.08)$);
  \draw [] ($(M8.south)+(0,0.1)$)--($(M8.south)+(0,-0.08)$);
  \draw [] ($(M9.south)+(0,0.1)$)--($(M9.south)+(0,-0.08)$);
  \draw [] ($(M10.south)+(0,0.1)$)--($(M10.south)+(0,-0.08)$);

  \node[] at ($(M1.south)+(0,-0.2)$) {\tiny $x_1$};
  \node[] at ($(M2.south)+(0,-0.2)$) {\tiny $x_2$};
  \node[] at ($(M3.south)+(0,-0.2)$) {\tiny $x_3$};
  \node[] at ($(M4.south)+(0,-0.2)$) {\tiny $x_4$};
  \node[] at ($(M5.south)+(0,-0.2)$) {\tiny $x_5$};
  \node[] at ($(M6.south)+(0,-0.2)$) {\tiny $x_6$};
  \node[] at ($(M7.south)+(0,-0.2)$) {\tiny $x_7$};
  \node[] at ($(M8.south)+(0,-0.2)$) {\tiny $x_8$};
  \node[] at ($(M9.south)+(0,-0.2)$) {\tiny $x_9$};
  \node[] at ($(M10.south)+(0,-0.2)$) {\tiny $x_{10}$};
  
   \draw [] ($(VABA.south)+(-0.1,0)$)--($(M1.north)+(0,-0.1)$);
   \draw [] ($(VABA.south)+(+0.1,0)$)--($(M6.north)+(0,-0.1)$);
   \draw [] ($(VABB.south)+(-0.1,0)$)--($(M1.north)+(0,-0.1)$);
   \draw [] ($(VABB.south)+(+0.1,0)$)--($(M6.north)+(0,-0.1)$);
   \draw [] ($(VACA.south)+(-0.1,0)$)--($(M2.north)+(0,-0.1)$);
   \draw [] ($(VACA.south)+(+0.1,0)$)--($(M7.north)+(0,-0.1)$);
   \draw [] ($(VACC.south)+(-0.1,0)$)--($(M2.north)+(0,-0.1)$);
   \draw [] ($(VACC.south)+(+0.1,0)$)--($(M7.north)+(0,-0.1)$);
   \draw [] ($(VBCB.south)+(-0.1,0)$)--($(M3.north)+(0,-0.1)$);
   \draw [] ($(VBCB.south)+(+0.1,0)$)--($(M8.north)+(0,-0.1)$);
   \draw [] ($(VBCC.south)+(-0.1,0)$)--($(M3.north)+(0,-0.1)$);
   \draw [] ($(VBCC.south)+(+0.1,0)$)--($(M8.north)+(0,-0.1)$);
   \draw [] ($(VDD.south)+(-0.15,0)$)--($(M4.north)+(0,-0.1)$);
   \draw [] ($(VDD.south)+(-0.05,0)$)--($(M5.north)+(0,-0.1)$);
   \draw [] ($(VDD.south)+(+0.05,0)$)--($(M9.north)+(0,-0.1)$);
   \draw [] ($(VDD.south)+(+0.15,0)$)--($(M10.north)+(0,-0.1)$);
 
 \def\xshift{1.8}
 
 \begin{scope}[shift={(\xshift,0)}]
  \node[rectangle, draw, fill=white] (V) at (3,-0.81) {\footnotesize $V_{\dec}:4\times 5$};

  \draw [] ($(M.south)+(-0.8+\xshift,0)$)--($(V.north)+(-0.8,0)$);
  \draw [] ($(M.south)+(-0.4+\xshift,0)$)--($(V.north)+(-0.4,0)$);
  \draw [] ($(M.south)+(0+\xshift,0)$)--($(V.north)+(0,0)$);
  \draw [] ($(M.south)+(0.4+\xshift,0)$)--($(V.north)+(0.4,0)$);
  \draw [] ($(M.south)+(0.8+\xshift,0)$)--($(V.north)+(0.8,0)$);
 
 \node[] at ($(M.south)+(-1.6,0.2)$) {\tiny (Comm. Cost: $5$ qudits)};
\node[] at ($(M.south)+(-0.8+\xshift,0.2)$) {\tiny $y_1$};
\node[] at ($(M.south)+(-0.4+\xshift,0.2)$) {\tiny $y_2$};
\node[] at ($(M.south)+(0+\xshift,0.2)$) {\tiny $y_3$};
\node[] at ($(M.south)+(0.4+\xshift,0.2)$) {\tiny $y_4$};
\node[] at ($(M.south)+(0.8+\xshift,0.2)$) {\tiny $y_5$};

  \node[] (F) at (3,-1.75) {\footnotesize ${\sf A}_{[4]}+{\sf B}_{[4]}+{\sf C}_{[4]}+{\sf D}_{[4]}$};

  \draw [] ($(V.south)+(-0.3,0)$)--($(F.north)+(-0.3,0)$);
  \draw [] ($(V.south)+(-0.1,0)$)--($(F.north)+(-0.1,0)$);
  \draw [] ($(V.south)+(0.1,0)$)--($(F.north)+(0.1,0)$);
  \draw [] ($(V.south)+(0.3,0)$)--($(F.north)+(0.3,0)$);
\end{scope}
  
\end{tikzpicture}
\caption{Precoding and decoding structure to achieve fully-entangled capacity $C^{\fullent}$ for the data replication map of Fig. \ref{fig:abbccad}.}
\label{fig:coding_example}
\end{figure}

For the general proof of achievability in Section \ref{proof:main_achievability} we  only need to show the existence of appropriate encoders and decoders, so the specific choices will not be explicitly stated. For this example, however, let us explicitly write a  $5\times 10$ transfer matrix ${\bf M}$, the precoding matrices $V_{\sf ab}^{\sf a}, V_{\sf ab}^{\sf b}, \cdots, V_{\sf d}^{\sf d}$ and the decoding matrix $V_{\dec}$, such that they yield the desired relationships.

\noindent {\bf The choice of ${\bf M}$:}
The transfer function of the $5$-sum box is ${\bf M}\in\mathbb{F}_d^{5\times 10}$ specified as,
{\small
\begin{align} \label{eq:box_example}
  {\bf M} = \begin{bmatrix}
    1 & 0 & 0 & 0 & 0 & 0 & 1 & 1 & 0 & 1\\
    0 & 1 & 0 & 0 & 0 & 1 & 0 & 0 & 0 & 1\\
    0 & 0 & 1 & 0 & 0 & 1 & 0 & 0 & 1 & 0\\
    0 & 0 & 0 & 1 & 0 & 0 & 0 & 1 & 1 & 1\\
    0 & 0 & 0 & 0 & 1 & 1 & 1 & 0 & 1 & 0
  \end{bmatrix}.
\end{align}
}It is easily verified that ${\bf M}{\bf J}_{10}{\bf M}^T = {\bf 0}_{5\times 5}$ and thus it is a valid $5$-sum box.\footnote{In fact, any ${\bf M}\in\mathbb{F}_d^{N\times 2N}$ of the form $[{\bf I}_N, {\bf S}_N]$ where ${\bf S}_N = {\bf S}_N^T$ satisfies ${\bf M}{\bf J}_{2N}{\bf M}={\bf 0}_{N\times N}$ and is therefore a valid $N$-sum box \cite{Allaix_N_sum_box}.} 

\noindent {\bf The choice of $V_{\dec}$:}
To the output ${\bf y}\in\mathbb{F}_d^{5\times 1}$, Alice applies a $4\times 5$ decoding matrix $V_{\dec}$ specified as,
{\small
\begin{align}
V_{\dec}&=
\begin{bmatrix}
0&1&0&1&1\\
0&0&0&1&1\\
1&0&0&0&1\\
0&0&1&1&1
\end{bmatrix}.
\end{align}
}

\noindent {\bf The choice of the precoding matrices:} Due to the data dependency, the input vector ${\bf x}\in\mathbb{F}_d^{10\times 1}$ can be written as,
\begin{align}
{\bf x}&=\begin{bmatrix}x_1\\ x_2\\ \vdots\\ x_{10}\end{bmatrix}=\begin{bmatrix}V_{{\sf ab},1}^{\sf a}\\ V_{{\sf ac},1}^{\sf a}\\ {\bf 0}_{1\times 4}\\ {\bf 0}_{2\times 4}\\ V_{{\sf ab},2}^{\sf a}\\V_{{\sf ac},2}^{\sf a}\\{\bf 0}_{1\times 4}\\ {\bf 0}_{2\times 4}\end{bmatrix}
\underbrace{\begin{bmatrix}{\sf A}_1\\{\sf A}_2\\{\sf A}_3\\{\sf A}_4\end{bmatrix}}_{{\sf A}_{[4]}}+\begin{bmatrix}V_{{\sf ab},1}^{\sf b}\\{\bf 0}_{1\times 4}\\ V_{{\sf bc},1}^{\sf b}\\ {\bf 0}_{2\times 4}\\ V_{{\sf ab},2}^{\sf b}\\{\bf 0}_{1\times 4}\\V_{{\sf bc},2}^{\sf b}\\ {\bf 0}_{2\times 4}\end{bmatrix}
\underbrace{\begin{bmatrix}{\sf B}_1\\{\sf B}_2\\{\sf B}_3\\{\sf B}_4\end{bmatrix}}_{{\sf B}_{[4]}}
+\begin{bmatrix}{\bf 0}_{1\times 4}\\ V_{{\sf ac},1}^{\sf c}\\ V_{{\sf bc},1}^{\sf c}\\ {\bf 0}_{2\times 4}\\ {\bf 0}_{1\times 4}\\ V_{{\sf ac},2}^{\sf c}\\V_{{\sf bc},2}^{\sf c}\\ {\bf 0}_{2\times 4}\end{bmatrix}
\underbrace{\begin{bmatrix}{\sf C}_1\\{\sf C}_2\\{\sf C}_3\\{\sf C}_4\end{bmatrix}}_{{\sf C}_{[4]}}
+\begin{bmatrix} {\bf 0}_{3\times 4}\\ V_{{\sf {\sf d}},1}^{\sf d}\\V_{{\sf {\sf d}},2}^{\sf d}\\ {\bf 0}_{3\times 4}\\ V_{{\sf {\sf d}},3}^{\sf d}\\ V_{{\sf {\sf d}},4}^{\sf d}\end{bmatrix}\underbrace{\begin{bmatrix}{\sf D}_1\\{\sf D}_2\\{\sf D}_3\\{\sf D}_4\end{bmatrix}}_{{\sf D}_{[4]}}
\end{align}
which indicates the precoding operations at each server with an encoding matrix $V_{-}^-$. For example, Server $\mathcal{S}_{\sf ab}$, precodes the $L\times 1 = 4\times 1$ vector of data stream ${\sf A}$ (denoted as ${\sf A}_{[4]}$) with the $2N_{\sf ab}\times L = 2\times 4$ precoding matrix $V_{\sf ab}^{\sf a}$, whose $i^{th}$ row is denoted by $V_{{\sf ab},i}^{\sf a}$. Similarly, Server $\mathcal{S}_{\sf ab}$ precodes data stream ${\sf B}$ with the $2\times 4$ precoding matrix $V_{\sf ab}^{\sf b}$. The precoded symbols are then mapped to the inputs controlled by  Server ${\sf ab}$, i.e., $x_1,x_6$, so that we have,
\begin{align}
\begin{bmatrix}x_1\\ x_6\end{bmatrix}&=V_{\sf ab}^{\sf a}{\sf A}_{[4]}+V_{\sf ab}^{\sf b}{\sf B}_{[4]}.
\end{align}
Each server similarly precodes the data streams available to it with its corresponding precoding matrices.  

The precoding matrices are now specified as,
{\small
\begin{align}\label{eq:precod}
  &\begin{bmatrix}
    V_{\sf ab}^{\sf a}\\V_{\sf ac}^{\sf a}
  \end{bmatrix}
  =
  \big(V_{\dec}{\bf M}_{(1,6,2,7)}\big)^{-1}, 
  ~\begin{bmatrix}
    V_{\sf ab}^{\sf b}\\V_{\sf bc}^{\sf b}
  \end{bmatrix}
  =
  \big(V_{\dec}{\bf M}_{(1,6,3,8)}\big)^{-1}, \notag\\
  &\begin{bmatrix}
    V_{\sf ac}^{\sf c}\\V_{\sf bc}^{\sf c}
  \end{bmatrix}
  =
  \big(V_{\dec}{\bf M}_{(2,7,3,8)}\big)^{-1}, 
  ~ V_{{\sf {\sf d}}}^{\sf d}
  =
  \big(V_{\dec}{\bf M}_{(4,5,9,10)}\big)^{-1}.
\end{align}
}
where ${\bf M}_{(i_1,i_2,\cdots, i_n)}$ is an $N\times n$ submatrix of ${\bf M}$ comprised of the $(i_1,i_2,\cdots, i_n)^{th}$ columns of ${\bf M}$.  It is easy to verify that $\det(V_{\dec}{\bf M}_{(1,6,2,7)})=\det(V_{\dec}{\bf M}_{(2,7,3,8)})=1 $ and $\det(V_{\dec}{\bf M}_{(1,6,3,8)})=\det(V_{\dec}{\bf M}_{(4,5,9,10)})=-1$, thus all $4$ inverses in \eqref{eq:precod} exist. 

\noindent {\bf Correctness:}
With all choices explicitly specified, it is similarly easy to verify that we have,
\begin{align}\label{eq:success}
V_{\dec}{\bf y}&=V_{\dec}{\bf M}{\bf x}={\sf A}_{[4]}+{\sf B}_{[4]}+{\sf C}_{[4]}+{\sf D}_{[4]}.
\end{align}
Thus, Alice is able to compute $4$ instances of the desired sum, with the total download cost of $5$ qudits. The coding scheme achieves the rate $4/5$ qudits/computation, matching the capacity of this $\Sigma$-QMAC setting.

\subsection{Example: Symmetric $\Sigma$-QMAC with $S=8$}\label{sec:symsqmac}
Setting $S=8$, for $\alpha,\beta \in \{1,2,\cdots,8\}$, we show the values of $C_{\alpha}^{(\beta)}$ in the following table according to Corollary \ref{cor:symmetric}.
\begin{table}[h]
\center
\caption{$C_{\alpha}^{(\beta)}$ for $S=8$.} \label{tab:S8}
\begin{tabular}{|c|c|c|c|c|c|c|c|c|}
\hline
\diagbox{$\alpha$}{$C_{\alpha}^{(\beta)}$}{$\beta$}
& 1 & 2 & 3 & 4 & 5 & 6 & 7 & 8 \\ \hline
1 & $1/8$ & $1/4$ & $1/4$ & $1/4$ & $1/4$ & $1/4$ & $1/4$ & $1/4$ \\ \hline
2 & $1/4$ & $13/28$ & $13/28$ & $1/2$ & $1/2$ & $1/2$ & $1/2$ & $1/2$ \\ \hline
3 & $3/8$ & $9/14$ & $9/14$ & $5/7$ & $5/7$ & $3/4$ & $3/4$ & $3/4$ \\ \hline
4 & $1/2$ & $11/14$ & $11/14$ & $61/70$ & $61/70$ & $13/14$ & $13/14$ & $1$ \\ \hline
5 & $5/8$ & $25/28$ & $25/28$ & $27/28$ & $27/28$ & $1$ & $1$ & $1$ \\ \hline
6 & $3/4$ & $27/28$ & $27/28$ & $1$ & $1$ & $1$ & $1$ & $1$ \\ \hline
7 & $7/8$ & $1$ & $1$ & $1$ & $1$ & $1$ & $1$ & $1$ \\ \hline
8 & $1$ & $1$ & $1$ & $1$ & $1$ & $1$ & $1$ & $1$ \\ \hline
\end{tabular}
\end{table}
The first column $(\beta=1)$ corresponds to the fully-unentangled capacities. The second column $(\beta=2)$ corresponds to the $2$-party-entangled capacities. The capacities in this column are achievable by the $2$-sum protocol. The last column $(\beta = 8)$ corresponds to the fully-entangled capacities. Comparing the columns corresponding to $\beta=1$ and $\beta=8$, note that the DSC gain is $2$ provided that the capacity does not exceed $1$ dit/qudit. Also, note that the bipartite entanglement (i.e., $\beta=2$) is in general not enough to achieve the maximal DSC gain.

\subsection{Example: Minimizing the Maximal Entanglement} \label{sec:ex_min_beta}
Maintaining entanglement across many parties tends to be increasingly challenging, as more parties become involved. So it is desirable to have smaller cliques without losing the capacity, motivating the problem of identifying entanglement distribution maps $\mathcal{E}$, with the maximal clique size as small as possible such that the capacity achieved with $\mathcal{E}$ is the same as $C^{\fullent}$.  Mathematically, the problem is to find $\beta^* \triangleq \min \{ \beta  \mid C^{(\beta)} = C^{\fullent} \}$ for a fixed data replication map.  For example, consider the data distribution $\mathcal{W}$ shown in Fig. \ref{fig:abbccad}. We can see from Table \ref{tab:abacbcd} that the smallest value of $\beta$ for this example is $4$, same as $S$, i.e., all servers need to be entangled, because even if every subset of $3$ of the $4$ servers has an entangled system, the capacity is still only $3/4$, which is still less than $C^{\fullent}=4/5$.
However, as evident from Table \ref{tab:S8}, where the data replication map is symmetric, it is in general not necessary to have all servers entangled in order to achieve the fully-entangled capacity $C^{\fullent}$. For example, for $S=8$ servers and $\alpha=3$, we have $C_3^{(6)} = C_3^* = 3/4$, which means that it suffices to have $\beta = 6$ in order to achieve the fully-entangled capacity. 
Therefore, based on Table \ref{tab:S8}, for the cases with symmetric data replication, i.e., $K=\binom{S}{\alpha}$ and $\mathcal{W}: [K] \biject \binom{[S]}{\alpha}$, we have $\beta^* = 2,4,6,8,6,4,2,1$ for $\alpha = 1,2,3,4,5,6,7,8$. From Corollary \ref{cor:symmetric} it can further be verified that for general $S$,
\begin{align} \label{eq:beta_star}
	\beta^* = \begin{cases}
		2\alpha, &\alpha \leq \lfloor S/2 \rfloor, \\
		2(S-\alpha), &\lceil S/2 \rceil \leq \alpha \leq S-1, \\
		1, & \alpha = S.
	\end{cases}
\end{align}
The proof of \eqref{eq:beta_star} can be found in Appendix \ref{proof:symmetric}.
The intuition that emerges from this is that both extremes of too much data replication (large $\alpha$) and too little data replication (small $\alpha$) require relatively little entanglement (small $\beta^*$) to achieve their maximal DSC gain,  rather the intermediate regimes of data replication are the ones that require the most entanglement to maximize their DSC gain.

\subsection{Example: $2$-sum Protocol Based Coding for Fig. \ref{fig:abbccad}}\label{sec:2sumcapacity}
The main purpose of this example is to illustrate Corollary \ref{cor:bipartite}. Note that there are two $\Sigma$-QMAC problems involved in Corollary \ref{cor:bipartite}. In this example, let us again consider the data replication map defined in Fig. \ref{fig:abbccad}. We consider the following two $\Sigma$-QMAC problems.
\begin{enumerate}
	\item $\mathcal{P}$: The $\Sigma$-QMAC problem with data replication map $\mathcal{W}$ as shown in Fig. \ref{fig:abbccad}. Specifically, there are four data streams, denoted as ${\sf A}, {\sf B}, {\sf C}$ and ${\sf D}$. The four servers are denoted as $\mathcal{S}_{\sf ab},\mathcal{S}_{\sf ac},\mathcal{S}_{\sf bc}$ and $\mathcal{S}_{\sf d}$. We wish to find the $2$-party-entangled capacity $C^{(2)}(\mathcal{W})$.
	\item $\widetilde{\mathcal{P}}$: The (hypothetical) $\Sigma$-QMAC problem with data replication map $\widetilde{\mathcal{W}}$. Specifically, there are $\binom{4}{2} = 6$ servers, each has the data streams that are available to a unique pair of servers in $\mathcal{P}$. Therefore, the data streams available to the servers in $\widetilde{\mathcal{P}}$ are ${\sf ABC}, {\sf ABC}, {\sf ABD}, {\sf ABC}, {\sf ACD}, {\sf BCD}$, respectively. Without loss of generality, the three servers that know ${\sf ABC}$ can be considered as one server. We refer to the server that has data streams ${\sf ABC}$ as $\mathcal{S}_{\sf abc}$, and similarly we define the rest $3$ servers as $\mathcal{S}_{\sf abd}, \mathcal{S}_{\sf acd}$ and $\mathcal{S}_{\sf bcd}$. We wish to find the fully-unentangled capacity $C^{\unent}(\widetilde{\mathcal{W}})$.
\end{enumerate}
According to Corollary \ref{cor:bipartite}, these two $\Sigma$-QMAC problems have the same capacity, i.e., $C^{(2)}(\mathcal{W}) = C^{\unent}(\widetilde{\mathcal{W}})$,  both equal to $3/4$ by Corollary \ref{cor:unentangled}. A scheme for the problem $\widetilde{\mathcal{P}}$ is illustrated in Table \ref{tab:2sum_coding}.
\begin{table}[h]
\center
\begin{tabular}{c|l}
    Server & Transmission \\\hline
    $\mathcal{S}_{\sf abc}$ & $Y_{\sf abc} = ({\sf A}_1-{\sf A}_2+{\sf A}_3)+({\sf B}_1-{\sf B}_2) + {\sf C}_1$ \\\hline
    $\mathcal{S}_{\sf abd}$ & $Y_{\sf abd} = ({\sf A}_2 -{\sf A}_3)+{\sf B}_2+{\sf D}_1$ \\\hline
    $\mathcal{S}_{\sf acd}$ & $Y_{\sf acd} = {\sf A}_3+{\sf C}_2+({\sf D}_2-{\sf D}_1)$ \\\hline
    $\mathcal{S}_{\sf bcd}$ & $Y_{\sf bcd} = {\sf B}_3 +({\sf C}_3-{\sf C}_2)+({\sf D}_3-{\sf D}_2+{\sf D}_1)$
\end{tabular}
\vspace{0.1in}
\caption{Coding scheme for $\widetilde{\mathcal{P}}$} \label{tab:2sum_coding}
\end{table}
Note that in $\widetilde{\mathcal{P}}$, no entanglement across different servers is allowed, and our scheme in Table \ref{tab:2sum_coding} simply treats qudits as dits. Since $Y_{\sf abc}+Y_{\sf abd} = {\sf A}_1+{\sf B}_1+{\sf C}_1+{\sf D}_1$, $Y_{\sf abd}+Y_{\sf acd} = {\sf A}_2+{\sf B}_2+{\sf C}_2+ {\sf D}_2$ and $Y_{\sf acd}+Y_{\sf bcd} = {\sf A}_3+{\sf B}_3+{\sf C}_3+{\sf D}_3$, the scheme allows Alice to compute $L=3$ instances of the desired sum with a total cost of $4$ qudits ($Y_{\sf abc}$, $Y_{\sf abd}$, $Y_{\sf acd}$ and $Y_{\sf bcd}$ each costs one qudit). The capacity $3/4$ is thus achieved.

Fig. \ref{fig:parallel} shows that a  scheme thats achieve $C^{(2)}(\mathcal{W}) = 3/4$ in the problem $\mathcal{P}$ can be deduced from Table \ref{tab:2sum_coding}, which allows Alice to compute $6$ instances of the sum in the problem $\mathcal{P}$, with $4$ uses of the $2$-sum protocols, thus achieving the rate $6/8=3/4$.
\begin{figure}[!h]
\begin{center}
\begin{tikzpicture}
\coordinate (O) at (0,0){};
\node [draw, cylinder, aspect=0.2,shape border rotate=90,fill=black!10, text=black, inner sep =0cm, minimum height=0.5cm, minimum width=0.5cm, below left = 0.5cm and 0.4cm of O, align=center] (ABC)  {\footnotesize ${\sf ABC}$};
\node [draw, cylinder, aspect=0.2,shape border rotate=90,fill=black!10, text=black, inner sep =0cm, minimum height=0.5cm, minimum width=0.5cm, right=0.5cm of ABC, align=center] (ABD)  {\footnotesize ${\sf ABD}$};
\node [draw, cylinder, aspect=0.2,shape border rotate=90,fill=black!10, text=black, inner sep =0cm, minimum height=0.5cm, minimum width=0.5cm, right=0.5cm of ABD, align=center] (ACD)  {\footnotesize ${\sf ACD}$};
\node [draw, cylinder, aspect=0.2,shape border rotate=90,fill=black!10, text=black, inner sep =0cm, minimum height=0.5cm, minimum width=0.5cm, right=0.5cm of ACD, align=center] (BCD)  {\footnotesize ${\sf BCD}$};

\node[alice, draw=black, text=black, minimum size=0.8cm, inner sep=0, below = 2.3cm of O] (U2) at (1,-0.6) {};

\draw [-latex] ($(ABC.south)+(0,0)$)--($(U2.north)+(-0.3,0)$) node[pos=0.5,left=0] {\tiny $Y_{\sf abc}$};
\draw [-latex] ($(ABD.south)+(0,0)$)--($(U2.north)+(-0.1,0)$) node[pos=0.3,left=-0.1] {\tiny $Y_{\sf abd}$};
\draw [-latex] ($(ACD.south)+(0,0)$)--($(U2.north)+(+0.1,0)$) node[pos=0.3,right=-0.05] {\tiny $Y_{\sf acd}$};
\draw [-latex] ($(BCD.south)+(0,0)$)--($(U2.north)+(+0.3,0)$) node[pos=0.5,right=0] {\tiny $Y_{\sf bcd}$};

\node at ($(U2.south)+(0,-0.5)$) {\footnotesize { Rate: $\frac{3}{(1+1+1+1)}$ (dits/dit)}};

\begin{scope}[shift={(6.5,0)}]
\coordinate (O) at (0,0){};
\node [draw, cylinder, aspect=0.2,shape border rotate=90,fill=black!10, text=black, inner sep =0cm, minimum height=0.5cm, minimum width=0.5cm, below left = 0.5cm and 0.4cm of O, align=center] (AB)  {\footnotesize ${\sf AB}$};
\node [draw, cylinder, aspect=0.2,shape border rotate=90,fill=black!10, text=black, inner sep =0cm, minimum height=0.5cm, minimum width=0.5cm, right=0.5cm of AB, align=center] (AC)  {\footnotesize ${\sf AC}$};
\node [draw, cylinder, aspect=0.2,shape border rotate=90,fill=black!10, text=black, inner sep =0cm, minimum height=0.5cm, minimum width=0.5cm, right=0.5cm of AC, align=center] (BC)  {\footnotesize ${\sf BC}$};
\node [draw, cylinder, aspect=0.2,shape border rotate=90,fill=black!10, text=black, inner sep =0cm, minimum height=0.5cm, minimum width=0.5cm, right=0.5cm of BC, align=center] (D)  {\footnotesize ${\sf D}$};

\node[rectangle, draw, thick, fill=white, minimum width=0.5cm, minimum height = 0.5cm, align=center, below = 0.7cm of AB] (Box1) {\tiny 2-sum};
\node[rectangle, draw, thick, fill=white, minimum width=0.5cm, minimum height = 0.5cm, align=center, below = 0.7cm of AC] (Box2) {\tiny 2-sum};
\node[rectangle, draw, thick, fill=white, minimum width=0.5cm, minimum height = 0.5cm, align=center, below = 0.7cm of BC] (Box3) {\tiny 2-sum};
\node[rectangle, draw, thick, fill=white, minimum width=0.5cm, minimum height = 0.5cm, align=center, below = 0.7cm of D] (Box4) {\tiny 2-sum};

\node[alice, mirrored, draw=black, text=black, minimum size=0.8cm, inner sep=0, below = 2.3cm of O] (U) at (1,-0.6) {};

\draw [double,-latex] ($(AB.south)+(0,0)$)--($(Box1.north)+(0,0)$);
\draw [double,-latex] ($(AC.south)+(0,0)$)--($(Box1.north)+(0,0)$);
\draw [double,-latex] ($(AB.south)+(0,0)$)--($(Box2.north)+(0,0)$);
\draw [double,-latex] ($(D.south)+(0,0)$)--($(Box2.north)+(0,0)$);
\draw [double, -latex] ($(AC.south)+(0,0)$)--($(Box3.north)+(0,0)$);
\draw [double, -latex] ($(D.south)+(0,0)$)--($(Box3.north)+(0,0)$);
\draw [double, -latex] ($(BC.south)+(0,0)$)--($(Box4.north)+(0,0)$);
\draw [double, -latex] ($(D.south)+(0,0)$)--($(Box4.north)+(0,0)$);
\draw [double,-latex] ($(Box1.south)+(0,0)$)--($(U.north)+(-0.5,0)$) node[pos=0.7,left=0.1] {\tiny $Y_{\sf abc}$};
\draw [double,-latex] ($(Box2.south)+(0,0)$)--($(U.north)+(-0.2,0)$) node[pos=0.4,left=-0.1] {\tiny $Y_{\sf abd}$};
\draw [double,-latex] ($(Box3.south)+(0,0)$)--($(U.north)+(+0.2,0)$) node[pos=0.4,right=-0.05] {\tiny $Y_{\sf acd}$};
\draw [double,-latex] ($(Box4.south)+(0,0)$)--($(U.north)+(+0.5,0)$) node[pos=0.7,right=0.1] {\tiny $Y_{\sf bcd}$};

\node at ($(U.south)+(0.2,-0.5)$) {\footnotesize {Rate: $\frac{6}{2(1+1+1+1)}$ (dits/qudit)}};
\end{scope}

\end{tikzpicture}
\end{center}
\caption{A comparison of the schemes that achieve $C^{\unent}(\widetilde{\mathcal{W}}) = 3/4$ in the problem $\widetilde{\mathcal{P}}$ (LHS) and $C^{(2)}(\mathcal{W}) = 3/4$ in the problem $\mathcal{P}$ (RHS). In RHS, each of $Y_{\sf abc}$, $Y_{\sf abd}$, $Y_{\sf acd}$ and $Y_{\sf bcd}$ contains two instances, e.g., $Y_{\sf abc} = \big( Y_{\sf abc}^{(1)}, Y_{\sf abc}^{(2)} \big)$, where $Y_{\sf abc}^{(1)}$ is the function of $({\sf A}_i,{\sf B}_i,{\sf C}_i,{\sf D}_i)_{i=1}^3$ as shown in Table \ref{tab:2sum_coding} and $Y_{\sf abc}^{(2)}$ is the corresponding function of $({\sf A}_i,{\sf B}_i,{\sf C}_i,{\sf D}_i)_{i=4}^6$.}\label{fig:parallel}
\end{figure}
Note that each use of the $2$-sum protocol transmits $2$ instances of the symbol that is sent by a server in the problem $\widetilde{\mathcal{P}}$. Specifically, for example, the two servers $\mathcal{S}_{\sf ab}$ and $\mathcal{S}_{\sf ac}$ in the problem $\mathcal{P}$ use the $2$-sum protocol once, with the two inputs at $\mathcal{S}_{\sf ab}$ specified as $({\sf A}_1-{\sf A}_2+{\sf A}_3)$ and  $({\sf A}_4-{\sf A}_5+{\sf A}_6)$, the two inputs at $\mathcal{S}_{\sf ac}$ specified as $({\sf B}_1-{\sf B}_2)+{\sf C}_1$ and $({\sf B}_4-{\sf B}_5)+{\sf C}_4$, so that Alice gets $Y_{\sf abc}^{(1)} = ({\sf A}_1-{\sf A}_2+{\sf A}_3) + ({\sf B}_1-{\sf B}_2)+{\sf C}_1$ and $Y_{\sf abc}^{(2)} = ({\sf A}_4-{\sf A}_5+{\sf A}_6) + ({\sf B}_4-{\sf B}_5)+{\sf C}_4$.

\subsection{$3$-party entanglement is not necessary in $\Sigma$-QMAC} \label{sec:ex_3_party}
As an example to illustrate Corollary \ref{cor:tripartite}, let us once again consider the data replication map in Fig. \ref{fig:abbccad}. One can quickly check from Table \ref{tab:abacbcd} that for any capacity that is associated with an entanglement distribution map that contains a $3$-party clique, the same capacity is achievable for another entanglement distribution map where the $3$-party clique is replaced with three $2$-party cliques, each containing a unique pair of servers in the $3$-party cliques.  Another example is the symmetric settings with $S=8$ servers as shown in Table \ref{tab:S8}. Note that in each row, the $3$-party-entangled capacity is always equal to the $2$-party-entangled capacity, meaning that $3$-party entanglement is not necessary for achieving a higher capacity.

\subsection{The necessity of $S$-party entanglement for $\Sigma$-QMAC $(S\neq 3)$} \label{sec:S_partite_necessary}
 The symmetric setting in Section \ref{sec:symsqmac} may support the intuition that $\beta$-partite entanglement is unnecessary for odd $\beta$, i.e., $C^{(\beta)}(\mathcal{W}) = C^{(\beta-1)}(\mathcal{W})$, because the columns corresponding to odd values of $\beta$ in Table \ref{tab:S8} are identical to their preceding columns. The intuition may even be strengthened by Corollary \ref{cor:tripartite} which shows that indeed $C^{(3)}(\mathcal{W}) = C^{(2)}(\mathcal{W})$ for any $\mathcal{W}$. Perhaps surprisingly then, Corollary \ref{cor:S_partite_necessary} reveals  that $\beta=3$ is only an exception and it is not generally true that $C^{(\beta)}(\mathcal{W}) = C^{(\beta-1)}(\mathcal{W})$ for all data replication patterns $\mathcal{W}$ if $\beta$ is an odd number.

Leaving the proof of Corollary \ref{cor:S_partite_necessary} to  Appendix \ref{proof:asymmetry}, let us consider here the case $\beta=S=5$  to see that indeed there exists a data replication pattern $\mathcal{W}$ such that $C^{(5)}(\mathcal{W}) > C^{(4)}(\mathcal{W})$. Let ${\sf A}, {\sf B}, {\sf C}, {\sf D}$ and ${\sf E}$  denote $K=5$ data streams. The data replication map $\mathcal{W}$ is such that Server $\mathcal{S}_1$ has ${\sf A,B,C}$, Server $\mathcal{S}_2$ has ${\sf A,B,D}$, Server $\mathcal{S}_3$ has ${\sf A,C,D}$, Server $\mathcal{S}_4$ has ${\sf B,C,D}$ and Server $\mathcal{S}_5$ has ${\sf E}$. If we only look at the first $4$ servers, this is the symmetric data replication map with $4$ data streams, each being replicated in a unique subset of $3$ servers. The asymmetry comes from the additional data stream, ${\sf E}$, that is only available at server $\mathcal{S}_5$. It can be verified by Theorem \ref{thm:main} that for this data replication map, $C^{\fullent}(\mathcal{W}) \triangleq C^{(5)}(\mathcal{W})  = 6/7$ and $C^{(4)}(\mathcal{W}) = 5/6$, which together show that the $5$-partite entanglement is necessary for this $\mathcal{W}$.

\section{Proof of Theorem \ref{thm:main}: Converse} \label{proof:main_conv}
Our converse bound for proving Theorem \ref{thm:main} is based on the cut-set argument with the capacity result of classical-quantum communication channel (e.g., \cite{holevo1998capacity,DSC4}). 
The definitions of quantum coding schemes, feasible region, capacity and the DSC gain follow from those of the $\Sigma$-QMAC.

\subsection{Prerequisite: Dense Coding Capacity} \label{sec:bg_conv}
Consider a point to point quantum communication setting with a sender, Bob, and a receiver, Alice. Quantum systems $\mathcal{Q}_A$ and $\mathcal{Q}_B$ are provided to Alice and Bob, respectively. We use $|\mathcal{Q}|$ to denote the dimension of a quantum system $\mathcal{Q}$. The composite system $\mathcal{Q}_B\mathcal{Q}_A$ is in the initial state $\rho^{BA}$, described by the density operator. Independent of $\rho^{BA}$ there is a random variable $X$, so that with probability $p_X(x)$, Bob applies a unitary operation $U_x$ on $\mathcal{Q}_B$. The resulting state of the system $\mathcal{Q}_B\mathcal{Q}_A$ is thus $\rho^{BA'}_x = (U_B\otimes I_A)\rho^{BA}(U_B^\dagger \otimes I_A)$ for $X=x$. Then Bob sends $\mathcal{Q}_B$ to Alice, which allows Alice to measure $\mathcal{Q}_B\mathcal{Q}_A$  and obtain the outcome $Y$. The \emph{dense coding capacity} \cite{DSC4}, defined as the maximum amount of information that Alice can learn about $X$ from $Y$, is equal to $\max I(X;Y)$ where the maximum is taken over all $p_X$ and all possible measurement at Alice. This value can be strictly larger than $\log_d|\mathcal{Q}_B|$ (dits), in which case the coding scheme is called a \emph{dense} coding. As shown by \cite{DSC4}, $\max I(X;Y) = \log_d|\mathcal{Q}_B| + S(\rho^{A})-S(\rho^{BA})$ (dits), where $S(\cdot)$ denotes the von Neumann entropy and $\rho^A$ is the reduced density operator for $\mathcal{Q}_A$. Due to the inequalities $|S(\rho^A)-S(\rho^B)| \leq S(\rho^{BA})$, $S(\mathcal{Q}_A)\leq \log_d|\mathcal{Q}_A|$ and non-negativity of von Neumann entropy, we obtain that 
\begin{align} \label{eq:bg_conv}
  I(X;Y) \leq \min(\log_d |\mathcal{Q}_B\mathcal{Q}_A|, 2\log_d |\mathcal{Q}_B|) \mbox{ dits}.
\end{align}

\subsection{Proof of Converse}
Consider any feasible coding scheme specified by
\begin{align}
	\big(L, ~((\delta_{t,s})_{s\in\mathcal{E}(t)})_{t\in[T]}, ~\rho_{[T]},~ ((\Phi_{t,s})_{s\in\mathcal{E}(t)})_{t\in[T]},~ (\{M_{t,y}\}_{y\in \mathcal{Y}_t})_{t\in [T]},~ \Psi \big).
\end{align}
\begin{lemma}[Conditional Independence]\label{lem:conditional_independence}
	Given any feasible scheme, ${\sf Y}_1,{\sf Y}_2, \cdots, {\sf Y}_T$ are mutually independent conditioned on the event $({\sf W}_1^{[L]}=w_1, {\sf W}_2^{[L]}=w_2, \cdots, {\sf W}_K^{[L]}=w_K)$ for any $w_1,w_2,\cdots, w_K \in \mathbb{F}_d^L$.
\end{lemma}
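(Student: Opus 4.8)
The plan is to exploit the two structural constraints built into the definition of a feasible scheme: the initial state is a product state $\rho = \rho_1 \otimes \rho_2 \otimes \cdots \otimes \rho_T$ across the $T$ quantum systems, and both the server operations and Alice's measurements act \emph{separately} on each $\mathcal{Q}_t$. Once we condition on the data, the only randomness that survives is the intrinsic randomness of the quantum measurements, and the product structure forces this randomness to factorize across $t$. The whole argument reduces to a single application of the multiplicativity of the trace over tensor factors.

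First I would fix an arbitrary data realization $({\sf W}_1^{[L]}, \ldots, {\sf W}_K^{[L]}) = (w_1, \ldots, w_K)$ and observe that each server's unitary $U_{t,s} = \Phi_{t,s}({\sf W}_k^{[L]}, k: s\in\mathcal{W}(k))$ is then a deterministic operator. Consequently $U_t = \bigotimes_{s\in\mathcal{E}(t)} U_{t,s}$ is deterministic and the post-encoding state of $\mathcal{Q}_t$ is the fixed state $\rho'_t = U_t \rho_t U_t^\dagger$. Because local unitaries preserve the tensor-product structure, the global post-encoding state remains a product, $\rho' = \rho'_1 \otimes \rho'_2 \otimes \cdots \otimes \rho'_T$.

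Next I would invoke the Born rule for the separate measurements. Since Alice measures each $\mathcal{Q}_t$ with its own POVM $\{M_{t,y}\}_{y\in\mathcal{Y}_t}$, the effective joint measurement on the composite system is the product POVM whose elements are $M_{1,y_1} \otimes M_{2,y_2} \otimes \cdots \otimes M_{T,y_T}$. Hence the joint conditional distribution is $\Pr({\sf Y}_{[T]} = y_{[T]} \mid w_{[K]}) = \tr\big((M_{1,y_1} \otimes \cdots \otimes M_{T,y_T})(\rho'_1 \otimes \cdots \otimes \rho'_T)\big)$, which by $\tr(\bigotimes_{t} A_t) = \prod_t \tr(A_t)$ equals $\prod_{t\in[T]} \tr(M_{t,y_t}\rho'_t) = \prod_{t\in[T]} \Pr({\sf Y}_t = y_t \mid w_{[K]})$. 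This factorization of the joint conditional law into the product of the conditional marginals is precisely the claimed conditional independence.

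The only point that requires care --- and the step I would flag as the main (albeit modest) obstacle --- is justifying that, after conditioning on the data, there is no residual correlation channel linking different ${\sf Y}_t$ except through the state. This is where the \emph{no entanglement across systems} and \emph{separate processing/measurement} assumptions are essential: the scheme permits no shared classical randomness and no joint operation straddling two distinct systems $\mathcal{Q}_t, \mathcal{Q}_{t'}$, so the measurement outcomes inherit the product structure of $\rho'$ verbatim. Making this explicit --- noting that each $\Phi_{t,s}$ depends only on $w$ and on no auxiliary randomness shared across $t$, and that the global POVM genuinely factorizes --- is the crux; the trace computation above is then routine.
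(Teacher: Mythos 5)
Your proposal is correct and follows essentially the same route as the paper's proof: condition on the data so that the unitaries $U_t$ become deterministic, note that the post-encoding state stays a product $\rho_1'\otimes\cdots\otimes\rho_T'$, and factorize the joint outcome probability via $\tr\big(\bigotimes_t (\rho_t' M_{t,y_t})\big)=\prod_t \tr(\rho_t' M_{t,y_t})$. The extra care you flag about the absence of shared randomness across systems is a fair (if implicit in the paper's scheme definition) observation, but it does not change the argument.
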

\begin{proof}
Recall that Alice receives the quantum system $\mathcal{Q}_t$ in the state $\rho_t' = (\otimes_{s\in \mathcal{E}(t)} U_{t,s})\rho_t (\otimes_{s\in \mathcal{E}(t)} U_{t,s})^\dagger$ for $t\in[T]$. Thus, for any $w_1,w_2,\cdots, w_K \in \mathbb{F}_d^L$,
\begin{align} \label{eq:prob_measure_t}
  {\sf Pr}({\sf Y}_t = y_t \mid  {\sf W}_1^{[L]}=w_1, \cdots, {\sf W}_K^{[L]}=w_K ) = \mbox{tr}(\rho_t' M_{t,y_t}), ~~\forall t\in[T]
\end{align}
and
\begin{align}
  &{\sf Pr}\big({\sf Y}_1 = y_1, \cdots, {\sf Y}_T = y_T \mid {\sf W}_1^{[L]}=w_1, \cdots, {\sf W}_K^{[L]}=w_K \big) \notag \\
  &= \mbox{tr}\big((\rho_1'\otimes\cdots\otimes\rho_T') ( M_{1,y_1}\otimes\cdots\otimes M_{T,y_T} )\big) \\ 
  & = \mbox{tr}\big( (\rho_1'M_{1,y_1}) \otimes \cdots \otimes (\rho_T'M_{T,y_T})\big) \\
  & = \prod_{t=1}^T \mbox{tr}(\rho_t'M_{t,y_t}) \\
  & = \prod_{t=1}^T{\sf Pr}(Y_t = y_t \mid {\sf W}_1^{[L]}=w_1, \cdots, {\sf W}_K^{[L]}=w_K)
\end{align}
where the last step uses \eqref{eq:prob_measure_t}. It follows that ${\sf Y}_1, {\sf Y}_2,\cdots,{\sf Y}_T$ are conditionally independent.
\end{proof}
Let $\big({\sf W}_1^{(\ell)},{\sf W}_2^{(\ell)},\cdots, {\sf W}_K^{(\ell)} \big)$ be the $\ell^{th}$ instance of the data streams. Since any feasible scheme must guarantee successful decoding for all realizations of $\big({\sf W}_1^{(\ell)},{\sf W}_2^{(\ell)},\cdots, {\sf W}_K^{(\ell)} \big) \in \mathbb{F}_d^K$ for all $\ell$, it must still guarantee successful decoding if we assume $\big({\sf W}_1^{(\ell)},{\sf W}_2^{(\ell)},\cdots, {\sf W}_K^{(\ell)} \big)$ to be uniform over $\mathbb{F}_d^K$ for any $\ell \in [L]$, and independent over $\ell\in [L]$.
For $t\in[T]$, let us account for the separate measurement corresponding to clique $\mathcal{E}(t)$. 
Following a regular cut set argument, for $k\in [K]$, denote $\mathcal{A}_{t,k} \triangleq \big([S]\backslash \mathcal{W}(k) \big) \cap \mathcal{E}(t)$ and let Servers $s\in \mathcal{A}_{t,k}$ join Alice as the receiver by bringing their quantum resource and data.  
Let $\mathcal{B}_{t,k} \triangleq \mathcal{E}(t) \backslash \mathcal{A}_{t,k} = \mathcal{E}(t)\cap \mathcal{W}(k)$ and consider Servers $s\in \mathcal{B}_{t,k}$ collectively as the transmitter. Denote the subsystem of $\mathcal{Q}_t$ that is sent from Servers $s\in \mathcal{B}_{t,k}$ as $\mathcal{Q}_{B}$ and the quantum subsystem of $\mathcal{Q}_t$ that is brought from Servers $s\in \mathcal{A}_{t,k}$ as $\mathcal{Q}_{A}$. 
Since the $K$ data streams are mutually independent, conditioned on ${\sf W}_{[K]\backslash \{k\}}^{[L]}$, it follows from \eqref{eq:bg_conv} that,
\begin{align}
& I\big({\sf W}_{k}^{[L]};{\sf Y}_t ~\big|~ {\sf W}_{[K]\backslash \{k\}}^{[L]} \big)\notag \\
& \leq \min\big(\log_d |\mathcal{Q}_{B}\mathcal{Q}_{A}|, 2 \log_d |\mathcal{Q}_{B}|\big) \\
& =\min\Big(\log_d\prod_{s\in \mathcal{E}(t)}|\mathcal{Q}_{t,s}|, 2 \log_d\prod_{s\in \mathcal{B}_{t,k}}|\mathcal{Q}_{t,s}|\Big) \\
& = \min\Big( \sum_{s\in \mathcal{E}(t)}\log_d |\mathcal{Q}_{t,s}|, 2\sum_{s\in \mathcal{B}_{t,k}} \log_d |\mathcal{Q}_{t,s}|\Big) \label{eq:conv_1}
\end{align}
where ${\sf Y}_t$ denotes the result after measuring the quantum system $\mathcal{Q}_{B}\mathcal{Q}_{A}$, i.e., the composite system comprised of $\mathcal{Q}_{B}$ and $\mathcal{Q}_{A}$.  

We then have
\begin{align}
  & \sum_{t\in[T]} I\big({\sf W}_{k}^{[L]};{\sf Y}_t ~\big|~ {\sf W}_{[K]\backslash \{k\}}^{[L]}\big) \notag \\
  & = \sum_{t\in[T]} H\big({\sf Y}_t ~\big|~ {\sf W}_{[K]\backslash \{k\}}^{[L]}\big) - \sum_{t\in[T]}H\big({\sf Y}_t ~\big|~ {\sf W}_{[K]}^{[L]}\big) \\
  & \geq H\big({\sf Y}_{[T]} ~\big|~ {\sf W}_{[K]\backslash \{k\}}^{[L]} \big) - \sum_{t\in[T]}H\big({\sf Y}_t ~\big|~ {\sf W}_{[K]}^{[L]}\big) \\
  & = H\big({\sf Y}_{[T]} ~\big|~ {\sf W}_{[K]\backslash \{k\}}^{[L]} \big) - \sum_{t\in[T]} H\big( {\sf Y}_t ~\big|~ {\sf W}_{[K]}^{[L]}, {\sf Y}_{[t-1]} \big) \label{eq:cond_independence} \\
  & = H\big({\sf Y}_{[T]} ~\big|~ {\sf W}_{[K]\backslash \{k\}}^{[L]} \big) - H\big({\sf Y}_{[T]}~\big|~ {\sf W}_{[K]}^{[L]}\big) \\
  & = I\big({\sf W}_{k}^{[L]};{\sf Y}_{[T]} ~\big|~ {\sf W}_{[K]\backslash \{k\}}^{[L]}\big) \\
  & = H\big( {\sf W}_k^{[L]} ~\big|~ {\sf W}_{[K]\backslash \{k\}}^{[L]} \big) \label{eq:conv_decoding} \\
  & = H\big( {\sf W}_k^{[L]} \big) \\
  & = L ~\mbox{(dits)} \label{eq:conv_2}
\end{align}
where Step \eqref{eq:cond_independence} makes use of the conditional independence of ${\sf Y}_1,{\sf Y}_2,\cdots,{\sf Y}_T$ as implied by Lemma \ref{lem:conditional_independence}. Step \eqref{eq:conv_decoding} holds because conditioned on ${\sf W}_{[K]\backslash \{k\}}^{[L]}$,  Alice must be able to recover ${\sf W}_k^{[L]}$ from ${\sf Y}_{[T]}$.

Combining \eqref{eq:conv_1} and \eqref{eq:conv_2}, we obtain that for any $k\in[K]$,
\begin{align}
	\sum_{t\in[T]}\min\Big( \sum_{s\in \mathcal{E}(t)}\log_d |\mathcal{Q}_{t,s}|, 2\sum_{s\in \mathcal{B}_{t,k}}|\mathcal{Q}_{t,s}|\Big) \geq L ~\mbox{(dits)}.
\end{align}
Dividing by $L$ on both sides gives us
\begin{align}
  \sum_{t\in[T]}\min \left\{\sum_{s\in \mathcal{E}(t)}\Delta_{t,s}, \sum_{s\in \mathcal{B}_{t,k}} 2\Delta_{t,s}\right\}  \geq 1, ~\forall k\in [K],
\end{align}
which matches the condition of the feasible region in Theorem \ref{thm:main}.

\section{Proof of Theorem \ref{thm:main}: Achievability}\label{proof:main_achievability}
\subsection{Prerequisite: The $N$-sum Box}\label{sec:nsumbox}
Building on the stabilizer formalism and quantum error correction literature on stabilizer codes, an implicit generalization of the $2$-sum protocol is presented in \cite{song_colluding_PIR}, and subsequently crystallized as an $N$-sum box abstraction in \cite{Allaix_N_sum_box}. The $N$-sum box has $2N$ classical inputs, labeled $x_1,x_2,\cdots, x_{2N}\in \mathbb{F}_d$, and $N$ classical outputs $y_1,y_2,\cdots,y_N\in \mathbb{F}_d$, related by a MIMO MAC channel formulation as,
\begin{align}
\begin{bmatrix}y_1\\\vdots\\y_N\end{bmatrix}&=\begin{bmatrix}M_{1,1}&\cdots&M_{1,{2N}}\\ \vdots&\vdots&\vdots\\M_{N,1}&\cdots&M_{N,{2N}}\end{bmatrix}\begin{bmatrix}x_1\\\vdots\\x_{2N}\end{bmatrix}
\end{align}
which can be represented compactly as ${\bf y}={\bf M}{\bf x}$. 
The $N$-sum box abstraction represents the setting where $N$ entangled qudits  are distributed among $K$ transmitters, such that each transmitter can perform conditional quantum $X,Z$ gate operations on its qudit(s) to encode classical information. The transmitter that has the $n^{th}$ qudit controls the inputs $x_{n}$ and $x_{N+n}$ of the $N$-sum box. For example, if Qudits $1$ and $3$ are given to Transmitter $1$, then in the $N$-sum box abstraction the inputs $x_1, x_{1+N}, x_3, x_{3+N}$ are the inputs available to Transmitter $1$. The $N$ outputs are the result of the quantum measurement performed by Alice. Since the $N$ qudits are sent to Alice for the quantum measurement, the $N$-sum box has a quantum communication cost of $N$ qudits. Now let us consider the channel matrix ${\bf M}$. Different choices of entanglement states and quantum-measurement bases produce different channel matrices. Depending on the desired computation task a suitable ${\bf M}$ may be chosen from the set of feasible choices. The channel matrices that can be obtained from the stabilizer-based construction are precisely those (see \cite{Allaix_N_sum_box}) that are strongly self-orthogonal, i.e., that satisfy the following two conditions,
\begin{align}
\rk({\bf M})&=N,&&
{\bf M}{\bf J}_{2N}{\bf M}^T={\bf 0}_{N\times N}
\end{align}
where 
${\bf J}_{2N}=\ppsmatrix{{\bf 0}&-{\bf I}_{N}\\ {\bf I}_N&{\bf 0}}$
and ${\bf I}_N$ is the $N\times N$ identity matrix.  Designing quantum-codes for the  $\Sigma$-QMAC using the $N$-sum box abstraction entails a choice of not only which $N$-sum boxes to use, how many of the inputs of each $N$-sum box to assign to each transmitter, and how to precode at each transmitter in the MIMO MAC for the desired computation, but in contrast to conventional (wireless) MIMO MAC settings where the channels are randomly chosen by nature, here we also have the freedom to design suitable channel matrices ${\bf M}$ for the desired computation task, within the class of feasible choices. The $N$-sum box abstraction then guarantees that corresponding to these choices there exist initial quantum entanglements, quantum-coding operations at the transmitters, and quantum-measurement operations at the receiver, that achieve the desired MIMO MAC functionality, at the communication cost of $N$ qudits for each $N$-sum box utilized by the coding scheme.

\subsection{Proof of Achievability} 
Our proof of achievability combines a series of results in network coding literature, together with the $N$-sum box formulation. Therefore, let us first summarize these results into the following lemmas to facilitate the proof. Consider the following setup.
Let $K\in \mathbb{N}$. For $k\in [K]$, let ${\bf H}_k \in \mathbb{F}_q^{n\times m_k}$. Let ${\sf W}$ and ${\sf W}_k, k\in[K]$ be sources generating symbols in $\mathbb{F}_q$. Let us define the following two network coding type problems.

\noindent {\bf Sum-network:} There is a MIMO multiple access channel with $K$ transmitters and one receiver. The input at Transmitter $k$ is $X_k \in \mathbb{F}_q^{m_k \times 1}$. The output at the receiver is $Y = \sum_{k\in[K]} {\bf H}_k X_k$. Transmitter $k$ knows ${\sf W}_k$. The receiver wants to know ${\sf W}_1+{\sf W}_2+\cdots+{\sf W}_K$. A feasible  coding scheme can be described by $(L,N,\phi_{[K]}, \psi)$ so that the encoders $\phi_k$ map ${\sf W}^{[L]}_k$ to $X_k^{[N]}, \forall k\in[K]$, and the decoder $\psi$ maps $Y^{[N]}$ to $\sum_{k\in[K]} {\sf W}_k^{[L]}$. A rate $R$ is achievable if there exists a scheme so that $R \leq L/N$.

\noindent {\bf Multicast:}
There is a MIMO broadcast channel with one transmitter and $K$ receivers. The input at the Transmitter is $\widetilde{X} \in \mathbb{F}_q^{n \times 1}$. The output at Receiver $k$ is $\widetilde{Y}_k = {\bf H}_k^T \widetilde{X}$. The transmitter knows a message ${\sf W}$. All $K$ receivers want to decode ${\sf W}$. A feasible  coding scheme can be described by $(L,N,\widetilde{\phi}, \widetilde{\psi}_{[K]})$ so that the encoder $\widetilde{\phi}$ maps ${\sf W}^{[L]}$ to $\widetilde{X}^{[N]}$, and the decoders $\widetilde{\psi}_k$ map $\widetilde{Y}_k^{[N]}$ to ${\sf W}^{[L]}, \forall k\in[K]$. A rate $R$ is achievable if there exists a scheme so that $R \leq L/N$.

\begin{lemma}[Duality \cite{Rai_Dey}]
	If $(L,N,\phi_{[K]}, \psi)$ is a feasible linear coding scheme\footnote{If the coding functions $\phi_{[K]}$ and decoding function $\psi$ are linear functions, the scheme is said to be linear. The linearity for the Multicast setting is similarly defined.} for the sum-network, then there exists a  feasible linear coding scheme $(L,N,\widetilde{\phi}, \widetilde{\psi}_{[K]})$ for the corresponding multicast problem, and vice versa. 
\end{lemma}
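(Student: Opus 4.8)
The plan is to reduce both the sum-network and the multicast problems to a single matrix feasibility condition, and then to observe that the two conditions are exchanged by transposition. First I would make the linear schemes explicit in matrix form. Over the $N$ channel uses, a linear encoder $\phi_k$ for the sum-network is determined by matrices $G_{k,i}\in\mathbb{F}_q^{m_k\times L}$ with $X_k^{(i)}=G_{k,i}{\sf W}_k^{[L]}$, and the linear decoder $\psi$ by matrices $D_i\in\mathbb{F}_q^{L\times n}$ with $\sum_{k\in[K]}{\sf W}_k^{[L]}=\sum_{i\in[N]}D_iY^{(i)}$. Stacking these into ${\bf G}_k=[G_{k,1}^T\ \cdots\ G_{k,N}^T]^T\in\mathbb{F}_q^{m_kN\times L}$ and ${\bf D}=[D_1\ \cdots\ D_N]\in\mathbb{F}_q^{L\times nN}$, and folding the $N$-fold use of the channel into the block-diagonal matrix $\widehat{{\bf H}}_k={\bf I}_N\otimes{\bf H}_k$, the channel law $Y^{(i)}=\sum_k{\bf H}_kX_k^{(i)}$ together with the mutual independence (equivalently, the free choice) of the sources ${\sf W}_1^{[L]},\dots,{\sf W}_K^{[L]}$ shows that correct decoding holds if and only if ${\bf D}\,\widehat{{\bf H}}_k\,{\bf G}_k={\bf I}_L$ for every $k\in[K]$.

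Next I would carry out the same reduction for the multicast problem. A linear encoder $\widetilde{\phi}$ is given by matrices $\widetilde{G}_i$ stacked into $\widetilde{{\bf G}}\in\mathbb{F}_q^{nN\times L}$ via $\widetilde{X}^{(i)}=\widetilde{G}_i{\sf W}^{[L]}$, and the linear decoder $\widetilde{\psi}_k$ at receiver $k$ by matrices assembled into $\widetilde{{\bf D}}_k\in\mathbb{F}_q^{L\times m_kN}$. Since receiver $k$ observes $\widetilde{Y}_k^{(i)}={\bf H}_k^T\widetilde{X}^{(i)}$, the relevant $N$-fold channel is ${\bf I}_N\otimes{\bf H}_k^T=(\widehat{{\bf H}}_k)^T$, and the analogous argument (now the single message ${\sf W}^{[L]}$ is free) shows that the scheme decodes ${\sf W}^{[L]}$ at all $K$ receivers if and only if $\widetilde{{\bf D}}_k\,(\widehat{{\bf H}}_k)^T\,\widetilde{{\bf G}}={\bf I}_L$ for every $k\in[K]$.

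The heart of the argument is then a single transposition. Given a feasible sum-network scheme, I would set $\widetilde{{\bf G}}={\bf D}^T$ and $\widetilde{{\bf D}}_k={\bf G}_k^T$; transposing ${\bf D}\,\widehat{{\bf H}}_k\,{\bf G}_k={\bf I}_L$ yields ${\bf G}_k^T\,(\widehat{{\bf H}}_k)^T\,{\bf D}^T={\bf I}_L$, which is exactly the multicast feasibility condition, and with the same parameters $L$ and $N$ (the block dimensions match, since ${\bf D}^T$ is $nN\times L$ and ${\bf G}_k^T$ is $L\times m_kN$). Because this construction is an involution up to transposition, applying the identical map to a feasible multicast scheme recovers a feasible sum-network scheme, proving the converse direction. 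Since $L$ and $N$ are preserved, the achievable rate $L/N$ is identical on the two sides, which gives the claimed equivalence.

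I would expect the only genuinely delicate point to be the bookkeeping in the first two steps: correctly folding the $N$ channel uses into the Kronecker/block-diagonal form $\widehat{{\bf H}}_k={\bf I}_N\otimes{\bf H}_k$, and verifying that source-independence forces the \emph{per-source} identity ${\bf D}\,\widehat{{\bf H}}_k\,{\bf G}_k={\bf I}_L$ rather than merely a condition on the aggregate $\sum_k{\bf D}\,\widehat{{\bf H}}_k\,{\bf G}_k\,{\sf W}_k^{[L]}$. Once these matrix feasibility conditions are pinned down, the duality itself is immediate from transposition, so there is no real obstacle beyond careful indexing.
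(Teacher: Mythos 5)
Your proof is correct and is essentially the standard argument: the paper itself does not prove this lemma but defers to \cite{Rai_Dey}, whose duality is established by exactly your reduction---writing both linear schemes as per-source transfer-matrix identities ${\bf D}\,\widehat{{\bf H}}_k\,{\bf G}_k={\bf I}_L$ (resp.\ $\widetilde{{\bf D}}_k\,\widehat{{\bf H}}_k^T\,\widetilde{{\bf G}}={\bf I}_L$) and exchanging the two by transposition, with $L$ and $N$ preserved. Your handling of the one delicate point (that independence of the sources, i.e., freedom over all $q^{KL}$ realizations, forces the per-source identity rather than just an aggregate condition) is also right.
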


\begin{lemma}[Multicast Capacity \cite{Li_Yeung_linear_network_coding}] \label{lem:multicast}
	The capacity (supreme of  achievable rates) of the multicast problem is $\min_{k\in [K]} \rk({\bf H}_k)$, and it can be achieved by linear coding schemes.
\end{lemma}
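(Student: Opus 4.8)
The plan is to prove that the multicast capacity equals $r \triangleq \min_{k\in[K]}\rk({\bf H}_k)$ by pairing a converse that holds for all (even nonlinear) schemes with a purely linear achievability construction, which simultaneously establishes the ``achieved by linear coding'' claim. For the converse I would fix an arbitrary feasible scheme $(L,N,\widetilde\phi,\widetilde\psi_{[K]})$ and a receiver $k$. Since the encoder and the channel are deterministic, the stacked observation $\widetilde Y_k^{[N]}=({\bf H}_k^T\widetilde X^{(1)},\ldots,{\bf H}_k^T\widetilde X^{(N)})$ is a function of the message alone, and each block ${\bf H}_k^T\widetilde X^{(i)}$ lies in the column space of ${\bf H}_k^T$, whose dimension is $\rk({\bf H}_k^T)=\rk({\bf H}_k)$. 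Hence $\widetilde Y_k^{[N]}$ assumes at most $q^{N\rk({\bf H}_k)}$ distinct values. In the prior-free model all $q^L$ messages must be distinguishable at receiver $k$, so injectivity of the decoding map forces $q^L\le q^{N\rk({\bf H}_k)}$, i.e. $L/N\le \rk({\bf H}_k)$; minimizing over $k$ gives $R\le r$. (Equivalently, with ${\sf W}^{[L]}$ taken uniform, $L\log q=H({\sf W}^{[L]})\le H(\widetilde Y_k^{[N]})\le N\rk({\bf H}_k)\log q$, since ${\sf W}^{[L]}$ is recoverable from $\widetilde Y_k^{[N]}$.)

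For the lower bound I would use a single block-linear encoder. Stacking the $N$ transmit vectors into ${\bf x}\in\mathbb F_q^{nN}$ and the message into ${\bf w}={\sf W}^{[L]}\in\mathbb F_q^{L}$, write ${\bf x}=G{\bf w}$ for some $G\in\mathbb F_q^{nN\times L}$. Receiver $k$ then observes $({\bf I}_N\otimes{\bf H}_k^T)G{\bf w}$, so simultaneous zero-error decoding at every receiver is equivalent to the single algebraic requirement
\begin{align}
\rk\big(({\bf I}_N\otimes{\bf H}_k^T)\,G\big)=L, \qquad \forall k\in[K].
\end{align}
Because $\rk({\bf I}_N\otimes{\bf H}_k^T)=N\rk({\bf H}_k)\ge Nr$, the target $L=Nr$ never exceeds a receiver's available rank; thus for each fixed $k$ some $G$ makes the $k$-th matrix full column rank, which means the determinant of a suitable $L\times L$ submatrix is a nonzero polynomial in the entries of $G$. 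The product over $k\in[K]$ of these $K$ determinant polynomials is therefore not identically zero, so by the Schwartz--Zippel lemma it admits a nonvanishing evaluation whenever the field is large enough, and any such evaluation yields a single $G$ serving all receivers at rate $L/N=r$.

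The hard part is that $\mathbb F_q$ is fixed and may be too small for the Schwartz--Zippel step at, say, $N=1$. I would handle this exactly as in the classical multicast theorem by lifting to vector-linear coding over $N$ symbols of $\mathbb F_q$, equivalently scalar-linear coding over the extension field $\mathbb F_{q^N}$. The ranks $\rk({\bf H}_k)$ are invariant under this base-field extension, while the degree of the product determinant polynomial grows only polynomially in $N$ and $|\mathbb F_{q^N}|=q^N$ grows exponentially; hence for all large enough $N$ a valid $G$ exists with $L=Nr$, achieving rate exactly $r$ (and certainly rates approaching $r$, matching the supremum in the statement). Since the construction is linear throughout, this proves achievability and the linearity claim together, and combined with the converse it pins the capacity at $r=\min_{k\in[K]}\rk({\bf H}_k)$.
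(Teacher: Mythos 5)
Your proposal is correct and follows essentially the same route as the paper: a single generic linear precoder shared by all receivers, non-vanishing of each receiver's determinant polynomial certified by the existence of a single-receiver solution, and the Schwartz--Zippel lemma applied over the extension field $\mathbb{F}_{q^N}$ (equivalently, vector-linear coding over $N$ channel uses), yielding rate exactly $\min_{k\in[K]}\rk({\bf H}_k)$. The only differences are cosmetic --- you certify non-vanishing via an $L\times L$ submatrix determinant where the paper compresses with fixed matrices $\overline{\bf U}_k$ satisfying $\overline{\bf U}_k{\bf H}_k^T\overline{\bf V}_k={\bf I}$, and your counting converse spells out a step the paper dismisses as obvious.
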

We only need the achievability side of Lemma \ref{lem:multicast}.
Although the idea essentially follows from \cite{Ahlswede_network_information_flow, Li_Yeung_linear_network_coding}, since our formulation here is slightly different, we provide an alternative proof of this lemma in Appendix \ref{proof:multicast}. Based on these two lemmas, the next corollary becomes obvious.
\begin{corollary} \label{cor:sum_network}
	The rate $R=\min_{k\in[K]} \rk({\bf H}_k)$ is achievable by a linear coding scheme in the sum-network.
\end{corollary}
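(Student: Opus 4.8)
The plan is to chain the two preceding lemmas; the corollary is essentially a bookkeeping step. First I would invoke the achievability side of Lemma \ref{lem:multicast}, applied to the multicast problem built from the same matrices ${\bf H}_1,\ldots,{\bf H}_K$. This guarantees that the multicast capacity $\min_{k\in[K]}\rk({\bf H}_k)$ is attained by a \emph{linear} coding scheme. Since $\min_{k\in[K]}\rk({\bf H}_k)$ is an integer, achievability at capacity yields block lengths $L,N$ with $L/N = R$ exactly, together with a feasible linear scheme $(L,N,\widetilde{\phi},\widetilde{\psi}_{[K]})$ in which the single transmitter broadcasts ${\sf W}^{[L]}$ through the channels $\widetilde{Y}_k = {\bf H}_k^T\widetilde{X}$ and every receiver $k$ recovers ${\sf W}^{[L]}$ from $\widetilde{Y}_k^{[N]}$.

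Second, I would apply the ``vice versa'' direction of the Duality Lemma of \cite{Rai_Dey}. Having produced a feasible linear multicast scheme with parameters $(L,N)$, duality converts it into a feasible linear scheme $(L,N,\phi_{[K]},\psi)$ for the corresponding sum-network with the \emph{same} block lengths. The point to emphasize is that the duality correspondence is rate-preserving: it carries a scheme with parameters $(L,N)$ to another scheme with identical $(L,N)$, so the resulting sum-network rate $L/N$ coincides with the multicast rate. Combining the two steps, the sum-network admits a feasible linear coding scheme of rate $L/N = R = \min_{k\in[K]}\rk({\bf H}_k)$, which is precisely the claim.

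As for the main obstacle, there is essentially none at the level of this corollary --- all the substance is already packaged inside the two cited lemmas, which is why it is flagged as ``obvious.'' The only points that warrant a moment's care are (i) that the achievability in Lemma \ref{lem:multicast} is realized by \emph{linear} schemes, so that the Duality Lemma (which is stated for linear schemes) is applicable, and (ii) that the duality transformation preserves the pair $(L,N)$, and hence the rate, rather than merely preserving feasibility. Both facts are explicit in the quoted statements, so once they are noted the corollary follows immediately with no further computation.
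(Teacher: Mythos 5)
Your proposal is correct and follows exactly the paper's intended argument: the paper declares the corollary ``obvious'' from the two preceding lemmas, meaning precisely the chain you describe --- the achievability side of Lemma \ref{lem:multicast} produces a feasible linear multicast scheme at rate $\min_{k\in[K]}\rk({\bf H}_k)$ (indeed, the explicit construction in the paper's proof of that lemma attains $L/N$ equal to this value exactly), and the ``vice versa'' direction of the Duality Lemma converts it to a feasible linear sum-network scheme with the same $(L,N)$. Your two flagged points of care (linearity of the multicast scheme, and rate preservation under duality) are exactly the hypotheses that make the chain work, so nothing is missing.
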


Going back to the proof, our achievable scheme is essentially based on the achievable scheme of a sum-network, which is constructed by the $N$-sum box formulation. Recall that there are in total $T$ cliques (sets of servers that are allowed to share an entangled quantum system). For the $t^{th}$ clique ($t\in[T]$), we let the servers $\mathcal{E}(t)$ implement an $N_t$-sum box in $\mathbb{F}_q = \mathbb{F}_{d^z}$, so that Server $s\in \mathcal{E}(t)$ controls $2N_{t,s}$ inputs in $\mathbb{F}_q$, by using a quantum subsystem $\mathcal{Q}_{t,s}$ with dimension specified to $\delta_{t,s} = q^{N_{t,s}}$. Equivalently, $\mathcal{Q}_{t,s}$ can be considered as $N_{t,s}$ $q$-ary quantum subsystems, or $N_{t,s}z$ qudits. $z \in \mathbb{N}$ is free to be chosen later.

For $t\in [T]$, the transfer matrix of the $t^{th}$ box is denoted as ${\bf M}_t \in \mathbb{F}_q^{N_t\times 2N_t}$.
Recall that for $k\in [K]$, each data stream ${\sf W}_k$ generates symbols in $\mathbb{F}_d$. Equivalently, we can consider these symbols as in $\mathbb{F}_q$ where $q = d^z$ for any $z\in \mathbb{N}$, by regarding each $z$ symbols as one super-symbol in $\mathbb{F}_q$. Since we do not put any constraint on the batch size $L$, let ${\bf W}_k \in \mathbb{F}_q^{L'\times 1}$ denote the first $L'$ symbols of ${\sf W}_k$ considered in $\mathbb{F}_q$ (which correspond to $L = L'z$ symbols in the original field $\mathbb{F}_d$, or ${\sf W}_k^{(L'z)}$). 

Next let us define the input and the output of the $t^{th}$ box. The input of the $t^{th}$ box is ${\bf x}_t\in \mathbb{F}_q^{2N_t \times 1}$, and therefore the output of the $t^{th}$ box is ${\bf y}_t = {\bf M}_t{\bf x}_t \in \mathbb{F}_q^{N_t \times 1}$. ${\bf x}_t$ consists of the $N_{t,s}$ pairs of inputs controlled by Servers $s\in \mathcal{E}(t)$. It is then obvious that the $t^{th}$ box is a MIMO-MAC channel with all symbols defined in $\mathbb{F}_q$, and the input of each server $s\in \mathbb{E}(t)$ corresponds to some $2N_{t,s}$ columns of ${\bf M}_t$. We will also say that these $2N_{t,s}$ columns are controlled by Server $s$.
A column of ${\bf M}_t$ is said to be \emph{accessible} by a data stream ${\sf W}_k, k\in [K]$ if and only if this column is controlled by some server $s \in \mathcal{E}(t)$ that also knows this data stream, i.e., $s \in \mathcal{E}(t)\cap \mathcal{W}(k)$.  

For $k\in [K]$, define ${\bf M}_{t,k}$ as the set of columns of ${\bf M}_t$ that are \emph{accessible} by the $k^{th}$ data stream. Note that ${\bf M}_{t,k}$ can be empty for some $k$, if the $t^{th}$ clique does not contain any server that knows ${\sf W}_k$. We claim that the output of the $t^{th}$ box can be made as ${\bf y}_t = \sum_{k\in[K]}{\bf M}_{t,k} \phi_{t,k}({\bf W}_k)$, where $\phi_{t,k}$ maps ${\bf W}_k$ to a vector in $\mathbb{F}_q$ with length $\sum_{s\in \mathcal{E}(t) \cap \mathcal{W}(k)}2N_{t,s}$. 

To prove the claim, for any $t\in [T], k\in [K]$, let $\phi_{t,s}^k: \mathbb{F}_q^{L'\times 1} \mapsto \mathbb{F}_q^{2N_{t,s} \times 1}$ describe a map if $s\in \mathcal{E}(t) \cap \mathcal{W}(k)$, and let the input for any one use of the $t^{th}$ MIMO-MAC channel (i.e., the $t^{th}$ box) at Server $s\in \mathcal{E}(t)$ be specified as $\sum_{k: s \in  \mathcal{W}(k)} \phi_{t,s}^k ({\bf W}_k)$.
The output of the $t^{th}$ box is then determined as ${\bf y}_t\in \mathbb{F}_q^{N_t \times 1}$ such that,
\begin{align} \label{eq:output_box_clique}
  {\bf y}_t = {\bf M}_t{\bf x}_t = \sum_{k\in [K]} {\bf M}_{t,k} 
  \underbrace{\begin{bmatrix}
    \phi_{t,s_1}^k ({\bf W}_k) \\ \phi_{t,s_2}^k ({\bf W}_k) \\ \vdots \\ \phi_{t,s_n}^k ({\bf W}_k)
  \end{bmatrix}}_{\phi_{t,k}({\bf W}_k)}
\end{align}
where $\{s_1,s_2,\cdots,s_n\} = \mathcal{E}(t)\cap \mathcal{W}(k)$, and ${\bf M}_{t,k}$ is a submatrix of ${\bf M}_t$ comprised of $\sum_{s \in  \mathcal{E}(t) \cap  \mathcal{W}(k)} 2N_{t,s}$ columns of ${\bf M}_t$ that are accessible by ${\bf W}_k$.
Therefore, we obtain the general expression of the output ${\bf y}_t$ as in \eqref{eq:output_box_clique}.

Note that we have in total $T$ boxes and therefore $T$ MIMO-MAC channels. We can equivalently consider the $T$ channels as one big channel, and the output of the big channel (for one channel use) can be written as
\begin{align} \label{eq:MIMO_MAC}
  {\bf y} = 
  \begin{bmatrix}
    {\bf y}_1 \\ {\bf y}_2 \\ \vdots \\ {\bf y}_T
  \end{bmatrix} = 
  \sum_{k\in [K]}
  \underbrace{\begin{bmatrix}
    {\bf M}_{1,k} & {\bf 0} & \cdots & {\bf 0} \\
    {\bf 0} & {\bf M}_{2,k} & \cdots & {\bf 0} \\
    \vdots & \vdots & \ddots & \vdots \\
    {\bf 0} & {\bf 0} & \cdots & {\bf M}_{T,k}
  \end{bmatrix}}_{\overline{\bf M}_k}
  \underbrace{\begin{bmatrix}
    \phi_{1,k}({\bf W}_k) \\ \phi_{2,k}({\bf W}_k) \\ \vdots \\ \phi_{T,k}({\bf W}_k)
  \end{bmatrix}}_{\phi_k({\bf W}_k)}.
\end{align}
Note that there is no restriction on $\phi_{k}$ except the dimensions of its input and output. \eqref{eq:MIMO_MAC} thus describes the input-output relation of a $K$-transmitter MIMO MAC with the $k^{th}$ transmitter knowing only ${\sf W}_k$. According to Corollary \ref{cor:sum_network}, the receiver of this channel (Alice) can compute the sum $\sum_{k\in [K]}{\sf W}_k$ (in $\mathbb{F}_q$) at the rate $\min_{k\in[K]} \rk_q(\overline{\bf M}_k)$ (per channel). This is saying  that with each use of the big channel, Alice is able to get $\min_{k\in[K]} \rk_q(\overline{\bf M}_k)z$ sums in $\mathbb{F}_d$.
Recall that each use of the big channel corresponds to the use of quantum subsystem $\mathcal{Q}_{t,s}$ with its dimension equal to $\delta_{t,s} = q^{N_{t,s}}$ for all $t\in [T], s\in \mathcal{E}(t)$. Since $q = d^z$, it follows that $\log_d \delta_{t,s} = N_{t,s}z$ and that the following set of tuples are feasible,
\begin{align} \label{eq:feasible_set_1}
  \mbox{closure}\left\{ {\bm \Delta} \in \mathbb{R}_+^\Gamma \left|
  \begin{array}{l}
  	  N_{t,s}\in \mathbb{N}, \forall t\in[T], s\in\mathcal{E}(t),\\
  	  \Delta_{t,s} \geq N_{t,s}/\min_{k\in[K]}\rk(\overline{\bf M}_k), t\in [T], s\in \mathcal{E}(t)
  \end{array}
  \right. \right\}.
\end{align}
For fixed $N_{t,s}, t\in [T], s\in \mathcal{E}(t)$, we would like $\min_{k\in[K]}\rk(\overline{\bf M}_k)$ to attain its largest possible value to obtain the largest feasible set. According to \eqref{eq:MIMO_MAC}, it suffices to let ${\bf M}_{t,k}$  have full rank for all $t\in [T], k\in [K]$. This can be done by letting ${\bf M}_1,{\bf M}_2,\cdots, {\bf M}_T$  be `Half-MDS', which is defined as follows.

\begin{definition}[Half-MDS]
	Say ${\bf M} = [{\bf M}^l, {\bf M}^r]\in \mathbb{F}_q^{N\times 2N}$ is the transfer matrix of an $N$-sum box operating in $\mathbb{F}_q$. ${\bf M}^l\in \mathbb{F}_q^{N\times N}$ denotes the left half and ${\bf M}^r\in \mathbb{F}_q^{N\times N}$ denotes the right half. Let $i_1, i_2,\cdots, i_{n} \in \mathbb{N}$ be $n\leq N$ distinct indices not greater than $N$. We say ${\bf M}$ is half-MDS if for all such indices, 
	$$\rk\big(\big[{\bf M}^l_{(i_1,\cdots, i_{n})}, {\bf M}^r_{(i_1,\cdots, i_{n})}\big]\big) = \min\{2n, N\},$$ where ${\bf M}_{(i_1,i_2,\cdots, i_n)}$ denotes the $N\times n$ submatrix of ${\bf M}$ comprised of the $(i_1,i_2,\cdots,i_n)^{th}$ columns of ${\bf M}$. As an example, consider feasible transfer matrices for $2$-sum boxes,
\begin{align}
  {\bf M}_1 = \begin{bmatrix}
    1&1&0&0\\0&0&1&1
  \end{bmatrix},~~
  {\bf M}_2 = \begin{bmatrix}
    1&0&1&0\\0&1&0&0
  \end{bmatrix}.
\end{align}
Note that ${\bf M}_1$ is half-MDS while ${\bf M}_2$ is not. The submatrix comprised of the $2^{nd}$ and $4^{th}$ columns of ${\bf M}_2$ has rank $1<2$.
\end{definition}

\begin{lemma}[Half-MDS $N$-sum box] \label{lem:halfMDS}
  If $q \geq N$, then there exists an $N$-sum box with transfer matrix ${\bf M}\in \mathbb{F}_q^{N\times 2N}$ that is half-MDS.
\end{lemma}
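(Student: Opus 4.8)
The plan is to produce an \emph{explicit} half-MDS box rather than argue existence abstractly. First I would restrict to transfer matrices of the special form $\mathbf{M} = [\mathbf{I}_N \mid \mathbf{S}]$ with $\mathbf{S} = \mathbf{S}^T \in \mathbb{F}_q^{N \times N}$ symmetric. As the earlier footnote records, every such $\mathbf{M}$ automatically satisfies $\rk(\mathbf{M}) = N$ and $\mathbf{M}\mathbf{J}_{2N}\mathbf{M}^T = \mathbf{0}_{N\times N}$, so it is a legitimate $N$-sum box transfer matrix for \emph{any} symmetric $\mathbf{S}$. This disposes of the self-orthogonality constraint for free and reduces the lemma to choosing the free symmetric matrix $\mathbf{S}$ so that the half-MDS rank conditions hold.

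Next I would translate the half-MDS condition into a statement purely about submatrices of $\mathbf{S}$. Fix qudit indices $I \subseteq [N]$ with $|I| = n$. For $\mathbf{M} = [\mathbf{I}_N \mid \mathbf{S}]$ the block $[\mathbf{M}^l_{(I)}, \mathbf{M}^r_{(I)}]$ consists of the standard basis columns $\{\mathbf{e}_i\}_{i\in I}$ together with the columns $\{\mathbf{S}_{:,i}\}_{i\in I}$. Reducing the latter modulo the coordinate subspace spanned by $\{\mathbf{e}_i\}_{i\in I}$ leaves exactly the submatrix $\mathbf{S}[I^c, I]$ with rows indexed by $I^c = [N]\setminus I$, whence $\rk[\mathbf{M}^l_{(I)}, \mathbf{M}^r_{(I)}] = n + \rk\,\mathbf{S}[I^c, I]$. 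Thus $\mathbf{M}$ is half-MDS if and only if $\rk\,\mathbf{S}[I^c, I] = \min(n, N-n)$ for all $I$, and a clean sufficient condition is that every square submatrix $\mathbf{S}[R, C]$ with \emph{disjoint} index sets ($R \cap C = \emptyset$, $|R| = |C|$) be nonsingular. The crucial structural observation is that such disjoint submatrices have size at most $\lfloor N/2 \rfloor$; this is exactly what should permit a field as small as $q \ge N$, whereas the blunter sufficient condition ``$[\mathbf{I}_N\mid\mathbf{S}]$ is a full MDS code'' (which trivially implies half-MDS, since any $2n$ matched columns then have maximal rank) would only be available from a dual-GRS construction at the weaker threshold $q \ge 2N-1$.

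It then remains to build a symmetric $\mathbf{S}$ all of whose disjoint square submatrices are nonsingular over $\mathbb{F}_q$ with $q \ge N$. The natural attempt is a symmetric Cauchy matrix $\mathbf{S}_{ij} = (a_i + a_j)^{-1}$ for distinct $a_1,\dots,a_N$, with the diagonal (which never enters a disjoint submatrix) set arbitrarily: the Cauchy determinant formula renders every disjoint minor a nonvanishing ratio of differences $a_i - a_{i'}$. I would use this verbatim in characteristic $2$, where $a_i + a_j \ne 0$ holds automatically for distinct points and $q \ge N$ suffices. In odd characteristic the same template forces the evaluation points to avoid antipodal pairs $a_i = -a_j$, which a priori needs $q \ge 2N-1$; to recover the tight bound I would instead take the matched column-pair for index $i$ to be a value-and-derivative (confluent Vandermonde / Hermite) pair at a single point $a_i$, so that each half-MDS minor becomes a confluent Vandermonde determinant in the $|I|$ distinct points $\{a_i\}_{i\in I}$, nonzero as soon as the $N$ points are distinct, i.e. as soon as $q \ge N$.

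The step I expect to be the main obstacle is pinning down the tight threshold $q \ge N$ \emph{uniformly across characteristics} while simultaneously securing all three requirements: symmetry of $\mathbf{S}$ (equivalently the symplectic self-orthogonality $\mathbf{M}\mathbf{J}_{2N}\mathbf{M}^T = \mathbf{0}$), nonsingularity of every disjoint minor, and a ground field of size only $N$. The two easy routes each fail at one extreme -- the symmetric Cauchy family has an (unused, hence harmless) undefined diagonal in characteristic $2$ but needs non-antipodal points in odd characteristic, while a confluent-Vandermonde family sidesteps the antipodal obstruction but requires separately verifying that the resulting $\mathbf{M}$ is (or can be row-reduced to) strongly self-orthogonal and that the Hermite determinants do not degenerate in small characteristic, where derivative coefficients $k \mapsto k \bmod p$ may vanish. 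Confirming that one explicit family threads all three constraints at $q \ge N$, using decisively that the disjoint minors never exceed size $\lfloor N/2 \rfloor$, is the crux; once the construction is fixed, the remaining Cauchy/Vandermonde determinant evaluations are routine.
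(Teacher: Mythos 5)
Your reduction to $\mathbf{M}=[\mathbf{I}_N\mid \mathbf{S}]$ with $\mathbf{S}$ symmetric, the identity $\rk\,[\mathbf{M}^l_{(I)},\mathbf{M}^r_{(I)}]=|I|+\rk\,\mathbf{S}[I^c,I]$, and the resulting disjoint-minor criterion are all correct, and in characteristic $2$ the symmetric Cauchy matrix does settle the lemma at $q\geq N$ (the diagonal indeed never enters a disjoint submatrix). The genuine gap is the odd-characteristic case, which you flag as the crux but do not close -- and your proposed escape route provably fails rather than merely being unverified. The Cauchy family needs non-antipodal evaluation points, of which $\mathbb{F}_q$ ($q$ odd) contains at most $(q+1)/2$, so it only reaches $q\geq 2N-1$. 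The confluent value-and-derivative pairing, meanwhile, is never strongly self-orthogonal: with value columns $(a_i^{k})_{k=0}^{N-1}$ and derivative columns, one computes $(\mathbf{M}\mathbf{J}_{2N}\mathbf{M}^T)_{uv}=(u-v)\sum_i c_i a_i^{u+v-3}$ for any per-pair column scalings $c_i$, so vanishing for all $u\neq v$ forces the power sums $\sum_i c_i a_i^m$ to vanish for $m=0,\dots,2N-4$; since $2N-3\geq N$ for $N\geq 3$, the full-column-rank Vandermonde system forces $c_i=0$ for distinct points, a contradiction. Adding a multiple of the value column into the derivative column cancels out of this commutator, so the obstruction is robust, and since $\mathbf{M}\mapsto A\mathbf{M}$ maps $\mathbf{M}\mathbf{J}_{2N}\mathbf{M}^T$ to $A\,\mathbf{M}\mathbf{J}_{2N}\mathbf{M}^T A^T$, strong self-orthogonality is invariant under invertible row operations, so ``can be row-reduced to strongly self-orthogonal'' cannot rescue it either. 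The lemma therefore remains unproven in your proposal for odd $q$ with $N\leq q<2N-1$.

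The paper closes exactly this hole by abandoning the $[\mathbf{I}_N\mid\mathbf{S}]$ normal form: it takes $\mathbf{M}$ block-diagonal, with the left block a generator matrix of an $[N,\lceil N/2\rceil]$ GRS code and the right block a generator of its dual code, which is again GRS on the \emph{same} $N$ evaluation points (the CSS construction). Strong self-orthogonality is then immediate from code duality, and half-MDS follows from the MDS property of each block separately: any $n$ matched column pairs have rank $\min(n,\lceil N/2\rceil)+\min(n,\lfloor N/2\rfloor)=\min(2n,N)$. This needs only $N$ distinct evaluation points, hence $q\geq N$ uniformly in the characteristic, with no casework. Note that your aside that a dual-GRS route would cost $q\geq 2N-1$ conflates it with a single length-$2N$ MDS code; the paper's two codes each have length $N$, which is precisely what makes the tight threshold attainable. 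Your structural observation that disjoint minors never exceed size $\lfloor N/2\rfloor$ is in the same spirit as the paper's $\lceil N/2\rceil$/$\lfloor N/2\rfloor$ split, but absent an explicit symmetric $\mathbf{S}$ over fields of odd order $q\in[N,2N-2]$ (Schwartz--Zippel does not close at this threshold), your route stays incomplete where the paper's CSS pairing succeeds.
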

\noindent The proof of Lemma \ref{lem:halfMDS} is presented in Appendix \ref{sec:proof_halfMDS}.

Recall that we are free to choose $z$. Therefore, if we choose $z > \log_d (\max_{t\in [T]} N_t) \implies q = d^z > \max_{t\in [T]} N_t$, then $\overline{\bf M}_t$ can be made half-MDS for all $t\in [T]$. Now that ${\bf M}_t$ is half-MDS, the rank of ${\bf M}_{t,k}$ is equal to $\min(N_t, \sum_{s \in \mathcal{E}(t)\cap \mathcal{W}(k)}2N_{t,s})$. Due to the diagonal block structure of $\overline{\bf M}_k$, the rank of $\overline{\bf M}_k$ is equal to $\sum_{t\in[T]}\min\left(N_t, \sum_{s\in \mathcal{E}(t)\cap \mathcal{W}(k)}2N_{t,s}\right)$. Plugging this value into \eqref{eq:feasible_set_1} we further obtain that the following set of tuples are feasible,
\begin{align}
\mbox{closure}&\left\{ {\bm \Delta} \in \mathbb{R}_+^\Gamma \left|
  \begin{array}{l}
  	  N_{t,s}\in \mathbb{N}, \forall t\in[T], s\in\mathcal{E}(t),\\
  	  R \in \mathbb{N},\\
  	  \sum_{t\in[T]}\min\left(N_t, \sum_{s\in \mathcal{E}(t)\cap \mathcal{W}(k)}2N_{t,s}\right) \geq R,~ \forall k\in[K], \\
  	  \Delta_{t,s} \geq N_{t,s}/R, t\in [T], s\in \mathcal{E}(t).
  \end{array}
  \right. \right\} \\
  = &\left\{ {\bm \Delta} \in \mathbb{R}_+^\Gamma \left|
  \begin{array}{l}
   	  \sum_{t\in[T]}\min\left(\sum_{s\in\mathcal{E}(t)}\Delta_{t,s}, \sum_{s\in \mathcal{E}(t)\cap \mathcal{W}(k)}2\Delta_{t,s}\right) \geq 1,~ \forall k\in[K]
  \end{array}
  \right. \right\},
\end{align}
which is the same as the region $\mathcal{D}$ specified in Theorem \ref{thm:main}.

\begin{remark}
We remark that the initial quantum state that is used when constructing a half-MDS $N$-sum box is an absolutely maximally entangled (AME) state \cite{helwig2012absolute}. Specifically, say a quantum system $\mathcal{Q}$ of $N$ qudits is in a pure state $\ket{\psi}$. Then the state is absolutely maximally entangled if for any partition of the system into $\mathcal{Q}_A$ with $k \leq \lfloor N/2 \rfloor$-qudits and $\mathcal{Q}_B$ with the remaining $N-k$ qudits,  we have $S(\rho^A) = S(\rho^B) = k$ (dits), where $\rho^A$ and $\rho^B$ denote the reduced density matrices of $\mathcal{Q}_A$ and $\mathcal{Q}_B$, respectively. $S(\cdot)$ denotes the von Neumann entropy. According to  \cite{Allaix_N_sum_box}, the generator matrix of the stabilizer code when constructing the box with the transfer matrix ${\bf M} = [{\bf M}_l, {\bf M}_r]$ is ${\bf G} = \ppsmatrix{-{\bf M}_r^T \\ {\bf M}_l^T}$. Denote the initial state for this maximal stabilizer code as $\ket{\bf G}$. \cite{bahramgiri2006graph} shows that $\ket{\bf G}$ is equivalent to a graph state under local Clifford operations. \cite{huber2013structure} then shows that the graph state is maximally entangled if ${\bf G}^T$ is the generator matrix of an $[[n,k,d]]$ quantum MDS (stabilizer) code with $n=2k$. This is satisfied because ${\bf M}$ is half-MDS, and so is ${\bf G}^T$. Since local Clifford operations do not affect the property of AME, we conclude that the initial state corresponding to the box ${\bf M}$ is an AME state.
\end{remark}

\section{Conclusion}
A sharp capacity characterization for the $\Sigma$-QMAC bodes well for future generalizations, that include in particular, the Linear Computation QMAC (LC-QMAC). As the quantum extension of the LC-MAC, which is the counterpart of the LCBC (linear computation broadcast) problem studied in \cite{Sun_Jafar_CBC, Yao_Jafar_KLCBC}, the LC-QMAC assumes $S$ servers, $K$ data streams of $\mathbb{F}_d$ symbols, and a user (Alice) who wants to compute an \emph{arbitrary $\mathbb{F}_d$ linear function} of the data streams. For example, with data streams ${\sf W}_1,{\sf W}_2,\cdots, {\sf W}_K$ represented as vectors over $\mathbb{F}_d$, Alice wants to compute ${\sf F}={V}_1{\sf W}_1+{V}_2{\sf W}_2+\cdots+{V}_K{\sf W}_K$ for arbitrary linear transformations (matrices) $V_1, V_2, \cdots, V_K$ that are specified by the problem. Note that if $V_1, V_2, \cdots, V_K$ are invertible square matrices then the problem reduces to the $\Sigma$-QMAC, whose capacity is found in this work. This is because without loss of generality each $V_i{\sf W}_i$ can be defined to be a data stream $\widetilde{\sf W}_i$ over an extension field, leaving Alice only with the task of computing the sum, i.e., the $\Sigma$-QMAC setting. 
The general LC-QMAC setting, however, allows arbitrary matrices $V_1, V_2, \cdots, V_K$. In addition, the LC-QMAC specification includes arbitrary side-information at Alice of the form ${\sf F}'={V}_1'{\sf W}_1+{V}_2'{\sf W}_2+\cdots+{V}_K'{\sf W}_K$, which can be quite useful for improving the communication efficiency of linear computation. Furthermore, the LC-QMAC allows the data available to each server to be arbitrary linear functions of the data streams, i.e., Server $s$ has data of the form ${U}_{1s}'{\sf W}_1+{U}_{2s}'{\sf W}_2+\cdots+{U}_{Ks}'{\sf W}_K$, which adds another layer of both conceptual and combinatorial complexity. Indeed the capacity of even the classical LCMAC setting is not yet fully known. For example, consider an LC-QMAC setting over $\mathbb{F}_3$, with $K=2$ data streams ${\sf A},{\sf B}$, and $3$ servers that have $({\sf A}), ({\sf B}), ({\sf A+2B})$ respectively. Say Alice has no side-information and only wants  to compute ${\sf A}+{\sf B}$.  The  capacity of this LC-MAC is not known to the best of our knowledge, but  the corresponding quantum setting is trivial, i.e., the rate $R = 1$ qudit/dit is achieved simply if any two servers apply the $2$-sum protocol. Similarly, there are QPIR settings where the capacity is known, while the corresponding classical cases remains open \cite{song_multiple_server_PIR, song_colluding_PIR, QMDSTPIR, song_all_but_one_collusion}. Thus, quantum settings can be tractable even when their classical counterparts are not. The sufficiency of the $N$-sum box abstraction for the LC-QMAC is an especially intriguing question. Aside from the LC-QMAC, generalizations in other directions, e.g., towards noisy quantum channels and correlated inputs as in \cite{sohail2022unified, sohail2022computing, hayashi2021computation},  other forms of decoding locality restrictions as in \cite{hayashilocal}, and to non-linear computations as in \cite{christensen2023private,Lu_Yao_Jafar_Prod} are also of interest. 

\section*{Acknowledgment}
The authors gratefully acknowledge helpful discussions with Matteo Allaix from Aalto University and Yuxiang Lu from University of California Irvine.

\appendix

\section{Arbitrarily large DSC Gain over QMAC}\label{app:largeDSC}
Here we show that the DSC gain in the QMAC for certain partial function computations can be arbitrarily large by providing an example. The construction of this example largely relies on a bound of the chromatic number of the power of a family of graphs (referred to as the quarter-orthogonality graphs) presented in \cite{briet2014entanglement} that is based on earlier results in zero-error information theory such as \cite{Witsenhausen, alon1998shannon}, together with insights from quantum communication complexity literature such as \cite{brassard1999cost, PSQM}.

Let $\kappa = 4p^r$ for an odd prime $p$ and a positive integer $r$. For two vectors $v_1,v_2$ with the same length, let $h(v_1,v_2)$ denote the Hamming distance between them, i.e., the number of positions where their elements are distinct. Let ${\sf A}, {\sf B}$ be data streams with  realizations in $\{+1,-1\}^\kappa$ such that $h({\sf A}, {\sf B}) \in \{0, \kappa/2\}$. There are $2$ servers in the QMAC. Server $A$ knows only ${\sf A}$ and Server $B$ knows only ${\sf B}$. The function to compute at Alice is defined as ${\sf F} = h({\sf A}, {\sf B})$. Note that due to the dependence between ${\sf A}$ and ${\sf B}$, the function to compute has binary output.

Let us first consider the capacity of the fully-unentangled case, $C^{\unent}$. Denote by $\mathcal{Q}_A$ a quantum system of dimension $\delta_A$ that is sent from Server $A$ and by $\mathcal{Q}_B$ a quantum system of dimension $\delta_B$ that is sent from Server $B$. Since our QMAC formulation does not allow the POVMs to depend on the data, and since there is no entanglement established between the two servers, without loss of generality, consider that Server $A$ sends to Alice a state that is picked from a set of orthogonal states $\{\ket{a}_A\}_{a\in [\delta_A]}$ and that Server $B$ sends to Alice a state from a set of orthogonal states $\{\ket{b}_B\}_{b\in [\delta_B]}$, which let Alice to perfectly recover $(a,b)$. Suppose a genie provided Alice with ${\sf A}^{[L]}$. She would still need to recover ${\sf F}^{[L]}$ from $(b,{\sf A}^{[L]})$. Let $G_{\kappa}(V,E)$ be the \emph{orthogonality graph} with vertices $V$ uniquely mapping to $\{+1,-1\}^{\kappa}$ and for $v_1\not=v_2 \in V$, $(v_1,v_2) \in E$ if and only if $h(v_1,v_2) = \kappa/2$. Let $H_{\kappa-1}(V',E')$ be the \emph{quarter-orthogonality graph} with vertices $V'$ uniquely mapping to the vectors in $\{+1,-1\}^{\kappa-1}$ that have an even number of ``$-1$" entries, and for $v_1'\not=v_2' \in V'$, $(v_1',v_2') \in E'$ if and only if $h(v_1',v_2') = \kappa/2$. The quarter-orthogonality graph $H_{\kappa-1}$ is a subgraph of the orthogonality graph $G_{\kappa}$ \cite{briet2014entanglement}.
  Similar to the reasoning in \cite{Witsenhausen}, $\delta_B \geq \chi(G_{\kappa}^{\boxtimes L})$ where $\boxtimes$ denotes the strong product of graphs\footnote{For graphs $G$ and $H$ with respective vertex sets $V(G)$ and $V(H)$, define $G\boxtimes H$ as the strong product of $G$ and $H$ such that the vertex set of $G\boxtimes H$ is the Cartesian product $V(G)\times V(H)$; and distinct vertices $(u,u')$ and $(v,v')$ are adjacent in $G\boxtimes H$ if and only if: $u=v$ and $u'$ is adjacent to $v'$, or $u'=v'$ and $u$ is adjacent to $v$, or $u$ is adjacent to $v$ and $u'$ is adjacent to $v'$. $G^{\boxtimes L}$ is then defined as $\underbrace{G\boxtimes G\boxtimes \cdots \boxtimes G}_L$.} and $\chi(\cdot)$ denotes the chromatic number. 
  \cite[Cor. 5.8]{briet2014entanglement} shows that $\chi(H_{\kappa-1}^{\boxtimes L}) \geq 2^{(0.154\kappa-1.154)L}$. Since $H_{\kappa-1}$ is a subgraph of $G_{\kappa}$, $H_{\kappa-1}^{\boxtimes L}$ is a subgraph of $G_{\kappa}^{\boxtimes L}$. It follows that $\chi(G_{\kappa}^{\boxtimes L}) \geq 2^{(0.154\kappa-1.154)L}$ as the chromatic number of a graph cannot be less than the chromatic number of its subgraph. Thus, we obtain that $\log_2\delta_B/L \geq 0.154\kappa-1.154$. Due to symmetry between the two servers, we obtain that $C^{\unent} \leq \sup_{L\to \infty} \frac{L}{\log_2\delta_A + \log_2\delta_B} \leq \frac{1/2}{0.154\kappa-1.154}$ (computations per qubit).

Next, we show that the fully-entangled capacity $C^{\fullent} \geq \frac{1/2}{\log_2 \kappa}$. Thus, the DSC gain is at least $\frac{0.154\kappa -1.154}{\log_2 \kappa}$, which can be made arbitrary large by choosing $\kappa$ large enough. The scheme that achieves $\frac{1/2}{\log_2 \kappa}$ only needs batch size $L=1$. The scheme is a generalization of the scheme in the problem referred to as the distributed Deutsch-Jozsa problem \cite{brassard1999cost, PSQM} where $p=2$ (but here we need $p$ to be an odd prime). Let Server $A$ and Server $B$ share an entangled state $\ket{\mbox{GHZ}} = \frac{1}{\sqrt{\kappa}}\sum_{x\in [\kappa]}\ket{xx}$. 
Let $U_{\sf A} = \mbox{diag}({\sf A})$ be a $\kappa \times \kappa$ diagonal matrix with the elements of vector ${\sf A}$ on the main diagonal. Similarly let $U_{\sf B} = \mbox{diag}({\sf B})$. Note that $U_{\sf A}$ and $U_{\sf B}$ are unitary matrices. Let Server $A$ apply the unitary operator $U_{\sf A}$ and Server $B$ apply the unitary operator $U_{\sf B}$ to their respective quantum subsystems. Then the resulting state is $(U_{\sf A} \otimes U_{\sf B})\ket{\mbox{GHZ}} = \frac{1}{\sqrt{\kappa}}\mbox{vec}(U_{\sf B}^TU_{\sf A})$, where $\mbox{vec}(\cdot)$ denotes the column-major vectorization function\footnote{$\mbox{vec}(A) \triangleq [a_{1,1},\cdots, a_{m,1},a_{1,2},\cdots, a_{m,2},\cdots,a_{1,n},\cdots, a_{m,n}]^T$, where $a_{i,j}$  represents the element in the $i^{th}$ row and $j^{th}$ column of $A$.}. Note that we used the identity $\mbox{vec}(ABC) = (C^T\otimes A)\mbox{vec}(B)$ \cite{hardy2019matrix} along with the fact that $\ket{\mbox{GHZ}} = \frac{1}{\sqrt{\kappa}}\mbox{vec}({\bf I}_{\kappa})$. Alice measures the quantum system by a PVM with two projectors $P_1 = \ket{\mbox{GHZ}}\bra{\mbox{GHZ}}$ and $P_2 = {\bf I}_{\kappa} - P_1$. The measurement result being $1$ (associated with $P_1$) has probability $\tr(U_{\sf B}^T U_{\sf A})/\kappa = {\sf B}^T{\sf A}/\kappa$, which is equal to $1$ if $h({\sf A},{\sf B}) = 0$, and equal to $0$ if $h({\sf A},{\sf B}) = \kappa/2$. This means that Alice is able to distinguish the two possibilities of $h({\sf A},{\sf B})$ with certainty.

\section{Proof of Corollary \ref{cor:cqcc}} \label{proof:cqcc}
Let
{\small $\mathcal{D}^{\unent}_{1/2} \triangleq \{(\Delta_1/2,\cdots,\Delta_S/2) \mid (\Delta_1,\cdots, \Delta_S) \in \mathcal{D}^{\unent}\}$} and
{\small $\mathcal{D}_1 \triangleq \left \{(\Delta_1,\cdots, \Delta_S) \mid \sum_{s\in[S]} \Delta_s \geq 1 \right \}$}.  
Then Corollaries \ref{cor:unconstrained} and \ref{cor:unentangled} together imply that
\begin{align}
\mathcal{D}^{\fullent} = \mathcal{D}^{\unent}_{1/2} \cap \mathcal{D}_1.
\end{align}
Let
{\small  $(\Delta_1^*, \cdots, \Delta_S^*)$ be a solution of $\footnotesize \arg\min_{(\Delta_1,\cdots,\Delta_S) \in \mathcal{D}^{\unent}} \sum_{s\in[S]} \Delta_s$}.
It follows that 
{\small $(\Delta_1^*/2, \cdots, \Delta_S^*/2)$} is a solution of $\arg\min_{(\Delta_1,\cdots,\Delta_S) \in \mathcal{D}^{\unent}_{1/2}} \sum_{s\in[S]} \Delta_s$. Consider two cases.
\begin{enumerate}
	\item If {\small $\sum_{s\in[S]} \Delta^*_s/2 \geq 1$}, i.e., {\small $C^{\unent} \leq 1/2$}, then {\small $(\Delta_1^*/2, \cdots, \Delta_S^*/2) \in \mathcal{D}_1$ and thus $(\Delta_1^*/2, \cdots, \Delta_S^*/2) \in \mathcal{D}^{\fullent}$}. It follows that {\small $(\Delta_1^*/2, \cdots, \Delta_S^*/2)$} is a solution of $\arg\min_{(\Delta_1,\cdots,\Delta_S) \in \mathcal{D}^{\fullent}} \sum_{s\in[S]} \Delta_s$. This implies that {\small $C^{\fullent} = ({\sum_{s\in[S]}\Delta_s^*/2})^{-1} = 2 ({\sum_{s\in[S]}\Delta_s^*})^{-1} =  2C^{\unent}$}.
	\item Otherwise, if {\small $\sum_{s\in[S]} \Delta^*_s/2 < 1$}, i.e., {\small $C^{\unent} > 1/2$}, there exists {\small $(\Delta_1', \cdots,\Delta_S')$} such that {\small $\Delta_s'\geq \Delta_s^*/2,\forall s\in [S]$} and {\small $\sum_{s\in[S]}\Delta_s' = 1$}. Note that {\small $(\Delta_1',\cdots,\Delta_S') \in \mathcal{D}^{\unent}_{1/2}$} by the definition of $\mathcal{D}^{\unent}$ (Corollary \ref{cor:unentangled}) and the definition of $\mathcal{D}^{\unent}_{1/2}$. Since {\small $\sum_{s\in[S]}\Delta_s' = 1$}, we have {\small $(\Delta_1',\cdots,\Delta_S')\in \mathcal{D}_1$} and therefore {\small $(\Delta_1',\cdots,\Delta_S')\in \mathcal{D}^{\fullent}$}. This implies that {\small $C^{\fullent} \geq 1$} (and thus {\small $C^{\fullent} = 1$} as {\small $C^{\fullent}$} is also upper bounded by $1$). 
\end{enumerate}
Combining the two cases, we have {\small $C^{\fullent} = \min(1,2C^{\unent})$} and thus {\small $C^{\fullent}/C^{\unent} = \min(2,1/C^{\unent})$}.

\section{Proof of Corollary \ref{cor:symmetric}} \label{proof:symmetric}
\subsection{Proof of \eqref{eq:symmetric}}
Recall that for the symmetric settings, $K = \binom{S}{\alpha}, T= \binom{S}{\beta}$ and $\mathcal{W}: [K] \biject \binom{[S]}{\alpha}$, $\mathcal{E}: [T]\biject \binom{[S]}{\beta}$. We want to find
\begin{align}
	F^* \triangleq \min_{{\bm \Delta} \in \mathcal{D}} \sum_{t\in[T]}\sum_{s\in\mathcal{E}(t)} \Delta_{t,s},
\end{align}
where the feasible region $\mathcal{D}$ here is determined by Theorem \ref{thm:main} for the symmetric data replication and entanglement distribution maps. $C_{\alpha}^{(\beta)}$ immediately follows as $1/F^*$. Due to symmetry, the minimal value of $\sum_{t\in[T]}\sum_{s\in \mathcal{E}(t)}\Delta_{t,s}$ is achieved by $\Delta_{t,s} = \Delta_o \in\mathbb{R}_+, \forall t\in[T],s\in\mathcal{E}(t)$. Again by symmetry, the $K$ conditions in $\mathcal{D}$ are identical in the form $f(\alpha,\beta) \Delta_o \geq 1$, where $f(\alpha,\beta) \in \mathbb{Z}^+$ is a function of $(\alpha,\beta)$. Next we derive $f(\alpha,\beta)$. Consider the value $k$ such that $\mathcal{W}(k) = \{1,2,\cdots, \alpha\} = [\alpha]$. We have
\begin{align}
	f(\alpha,\beta) &= \sum_{t\in[T]} \min \big(|\mathcal{E}(t)|, ~2|\mathcal{E}(t)\cap [\alpha]|\big) \\
	&= \sum_{\mathcal{B} \in \binom{[S]}{\beta}} \min \big( \beta, ~2|\mathcal{B} \cap [\alpha]| \big) \\
	&= \sum_{\gamma = (\alpha+\beta-S)^+}^{\min(\alpha,\beta)} \min (\beta,~ 2\gamma) \cdot \Big|\Big\{\mathcal{B} \in \mbox{$\binom{[S]}{\beta}$} ~\big|~  |\mathcal{B}\cap [\alpha] | = \gamma  \Big\}\Big| \\
	&= \sum_{\gamma = (\alpha+\beta-S)^+}^{\min(\alpha,\beta)} \min (\beta,~ 2\gamma) \cdot \binom{\alpha}{\gamma}\cdot \binom{S-\alpha}{\beta-\gamma}.
\end{align}
It follows that $F^* = \frac{\beta T}{f(\alpha,\beta)}$ and therefore $C_{\alpha}^{(\beta)} = 1/F^* = \frac{f(\alpha,\beta)}{\beta T}$.

\subsection{Proof of \eqref{eq:symmetric_equiv_1}}
We can rewrite $C_{\alpha}^{(\beta)}$ as
{\small
\begin{align}
	C_{\alpha}^{(\beta)} &= \frac{1}{\beta T} \sum_{\gamma=(\alpha+\beta-S)^+}^{\min(\alpha, \beta)} \big( 2\gamma - (2\gamma-\beta)^+ \big) \cdot \binom{\alpha}{\gamma}\cdot \binom{S-\alpha}{\beta-\gamma} \\
	&= \frac{2}{\beta T} \sum_{\gamma=(\alpha+\beta-S)^+}^{\min(\alpha, \beta)} \gamma \cdot \binom{\alpha}{\gamma}\cdot \binom{S-\alpha}{\beta-\gamma} - \frac{1}{\beta T} \sum_{\gamma=(\alpha+\beta-S)^+}^{\min(\alpha, \beta)} (2\gamma-\beta)^+ \cdot \binom{\alpha}{\gamma}\cdot \binom{S-\alpha}{\beta-\gamma} \\
	&= \frac{2\alpha}{\beta T} \sum_{\gamma=(\alpha+\beta-S)^+}^{\min(\alpha, \beta)} \binom{\alpha-1}{\gamma-1}\cdot \binom{S-\alpha}{\beta-\gamma} - \frac{1}{\beta T} \sum_{\gamma=(\alpha+\beta-S)^+}^{\min(\alpha, \beta)} (2\gamma-\beta)^+ \cdot \binom{\alpha}{\gamma}\cdot \binom{S-\alpha}{\beta-\gamma} \\
	&= \frac{2\alpha}{\beta T} \cdot \binom{S-1}{\beta -1} - \frac{1}{\beta T} \sum_{\gamma=\max(\alpha+\beta-S, \lceil \beta/2 \rceil)}^{\min(\alpha, \beta)} (2\gamma-\beta) \cdot \binom{\alpha}{\gamma}\cdot \binom{S-\alpha}{\beta-\gamma} \\
	&= \frac{2\alpha}{S} - \frac{1}{\beta T} \sum_{\gamma=\max(\alpha+\beta-S, \lceil \beta/2 \rceil)}^{\min(\alpha, \beta)} (2\gamma-\beta) \cdot \binom{\alpha}{\gamma}\cdot \binom{S-\alpha}{\beta-\gamma}
\end{align}
}

\subsection{Proof of \eqref{eq:symmetric_equiv_2}}
Alternatively, we can rewrite $C_{\alpha}^{(\beta)}$ as
{\small
\begin{align}
	C_{\alpha}^{(\beta)} &= \frac{1}{\beta T}\sum_{\gamma=(\alpha+\beta-S)^+}^{\min(\alpha, \beta)} \big( \beta - (\beta-2\gamma)^+ \big) \cdot \binom{\alpha}{\gamma}\cdot \binom{S-\alpha}{\beta-\gamma} \\
	&=  \frac{1}{T} \sum_{\gamma=(\alpha+\beta-S)^+}^{\min(\alpha, \beta)}\binom{\alpha}{\gamma}\cdot \binom{S-\alpha}{\beta-\gamma} - \frac{1}{\beta T}\sum_{\gamma=(\alpha+\beta-S)^+}^{\min(\alpha, \beta)}(\beta-2\gamma)^+ \cdot \binom{\alpha}{\gamma}\cdot \binom{S-\alpha}{\beta-\gamma} \\
	&= \frac{1}{T} \binom{S}{\beta} - \frac{1}{\beta T} \sum_{\gamma = (\alpha+\beta-S)^+}^{\min(\alpha,\lfloor \beta/2 \rfloor)} (\beta -2\gamma) \cdot \binom{\alpha}{\gamma}\cdot \binom{S-\alpha}{\beta-\gamma} \\
	&= 1 - \frac{1}{\beta T} \sum_{\gamma = (\alpha+\beta-S)^+}^{\min(\alpha,\lfloor \beta/2 \rfloor)} (\beta -2\gamma) \cdot \binom{\alpha}{\gamma}\cdot \binom{S-\alpha}{\beta-\gamma}
\end{align}
}

\subsection{Proof of \eqref{eq:beta_star}}
The case when $\alpha = S$ is trivial. For $\alpha \leq \lfloor S/2 \rfloor$, let us make use of \eqref{eq:symmetric_equiv_1}. It is not difficult to obtain that $C_{\alpha}^{(\beta)} = \frac{2\alpha}{S}$ when $\beta \geq 2\alpha$, as $0 \leq 2\gamma -\beta \leq 2\alpha -\beta \leq 0$ for $\lceil \beta/2 \rceil \leq \gamma \leq \alpha$, so either $2\gamma -\beta = 0$ or $\gamma$ does not take any value in the summation term. On the other hand, if $\beta < 2\alpha$, we have $\alpha \geq \lceil \beta/2 \rceil$ and thus $\min(\alpha, \beta) \geq \max(\alpha+\beta-S, \lceil\beta/2\rceil)$, so $\gamma$ must take at least one value in the summation. It follows that $C_{\alpha}^{(\beta)} < \frac{2\alpha}{S}$ because $2\min(\alpha, \beta) - \beta > 0$. Since $C_{\alpha}^{(S)} = \frac{2\alpha}{S}$ for $\alpha \leq \lfloor S/2 \rfloor$, we obtain that $\beta^* = 2\alpha$ for these cases.

For $\lceil S/2 \rceil \leq \alpha \leq S-1$, let us make use of \eqref{eq:symmetric_equiv_2}. It is not difficult to obtain that $C_{\alpha}^{(\beta)} = 1$ when $\beta \geq 2(S-\alpha)$, as $0 \leq \beta-2\gamma \leq 2(S-\alpha) -\beta \leq 0$ for $\lfloor \beta /2 \rfloor \leq \gamma \leq 2(S-\alpha)$, so either $\beta -2\gamma =0$ or $\gamma$ does not take any value in the summation term. On the other hand, if $\beta < 2(S-\alpha)$, we have $\lceil \beta/2 \rceil \leq S-\alpha \implies \lfloor \beta/2 \rfloor \geq \alpha +\beta -S$ and thus $\min(\alpha,\lfloor \beta/2 \rfloor) \geq (\alpha+\beta-S)^+$, so $\gamma$ must take at least one value in the summation. It follows that $C_{\alpha}^{(\beta)} < 1$ because $\beta - 2(\alpha+\beta-S)^+ > 0$. Since $C_{\alpha}^{(S)} = 1$ for $\lceil S/2 \rceil \leq \alpha \leq S-1$, we obtain that $\beta^* = 2(S-\alpha)$ for these cases.

\section{Proof of Corollary \ref{cor:bipartite}}\label{proof:2sum}
First let us note that there are two $\Sigma$-QMAC problems involved in the corollary, summarized as follows.
\begin{enumerate}
	\item The original problem $\mathcal{P}$ has data replication map $\mathcal{W}$, $S$ servers and $K$ data streams. We refer to the $S$ servers in $\mathcal{P}$ by $\mathcal{S}_i$ for $i\in [S]$. We are interested in the $2$-party-entangled capacity $C^{(2)}(\mathcal{W})$.
	\item The hypothetical problem $\widetilde{\mathcal{P}}$ has data replication map $\widetilde{\mathcal{W}}$, $\binom{S}{2}$ servers and the same $K$ data streams as in $\mathcal{P}$. We refer to the $\binom{S}{2}$ servers in $\widetilde{\mathcal{P}}$ by $\mathcal{S}_{\{i,j\}}$ for $\{i,j\} \in \binom{[S]}{2}$. $\mathcal{S}_{\{i,j\}}$ has the access to the data streams that are available to either $\mathcal{S}_i$ or $\mathcal{S}_j$ in $\mathcal{P}$.  We are interested in the fully-unentangled capacity $C^{\unent}(\widetilde{\mathcal{W}})$.
\end{enumerate}

Our goal is to prove that $C^{(2)}(\mathcal{W}) = C^{\unent}(\widetilde{\mathcal{W}})$.
The proposition is comprised of two bounds, $C^{(2)}(\mathcal{W}) \leq \widetilde{C}^o(\widetilde{\mathcal{W}})$ and $C^{(2)}(\mathcal{W}) \geq C^{\unent}(\widetilde{\mathcal{W}})$.
To prove $C^{(2)}(\mathcal{W}) \leq C^{\unent}(\widetilde{\mathcal{W}})$, consider any rate $R$ that is achievable in the problem $\mathcal{P}$ with only bipartite entanglement, i.e., any clique contains at most $2$ servers. The output state corresponds to any clique $\{i,j\} \in \binom{[S]}{2}$ of this scheme can always be generated by the server $\mathcal{S}_{\{i,j\}}$ in the problem $\widetilde{\mathcal{P}}$, since this server has the access to all data streams that are available to either $\mathcal{S}_i$ or $\mathcal{S}_j$ in the problem $\mathcal{P}$. In the problem $\widetilde{\mathcal{P}}$, a scheme can let $\mathcal{S}_{\{i,j\}}$ directly transmit this state to Alice. Therefore, $R$ must be achievable in $\widetilde{\mathcal{P}}$ as well, which shows that $C^{(2)}(\mathcal{W}) \leq C^{\unent}(\widetilde{\mathcal{W}})$.

To prove $C^{(2)}(\mathcal{W}) \geq C^{\unent}(\widetilde{\mathcal{W}})$, we need an intermediate result of Theorem \ref{thm:main} that coding based on the $N$-sum box abstraction is optimal in every case. For the unentangled case, this means that each server is simply treating qudits as classical dits. Additionlly, Lemma \ref{lem:multicast} implies that the optimal scheme is also linear.
With this knowledge, given that $R$ is achievable in the problem $\widetilde{\mathcal{P}}$, let us consider a linear scheme that achieves this $R$, where the coders $V_{\{i,j\}}\in \mathbb{F}_d^{N_{\{i,j\}}\times L}$ at the server $\mathcal{S}_{\{i,j\}}$ takes as input $L$ symbols of all its accessible data streams, i.e., ${\sf W}_k, k\in \widetilde{\mathcal{W}}(\{i,j\})$, and outputs a vector $Y_{\{i,j\}} \in \mathbb{F}_d^{N_{\{i,j\}} \times 1}$. In this linear scheme, symbols are considered in the field $\mathbb{F}_d$. Say $\mathcal{S}_{\{i,j\}}$ transmits $N_{\{i,j\}}$ qudits to inform Alice of $Y_{\{i,j\}}$. Upon receiving the $\sum_{\{i,j\}\in \binom{[S]}{2}}N_{\{i,j\}} \triangleq N$ qudits, Alice computes $L$ instances of sum. The scheme satisfies that $\frac{L}{N} \geq R$ by definition. What we will do next is to convert this scheme to a scheme in the problem $\mathcal{P}$ with the use of $2$-sum protocols that achieves the same rate. Recall that the $2$-sum protocol behaves the same as an $\mathbb{F}_d$ additive channel, in the way that the receiver is able to get one sum of the two $\mathbb{F}_d$ inputs from the two transmitters with the cost of one qudit on average. We also point out that each use of the $2$-sum protocol is equivalent to use such an additive channel twice, i.e., the receiver gets two dimensions of the sums at a cost of $2$ qudits.

Now let us look at the original problem $\mathcal{P}$. Let us construct a scheme with batch size $2L$, so that Alice is able to compute $2L$ instances of the sum by using the $2$-sum protocol $N$ times, at a cost of $2N$ qudits,  thus achieving the same rate as in the problem $\mathcal{\widetilde{P}}$.  Recall that $Y_{\{i,j\}}$ denotes  the transmission from the server $\mathcal{S}_{\{i,j\}}$ in the problem $\widetilde{\mathcal{P}}$.  Since $Y_{\{i,j\}}\in \mathbb{F}_d^{N_{\{i,j\}} \times 1}$ is a linear function of the data streams that are known to either $\mathcal{S}_i$ or $\mathcal{S}_j$ in the problem $\mathcal{P}$, we can represent $Y_{\{i,j\}} = Y_{\{i,j\},i} + Y_{\{i,j\},j}$, where $Y_{\{i,j\},i} \in \mathbb{F}_d^{N_{\{i,j\}}\times 1}$ can be computed by $\mathcal{S}_i$, and $Y_{\{i,j\},j} \in \mathbb{F}_d^{N_{\{i,j\}}\times 1}$ can be computed by Server $\mathcal{S}_j$ in the problem $\mathcal{P}$. Therefore, with $N_{\{i,j\}}$ use of the binary additive channel in $\mathbb{F}_d$ between $\mathcal{S}_i$ and $\mathcal{S}_j$, the two servers can transmit the sum $Y_{\{i,j\}}$ to Alice. To apply the $2$-sum protocol, we only need to consider two parallel instances of $Y_{\{i,j\}}$, so that with $N_{\{i,j\}}$ uses of the $2$-sum protocol between $\mathcal{S}_i$ and $\mathcal{S}_j$,  Alice is able to get $2$ instances of $Y_{\{i,j\}}$. Taking all pairs of servers into account, for the converted scheme in the problem $\mathcal{P}$, the $2$-sum protocol (in $\mathbb{F}_d$) is used $\sum_{\{i,j\} \in \binom{[S]}{2}} N_{\{i,j\}} = N$ times in total, allowing Alice to compute $2N$ instances of the sum in $\mathbb{F}_d$.

\section{Proof of Corollary \ref{cor:tripartite}} \label{proof:tripartite}
We show that for any data replication map $\mathcal{W}$, and entanglement distribution map $\mathcal{E}$ where $|\mathcal{E}(t)| = 3$ for some $t$, the clique $\mathcal{E}(t)$ can be replaced by three bipartite cliques (that contain only two servers) without decreasing the capacity.  Formally, let $\mathcal{E}$ be such an entanglement distribution map and without loss of generality, $\mathcal{E}(1) = \{1,2,3\}$. Let $\widetilde{\mathcal{E}}$ be another entanglement distribution map such that $\widetilde{\mathcal{E}}(1) = \{2,3\}$, $\widetilde{\mathcal{E}}(2) = \{1,3\}$, $\widetilde{\mathcal{E}}(3) = \{1,2\}$, and $\widetilde{\mathcal{E}}(t) = \mathcal{E}(t-2)$ for $t\geq 4$.
Let $\mathcal{P}, \widetilde{\mathcal{P}}$ denote the respective $\Sigma$-QMAC problems with entanglement distribution maps $\mathcal{E}$, $\widetilde{\mathcal{E}}$, and both problems have data replication map $\mathcal{W}$. For the problem $\mathcal{P}$, we use $\Delta_{t,s}$ to denote the (normalized) download cost associated with Clique $\mathcal{E}(t)$ and Server $s\in \mathcal{E}(t)$, and we use ${\bm \Delta}$ to denote the download cost tuple.
To avoid confusion, in the problem $\widetilde{\mathcal{P}}$, we use $\widetilde{\Delta}_{t,s}$ to denote the download cost associated with Clique $\widetilde{\mathcal{E}}(t)$ and Server $s\in \widetilde{\mathcal{E}}(t)$, and we use $\widetilde{{\bm \Delta}}$ to denote the download cost tuple.
Since $\mathcal{E}(1)=\{1,2,3\}$ is our main focus, in the following we let $\Delta_{s} \triangleq \Delta_{1,s} ,s\in \{1,2,3\}$ for brevity. Our goal is to show that $C(\mathcal{W},\widetilde{\mathcal{E}}) \geq C(\mathcal{W},\mathcal{E})$ and thus $C(\mathcal{W},\widetilde{\mathcal{E}}) = C(\mathcal{W},\mathcal{E})$ because any coding scheme allowed in the problem $\widetilde{\mathcal{P}}$ is also allowed in the problem $\mathcal{P}$.

Consider any feasible download tuple ${\bm \Delta}$ (in the feasible region implied by Theorem \ref{thm:main}) for the problem $\mathcal{P}$. To focus on the clique $\mathcal{E}(1)$, Theorem \ref{thm:main} implies that ${\bm \Delta}$ is feasible if and only if,
\begin{align}\label{eq:3_way_feasibility}
	\begin{cases}
		\min\{\Delta_1+\Delta_2+\Delta_3, 2\Delta_1\} \geq c_1\\
		\min\{\Delta_1+\Delta_2+\Delta_3, 2\Delta_2\} \geq c_2\\
		\min\{\Delta_1+\Delta_2+\Delta_3, 2\Delta_3\} \geq c_3\\
		\min\{\Delta_1+\Delta_2+\Delta_3, 2\Delta_1+2\Delta_2\} \geq c_{12} \\
		\min\{\Delta_1+\Delta_2+\Delta_3, 2\Delta_1+2\Delta_3\} \geq c_{13} \\
		\min\{\Delta_1+\Delta_2+\Delta_3, 2\Delta_2+2\Delta_3\} \geq c_{23} \\
		\Delta_1+\Delta_2+\Delta_3 \geq c_{123}
	\end{cases}
\end{align}
where $c_1, c_2,\cdots, c_{123}$ are determined by $\mathcal{W}$ and $\big(\Delta_{t,s}\big)_{t\geq 2, s\in \mathcal{E}(t)}$, i.e., the (normalized) download costs associated with the other remaining cliques in $\mathcal{E}$.

Let us note that it is without loss of generality to consider such feasible tuples with $\Delta_{i} \leq \Delta_{j} + \Delta_{k}$ for $\{i,j,k\}\in \{\{1,2,3\},\{2,1,3\},\{3,1,2\}\}$ if we are only interested in their sum $\Delta_1+\Delta_2+\Delta_3$, because otherwise (say $\Delta_{1} > \Delta_{2} + \Delta_{3}$) we can let 
\begin{align}
	\begin{bmatrix}
		\Delta_1' \\ \Delta_2' \\ \Delta_3'
	\end{bmatrix}
	=
	\begin{bmatrix}
		(\Delta_1+\Delta_2+\Delta_3)/2 \\ (\Delta_1+\Delta_2-\Delta_3)/2 \\ \Delta_3
	\end{bmatrix}
\end{align}
so that \eqref{eq:3_way_feasibility} is also satisfied if we replace $(\Delta_1,\Delta_2,\Delta_3)$ with $(\Delta_1',\Delta_2',\Delta_3')$. Note that $\Delta_1'+\Delta_2'+\Delta_3' = \Delta_1+\Delta_2+\Delta_3$ but now $\Delta_1' = \Delta_2'+\Delta_3'$. 

Now let us study the problem $\widetilde{\mathcal{P}}$. Note that by definition, $\widetilde{\mathcal{E}}(t) = \mathcal{E}(t-2)$ for $t\geq 4$. Then by Theorem \ref{thm:main}, the download cost tuple $\widetilde{{\bm \Delta}}$ is feasible if and only if
\begin{align} \label{eq:2_way_feasibility}
	\begin{cases}
		\widetilde{\Delta}_{t,s} = \Delta_{t-2,s}, \forall t\geq 4, s\in \widetilde{\mathcal{E}}(t)\\
		\min\{\widetilde{\Delta}_{2,1}+\widetilde{\Delta}_{2,3}, 2\widetilde{\Delta}_{2,1}\} + \min\{\widetilde{\Delta}_{3,1}+\widetilde{\Delta}_{3,2}, 2\widetilde{\Delta}_{3,1}\} \geq c_1 \\
		\min\{\widetilde{\Delta}_{1,2}+\widetilde{\Delta}_{1,3}, 2\widetilde{\Delta}_{1,2}\} + \min\{\widetilde{\Delta}_{3,1}+\widetilde{\Delta}_{3,2}, 2\widetilde{\Delta}_{3,2}\} \geq c_2 \\
		\min\{\widetilde{\Delta}_{1,2}+\widetilde{\Delta}_{1,3}, 2\widetilde{\Delta}_{1,3}\} + \min\{\widetilde{\Delta}_{2,1}+\widetilde{\Delta}_{2,3}, 2\widetilde{\Delta}_{2,3}\} \geq c_3\\
		\min\{\widetilde{\Delta}_{1,2}+\widetilde{\Delta}_{1,3},2\widetilde{\Delta}_{1,2}\} + \min\{\widetilde{\Delta}_{2,1}+\widetilde{\Delta}_{2,3},2\widetilde{\Delta}_{2,1}\}+ \widetilde{\Delta}_{3,1}+\widetilde{\Delta}_{3,2} \geq c_{12} \\
		\min\{\widetilde{\Delta}_{1,2}+\widetilde{\Delta}_{1,3},2\widetilde{\Delta}_{1,3}\}+ \widetilde{\Delta}_{2,1}+\widetilde{\Delta}_{2,3} + \min\{\widetilde{\Delta}_{3,1} + \widetilde{\Delta}_{3,2}, 2\widetilde{\Delta}_{3,1}\} \geq c_{13} \\
		\widetilde{\Delta}_{1,2}+\widetilde{\Delta}_{1,3} + \min\{ \widetilde{\Delta}_{2,1}+\widetilde{\Delta}_{2,3}, 2\widetilde{\Delta}_{2,3} \} + \min\{\widetilde{\Delta}_{3,1} + \widetilde{\Delta}_{3,2}, 2\widetilde{\Delta}_{3,2}\} \geq c_{23} \\
		\widetilde{\Delta}_{1,2}+\widetilde{\Delta}_{1,3} + \widetilde{\Delta}_{2,1}+\widetilde{\Delta}_{2,3} + \widetilde{\Delta}_{3,1}+\widetilde{\Delta}_{3,2} \geq c_{123}
	\end{cases}
\end{align}
where $c_1,c_2,\cdots,c_{123}$ are the same as those in \eqref{eq:3_way_feasibility}. Now, consider the download cost tuple $\widetilde{{\bm \Delta}}$ in the problem $\widetilde{\mathcal{P}}$, such that 
\begin{align} 
	 &\widetilde{\Delta}_{t,s} = \Delta_{t-2,s}, \forall t\geq 4, s\in \widetilde{\mathcal{E}}(t) \notag \\
	 \mbox{and~~}  &\begin{bmatrix}
		\widetilde{\Delta}_{1,2} \\ \widetilde{\Delta}_{2,1} \\ \widetilde{\Delta}_{3,1}
	\end{bmatrix}
	=
	\begin{bmatrix}
		\widetilde{\Delta}_{1,3} \\ \widetilde{\Delta}_{2,3} \\ \widetilde{\Delta}_{3,2}
	\end{bmatrix}
	=
	\begin{bmatrix}
		(\Delta_2+\Delta_3-\Delta_1)/2 \\ (\Delta_1+\Delta_3-\Delta_2)/2 \\ (\Delta_1+\Delta_2-\Delta_3)/2
	\end{bmatrix}. \label{eq:2_way_3_way}
\end{align}
Then it can be verified that the download cost tuple $\widetilde{{\bm \Delta}}$ is feasible in $\widetilde{\mathcal{P}}$ if ${\bm \Delta}$ is feasible in $\mathcal{P}$, because \eqref{eq:2_way_feasibility} is satisfied if \eqref{eq:3_way_feasibility} is satisfied. Therefore, the feasibility of ${\bm \Delta}$ in the problem $\mathcal{P}$ implies the feasibility of $\widetilde{{\bm \Delta}}$ in the problem $\widetilde{\mathcal{P}}$. Since $\widetilde{\Delta}_{1,2}+\widetilde{\Delta}_{1,3} + \widetilde{\Delta}_{2,1}+\widetilde{\Delta}_{2,3} + \widetilde{\Delta}_{3,1}+\widetilde{\Delta}_{3,2} = \Delta_1+\Delta_2+\Delta_3$ and $\sum_{t\geq 4,s\in \widetilde{\mathcal{E}}(t)} \widetilde{\Delta}_{t,s} = \sum_{t\geq 2,s\in \mathcal{E}(t)}\Delta_{t,s}$ by \eqref{eq:2_way_3_way}, ${\bm \Delta}$ and $\tilde{{\bm \Delta}}$ have the same (normalized) sum download costs.
Since it holds for any feasible download cost tuple ${\bm \Delta}$ in the problem $\mathcal{P}$, it follows that  $C(\mathcal{W},\mathcal{E}') \geq C(\mathcal{W},\mathcal{E})$.

\section{Proof of Corollary \ref{cor:S_partite_necessary}}\label{proof:asymmetry}
When $S$ is even, this corollary can be easily shown with the symmetric data replication maps as defined in Corollary \ref{cor:symmetric}. Consider the $\Sigma$-QMAC with the symmetric data replication maps with $\alpha=S/2$. Then \eqref{eq:beta_star} says that $\beta^* = S \implies C_{S/2}^{(S)}>C_{S/2}^{(S-1)}$.

When $S$ is odd, let us consider the data replication map $\mathcal{W} = \binom{[S-1]}{S-2} \cup \{S\}$. Without loss of generality, say $\mathcal{W}(S) = \{S\}$. We first apply Corollary \ref{cor:unconstrained} to show that $C^{\fullent}(\mathcal{W}) \geq \frac{2S-4}{2S-3}$. Let $\Delta_s$ be the normalized download cost from Server $s$ for $s\in [S]$. 
Writing down the feasible region by Corollary \ref{cor:unconstrained} explicitly for this setting, we have
\begin{align} 
    \mathcal{D}^{\fullent}(\mathcal{W}) = \left\{ (\Delta_1,\cdots,\Delta_S) \in \mathbb{R}_+^S \left|
      \begin{array}{l}
	\Delta_1 + \Delta_2 +\cdots + \Delta_S \geq 1 \\
	2(\Delta_1 + \Delta_2 + \cdots + \Delta_{S-1} + \Delta_{S-2}) \geq 1 \\
	2(\Delta_1 + \Delta_2 + \cdots + \Delta_{S-3} + \Delta_{S-1}) \geq 1 \\
	 ~~\vdots \notag \\
	 2(\Delta_2 + \Delta_3 + \cdots + \Delta_{S-2} +\Delta_{S-1}) \geq 1 \\
	 2\Delta_S \geq 1
\end{array}
\right.\right\}.
\end{align}
It can be verified that
\begin{align}
	(\Delta_1,\cdots,\Delta_{S-1}, \Delta_S) = \left( \underbrace{\frac{1}{2(S-2)},\cdots,\frac{1}{2(S-2)}}_{S-1}, \frac{1}{2}  \right) \in \mathcal{D}^{\fullent}(\mathcal{W}).
\end{align}
We thus obtain that $\Delta_1 +\Delta_2+\cdots+\Delta_S = \frac{S-1}{2(S-2)}+\frac{1}{2} = \frac{2S-3}{2S-4}$. It then follows that $C^{\fullent}(\mathcal{W}) = \left(\min_{(\Delta_1,\cdots, \Delta_S) \in \mathcal{D}^{\fullent}(\mathcal{W})}\sum_{s\in [S]} \Delta_{s} \right)^{-1} \geq \frac{2S-4}{2S-3}$.

Next we show that $C^{(S-1)}(\mathcal{W}) \leq \frac{2S-5}{2S-4}$ by Theorem \ref{thm:main}. 
Note that for the entanglement distribution map $\mathcal{E} = \binom{[S]}{S-1}$, the region specified in Theorem \ref{thm:main} contains $\Gamma = S(S-1)$ variables. This is because there are $T = S$ cliques and the size of each clique is $S-1$.
Also note that for this setting we have $\mathcal{E}(i)\not=\mathcal{E}(j)$ for $i\not=j$. Therefore, whenever it is needed to explicitly identify the servers in a specified clique, we use $\Delta_{\mathcal{E}(t),s}$ to replace $\Delta_{t,s}$, so that it becomes clear which servers are contained in the clique. Also, let $\Delta_{\mathcal{E}(t)} \triangleq \sum_{s\in \mathcal{E}(t)}\Delta_{\mathcal{E}(t),s}$. By Theorem \ref{thm:main}, the feasibility of ${\bm \Delta}$ implies that
\begin{align}
	\Delta_{\{1,2,\cdots,S-1\}} + \Delta_{\mathcal{W}(k) \cup \{S\}} + \sum_{\mathcal{A} \in \binom{[S-1]}{S-2} \backslash \{\mathcal{W}(k)\}} \sum_{s\in (\mathcal{A}\cup \{S\}) \cap \mathcal{W}(k)} 2\Delta_{\mathcal{A}\cup \{S\},s} \geq 1, ~~\forall k\in [S-1],
\end{align}
which can be simplified as
\begin{align} \label{eq:asymmetry_step_1}
	\Delta_{\{1,2,\cdots,S-1\}} + \Delta_{\mathcal{B} \cup \{S\}} + \sum_{\mathcal{A} \in \binom{[S-1]}{S-2} \backslash \{\mathcal{B}\}} \sum_{s\in \mathcal{A} \cap \mathcal{B}} 2\Delta_{\mathcal{A}\cup \{S\},s} \geq 1, ~~ \forall \mathcal{B} \in \binom{[S-1]}{S-2},
\end{align}
where we use $\mathcal{B}$ to substitute $\mathcal{W}(k)$ and  note that $\mathcal{W}(k) \cap \{S\} = \emptyset$ for any $k\in [S-1]$.
For the last data stream, since $\mathcal{W}(S) = \{S\}$, data stream ${\sf W}_S$ is not available to any server in the clique $\{1,2,\cdots, S-1\}$ and thus the feasibility of ${\bm \Delta}$ implies that
\begin{align} \label{eq:asymmetry_step_2}
	\sum_{\mathcal{A}\in \binom{[S-1]}{S-2}} 2\Delta_{\mathcal{A} \cup \{S\},S} \geq 1.
\end{align}
Therefore,
{\small
\begin{align}
	&2S-4 \notag \\
	&= (S-1)+(S-3) \\
	&\leq \sum_{\mathcal{B}\in \binom{[S-1]}{S-2}} \left( \Delta_{\{1,2,\cdots,S-1\}} + \Delta_{\mathcal{B} \cup \{S\}} + \sum_{\mathcal{A} \in \binom{[S-1]}{S-2} \backslash \{\mathcal{B}\}} \sum_{s\in \mathcal{A} \cap \mathcal{B}} 2\Delta_{\mathcal{A}\cup \{S\},s} \right) + \underbrace{(S-3)\sum_{\mathcal{A}\in \binom{[S-1]}{S-2}} 2\Delta_{\mathcal{A} \cup \{S\},S}}_{\Xi_1} \label{eq:asymmetry_step_3}\\
	&=\underbrace{(S-1)\Delta_{\{1,2,\cdots,S-1\}} + \sum_{\mathcal{B}\in \binom{[S-1]}{S-2}} \Delta_{\mathcal{B}\cup \{S\}}}_{\Xi_2} + \sum_{\mathcal{B}\in \binom{[S-1]}{S-2}}\sum_{\mathcal{A} \in \binom{[S-1]}{S-2} \backslash \{\mathcal{B}\}} \sum_{s\in \mathcal{A} \cap \mathcal{B}} 2\Delta_{\mathcal{A}\cup \{S\},s}+\Xi_1  \\
	& = \Xi_1+ \Xi_2 + \sum_{\mathcal{B}\in \binom{[S-1]}{S-2}}\sum_{\mathcal{A}\in \binom{[S-1]}{S-2}} \sum_{s\in \mathcal{A}\cap \mathcal{B}} 2\Delta_{\mathcal{A}\cap \{S\},s} - \sum_{\mathcal{B}\in \binom{[S-2]}{S-1}} \sum_{s\in \mathcal{B}} 2\Delta_{\mathcal{B}\cup \{S\},s} \\
	& = \Xi_1+ \Xi_2 + \sum_{\mathcal{A}\in \binom{[S-1]}{S-2}}\sum_{\mathcal{B}\in \binom{[S-1]}{S-2}} \sum_{s\in \mathcal{A}\cap \mathcal{B}} 2\Delta_{\mathcal{A}\cap \{S\},s}- \sum_{\mathcal{B}\in \binom{[S-2]}{S-1}} \sum_{s\in \mathcal{B}} 2\Delta_{\mathcal{B}\cup \{S\},s} \\
	& = \Xi_1+ \Xi_2 + (S-2)\sum_{\mathcal{A}\in \binom{[S-1]}{S-2}}\sum_{s\in \mathcal{A}} 2\Delta_{\mathcal{A}\cup \{S\},s}- \sum_{\mathcal{B}\in \binom{[S-2]}{S-1}} \sum_{s\in \mathcal{B}} 2\Delta_{\mathcal{B}\cup \{S\},s} \label{eq:asymmetry_step_4} \\
	&=\Xi_1+\Xi_2 + (S-3)\sum_{\mathcal{A}\in \binom{[S-1]}{S-2}} \sum_{s\in \mathcal{A}}2\Delta_{\mathcal{A}\cup \{S\},s}\\
	&=\Xi_2 + (S-3)\sum_{\mathcal{A}\in \binom{[S-1]}{S-2}}\sum_{s\in \mathcal{A} \cup \{S\}}2\Delta_{\mathcal{A}\cup \{S\},s} \\
	&=\Xi_2 + (S-3) \sum_{\mathcal{A}\in \binom{[S-1]}{S-1}}2\Delta_{\mathcal{A}\cup \{S\}} \\
	&= (S-1)\Delta_{\{1,2,\cdots,S-1\}} + \sum_{\mathcal{A}\in \binom{[S-1]}{S-2}} \underbrace{\big(1+ 2(S-3)\big)}_{2S-5}\Delta_{\mathcal{A}\cup \{S\}}\\
	&\leq \Bigg( 2S-5 \Bigg) \left( \Delta_{\{1,2,\cdots, S-1\}} + \sum_{\mathcal{A} \in \binom{[S-1]}{S-2}}\Delta_{\mathcal{A}\cup \{S\}} \right) \label{eq:asymmetry_step_5}\\
	& = \left( 2S-5 \right)\sum_{t\in [T]}\Delta_{\mathcal{E}(t)}\\
	& = \left( 2S-5 \right) \sum_{t\in[T]}\sum_{s\in\mathcal{E}(t)} \Delta_{t,s}
\end{align}
}Step \eqref{eq:asymmetry_step_3} is by \eqref{eq:asymmetry_step_1} and \eqref{eq:asymmetry_step_2}. To see Step \eqref{eq:asymmetry_step_4}, the term $2\Delta_{\mathcal{A}\cup \{S\},s}$ is counted $S-2$ times for any specified $\mathcal{A}\in \binom{[S-1]}{S-2}$ and $s\in \mathcal{A}$, because there is exactly one $\mathcal{B}\in \binom{[S-1]}{S-2}$ such that $s \not\in \mathcal{A}\cap \mathcal{B}$, which is $\mathcal{B} = [S-1]\backslash\{s\}$. Therefore, there are $(S-2)$ $\mathcal{B}\in \binom{[S-1]}{S-2}$ for which $s \in \mathcal{A}\cap \mathcal{B}$. Step \eqref{eq:asymmetry_step_5} is because $(S-1)<2S-5$ and $\Delta_{\{1,2,\cdots, S-1\}}\geq 0$.

Since $C^{(S-1)}(\mathcal{W}) = \left(\min_{{\bm \Delta}\in \mathcal{D}}\sum_{t\in[T],s\in\mathcal{E}(t)} \Delta_{t,s} \right)^{-1}$, we conclude that $C^{(S-1)}(\mathcal{W}) \leq \frac{2S-5}{2S-4}$.

\section{Proof of Lemma \ref{lem:multicast}} \label{proof:multicast}
The converse is obvious, as when there is only one receiver, the capacity cannot exceed $\rk({\bf H}_1)$. Next let us consider the achievability. We want to design a scheme with batch size $L$ and $N$ channel uses such that $L/N = \min_{k\in [K]}\rk_q({\bf H}_k)$. Note that since we use the channel $N$ times, we can consider the input $\widetilde{X} \in \mathbb{F}_{q^N}^{n\times 1}$ and $\widetilde{Y} \in \mathbb{F}_{q^N}^{m_k \times 1}$. 
 Let $L,N$ be such integers that $q^N > K\min_{k\in[K]}\rk({\bf H}_k)$ and $L = N\min_{k\in[K]}\rk({\bf H}_k)$, i.e., $L/N = \min_{k\in[K]}\rk({\bf H}_k)$. Take $L$ symbols from the data stream ${\sf W}$ and regard it as a vector ${\bf W} \in \mathbb{F}_{q^N}^{L/N\times 1}$. For each $k\in [K]$, since $\rk({\bf H}_k) \geq L/N$, there exist matrices $\overline{\bf U}_k\in \mathbb{F}_{q^N}^{L/N \times m_k}$, $\overline{\bf V}_k \in \mathbb{F}_{q^N}^{n\times L/N}$ such that $\overline{\bf U}_k {\bf H}_k^T \overline{\bf V}_k = {\bf I}_{L/N}$. 
Now consider a matrix ${\bf V} \in \mathbb{F}_{q^N}^{n\times L/N}$ whose elements are variables in $\mathbb{F}_{q^N}$ with values yet to be determined. Note that $P_k \triangleq \det(\overline{{\bf U}}_k {\bf M}_k {\bf V})$ is a polynomial of degree $L/N$ in these variables, and it is not a zero polynomial because setting ${\bf V} = \overline{\bf V}_k$ yields the valuation $P_k = \det({\bf I}_{L/N}) = 1$. It follows that $P \triangleq \prod_{k\in[K]} P_k$ is a non-zero polynomial with degree $KL/N$. By Schwartz-Zippel Lemma, the probability of $P$ evaluating to zero is not more than $\frac{KL/N}{q^N} = \frac{K\min_{k\in[K]}\rk({\bf H}_k)}{q^N}<1$. Therefore, there exists a realization of ${\bf V}$ for which the evaluation of $P$ is non-zero $\implies$ $\overline{{\bf U}}_k{\bf H}_k^T {\bf V}$ is invertible for all $k\in [K]$ for this realization of ${\bf V}$. Now let ${\bf U}_k \triangleq (\overline{\bf U}_k{\bf H}_k^T {\bf V})^{-1} \overline{{\bf U}}_k$. We obtain that ${\bf U}_k{\bf H}_k^T{\bf V} = {\bf I}_{L/N}$ for all $k\in [K]$. Now, let the input at the transmitter be $\widetilde{X}= {\bf V} {\bf W}$. Receiver $k\in[K]$ then hears $\widetilde{Y}_k = {\bf H}_k^T \widetilde{X} = {\bf H}_k^T {\bf V} {\bf W}$. The decoding at Receiver $k$ is then ${\bf U}_k \widetilde{Y} _k= {\bf U}_k {\bf H}_k^T {\bf V} {\bf W} = {\bf W}$, which is $L$ symbols (considered in $\mathbb{F}_q$) of the data stream. Thus, the scheme achieves $L/N = \min_k \rk({\bf H}_k)$.

\section{Proof of Lemma \ref{lem:halfMDS}} \label{sec:proof_halfMDS}
The proof is by construction. We make use of the Generalized Reed Solomon (GRS) code. Let $\mathbb{F}_q$ be a field. $n,k\in\mathbb{N}$ such that $k\leq n$. ${\bm \alpha} = (\alpha_1,\cdots,\alpha_n)\in \mathbb{F}_q^{n}$, ${\bm u} = (u_1,\cdots,u_n)\in \mathbb{F}_q^{n}$, such that $\alpha_i\not=\alpha_j$ for $i\not=j$ and $u_i\not=0$ for $i\in [n]$. This requires that $q \geq n$.
Define
{\small
\begin{align}
  \mbox{GRS}_{k,n}^q({\bm \alpha}, {\bm u}) \triangleq
  \begin{bmatrix}
    u_1 & u_2 & u_3 & \cdots & u_n\\
    u_1\alpha_1 & u_2\alpha_2 & u_3\alpha_3 & \cdots & u_n\alpha_n \\
    u_1\alpha_1^2 & u_2\alpha_2^2 & u_3\alpha_3^2 & \cdots & u_n\alpha_n^2 \\
    \vdots & \vdots & \vdots & \vdots & \vdots & \\
    u_1 \alpha_1^{k-1} & u_2 \alpha_2^{k-1} & u_3 \alpha_3^{k-1} & \cdots & u_n \alpha_n^{k-1}
  \end{bmatrix} \in \mathbb{F}_q^{k \times n}
\end{align}
}as the generator matrix of an $[n,k]$ GRS code over $\mathbb{F}_q$. GRS codes have the following properties \cite{macwilliams1977theory}.
\begin{enumerate}
  \item GRS codes are MDS. Any $k$ columns of the matrix $\mbox{GRS}_{k,n}^q({\bm \alpha}, {\bm  u})$ are $\mathbb{F}_q$ linearly independent.
  \item The dual code of a GRS code is also a GRS code. In particular, there exists ${\bm v} = (v_1,v_2,\cdots,v_n)\in\mathbb{F}_q^n$, $v_i\not=0$ for $i\in [n]$ such that
  \begin{align}
    \mbox{GRS}_{k,n}^q({\bm \alpha}, {\bm u}) \cdot \mbox{GRS}_{n-k,n}^q({\bm \alpha}, {\bm  v})^T = {\bf 0}_{k \times (n-k)}.
  \end{align}
\end{enumerate}
Note that $\lceil N/2 \rceil + \lfloor N/2 \rfloor = N$.
Define
{\small 
\begin{align} \label{eq:half_MDS_construction}
  {\bf M} = \begin{bmatrix}
    \mbox{GRS}_{\lceil N/2 \rceil,N}^q({\bm \alpha}, {\bm  u}) & {\bf 0}_{\lceil N/2 \rceil\times N} \\
    {\bf 0}_{\lfloor N/2 \rfloor\times N} & \mbox{GRS}_{\lfloor N/2 \rfloor,N}^q({\bm \alpha}, {\bm  v})
  \end{bmatrix}  ~~\in \mathbb{F}_q^{N\times 2N}.
\end{align}
}We claim that this ${\bf M}$ is half-MDS and it is a valid transfer matrix of an $N$-sum box. Note that the idea of placing the generator matrices of two codes that are dual to each other on the diagonal to construct a SSO matrix follows the CSS construction \cite{Calderbank_Shor_CSS_code, Steane_CSS_code}. Now we have,
{\small
\begin{align}
  &({\bf M} {\bf J}_{2N}) {\bf M}^T \notag \\
  &=\begin{bmatrix}
     {\bf 0}_{\lceil N/2 \rceil\times N} & -\mbox{GRS}_{\lceil N/2 \rceil,N}^q({\bm \alpha}, {\bm  u}) \\
    \mbox{GRS}_{\lfloor N/2 \rfloor,N}^q({\bm \alpha}, {\bm  v}) & {\bf 0}_{\lfloor N/2 \rfloor\times N}
  \end{bmatrix}
  \begin{bmatrix}
    \mbox{GRS}_{\lceil N/2 \rceil,N}^q({\bm \alpha}, {\bm  u}) & {\bf 0}_{\lceil N/2 \rceil\times N} \\
    {\bf 0}_{\lfloor N/2 \rfloor\times N} & \mbox{GRS}_{\lfloor N/2 \rfloor,N}^q({\bm \alpha}, {\bm  v})
  \end{bmatrix}^T \\
  &= \begin{bmatrix}
    {\bf 0}_{\lceil N/2 \rceil \times \lceil N/2 \rceil} & -\mbox{GRS}_{\lceil N/2 \rceil,N}^q({\bm \alpha}, {\bm  u})\cdot \mbox{GRS}_{\lfloor N/2 \rfloor,N}^q({\bm \alpha}, {\bm  v})^T \\
    \mbox{GRS}_{\lfloor N/2 \rfloor,N}^q({\bm \alpha}, {\bm  v})\cdot \mbox{GRS}_{\lceil N/2 \rceil,N}^q({\bm \alpha}, {\bm  u})^T & {\bf 0}_{\lfloor N/2 \rfloor \times \lfloor N/2 \rfloor}
  \end{bmatrix} \\
  &= {\bf 0}_{N\times N} 
\end{align}
}Finally, since GRS codes are MDS, it follows that the ${\bf M}$ constructed in \eqref{eq:half_MDS_construction} is half-MDS. Therefore, if the field size $q \geq N$, there exists an $N$-sum box operating in $\mathbb{F}_q$ that has a half-MDS transfer matrix. \hfill \qed

\bibliographystyle{IEEEtran}
\bibliography{../yy.bib}
\end{document}